\documentclass[12pt,reqno] {amsart}

 \usepackage{amsmath, wasysym}
 \usepackage{bbm, dsfont}
 \usepackage{amssymb}
\usepackage{graphicx}%
\usepackage{hyperref}
\hypersetup{hidelinks}
\usepackage{cleveref}
\usepackage{comment}
 \usepackage{mathrsfs}
 \usepackage{float}
 \usepackage{tikz, tikz-cd}
  \usetikzlibrary{arrows.meta, knots, calc}
 \usepackage{bm}
 \usepackage{mathtools} 
\usepackage{extarrows} 
\usepackage{bbm}

 \usepackage{enumerate}
  \usepackage{mathdots}
\usetikzlibrary{decorations.markings}
\usetikzlibrary{decorations.pathreplacing}

\theoremstyle{plain}
\newtheorem{theorem}{Theorem}[section]

\newtheorem{corollary}[theorem]{Corollary}
\newtheorem{lemma}[theorem]{Lemma}
\newtheorem{proposition}[theorem]{Proposition}
\newtheorem*{ac}{Acknowledgements}

\newtheorem{remark}[theorem]{Remark}

\newtheorem{example}[theorem]{Example}

\newcommand{\bC}{\mathbb{C}}
\newcommand{\bN}{\mathbb{N}}
\newcommand{\bR}{\mathbb{R}}
\newcommand{\cB}{\mathcal{B}}
\newcommand{\cL}{\mathcal{L}}
\newcommand{\cD}{\mathcal{D}}
\newcommand{\cF}{\mathcal{F}}

\newcommand{\cH}{\mathcal{H}}

\newcommand{\bE}{\mathbb{E}}

\newcommand{\Tr}{\mathrm{Tr}}
\newcommand{\Id}{\mathrm{Id}}

\newcommand{\bfM}{\mathbf{M}}

\newcommand{\KMS}{\mathrm{KMS}}

\newcommand{\Ent}{\mathrm{Ent}}
\newcommand{\diag}{\mathrm{diag}}

\allowdisplaybreaks[2]
\renewcommand{\Re}{\operatorname{Re}}
\renewcommand{\Im}{\operatorname{Im}}

\newcommand{\bfa}{\mathbf{a}}
\newcommand{\bfJ}{\mathbf{J}}
\newcommand{\bfz}{\mathbf{z}}
\newcommand{\bfZ}{\mathbf{Z}}
\newcommand{\bfw}{\mathbf{w}}
\newcommand{\bfm}{\mathbf{m}}
\newcommand{\bfC}{\mathbf{C}}
\newcommand{\bfF}{\mathbf{F}}
\newcommand{\bfzeta}{\boldsymbol{\zeta}}
\newcommand{\hc}{h_\rho}
\newcommand{\thetaK}{\theta_{\KMS}}

\newcommand{\norm}[1]{\left\lVert #1\right\rVert}
\newcommand{\np}[2]{\norm{#1}_{#2,\rho}}

\newcommand{\arsinh}{\operatorname{arsinh}}

\title[Entropy Contraction and Hypercontractivity]{Entropy Contraction and Hypercontractivity for Gaussian Quantum Markov Semigroups}
\author{Zhengwei Liu}
\address{Zhengwei Liu, Tsinghua University, Beijing}
\email{liuzhengwei@tsinghua.edu.cn}

\author{Jincheng Wan}
\address{Jincheng Wan, Tsinghua University, Beijing}
\email{wanjc23@mails.tsinghua.edu.cn }

\author{Jinsong Wu}
\address{Jinsong Wu, Beijing Institute of Mathematical Sciences and Applications, Beijing, 101408, China}
\email{wjs@bimsa.cn}

\date{}

\begin{document}

\begin{abstract}
In this paper, we study finite-mode Gaussian quantum Markov semigroups with a faithful invariant Gaussian state. 
We present an algebraic characterization of the complete modified logarithmic Sobolev inequality relative to the fixed-point algebra, and we obtain the corresponding optimal constant under the assumption that the drift evolution $e^{t\bfZ}$ converges as $t\to\infty$.
We establish hypercontractivity for such quantum Markov semigroups with Hurwitz drift, in the sense that every fixed pair $1<p<q<\infty$ is attained at sufficiently large times. 
We find that the usual reverse hypercontractive curves with a positive rate from time zero can fail in general, but every fixed pair \(1/2 < p < q < 1\) is attained at sufficiently large times under the Hurwitz assumption.
We also systematically investigate the \(p\)-logarithmic Sobolev inequality and show that the \(p\)-logarithmic Sobolev constant can be negative for \(0 < p < 1/2\).
\end{abstract}

\maketitle

\section{Introduction}

Gaussian quantum Markov semigroups \cite{KP04,CEGH08,DP18,GP26} are powerful tools in studying the evolution of open bosonic systems, which are quantum counterparts of the classical Ornstein--Uhlenbeck semigroups.
Fundamental properties of Gaussian quantum Markov semigroups, such as irreducibility, spectral gaps, complete modified logarithmic Sobolev inequalities, hypercontractivity, reverse hypercontractivity, and logarithmic Sobolev inequalities, are essential for understanding the evolution of open bosonic systems.
The irreducibility of Gaussian quantum Markov semigroups has been investigated by Fagnola and Girotti \cite{FG26}.
The spectral gaps of Gaussian quantum Markov semigroups have been systematically investigated by Fagnola, Poletti, Sasso, and Umanità, and by Li \cite{FPSU25,Li26}.

The complete modified logarithmic Sobolev inequality was investigated by Brannan, Gao, and Junge \cite{BGJ22,BGJ23}.
In \cite{GaoRou22}, Gao and Rouzé showed that GNS-symmetric quantum Markov semigroups with a faithful invariant state on finite-dimensional von Neumann algebras satisfy the complete modified logarithmic Sobolev inequality relative to the fixed-point algebra.
The authors of \cite{LWW26} pointed out that a KMS-symmetric quantum Markov semigroup might not satisfy the complete modified logarithmic Sobolev inequality.
Recently, Gao and Guo \cite{GaoGuo26a,GaoGuo26b} presented a complete characterization of complete modified logarithmic Sobolev inequalities for finite-dimensional quantum Markov semigroups admitting a faithful asymptotic conditional expectation, with improved constants.
In this paper, we present an algebraic characterization of complete modified logarithmic Sobolev inequalities relative to the fixed-point algebra for finite-mode Gaussian quantum Markov semigroups, assuming that $\lim_{t\to\infty}e^{t\bfZ}$ exists.
Under this assumption, we also determine the optimal complete entropy contraction coefficient at every fixed time.
Moreover, we show that positive relative CMLSI forces the drift evolution to converge.
Inspired by the work of Gao and Guo, we also obtain improved complete modified logarithmic Sobolev constants for Gaussian quantum Markov semigroups.

The general framework for noncommutative hypercontractivity was obtained by Olkiewicz and Zegarlinski \cite{OZ99}, who established the equivalence between hypercontractivity and logarithmic Sobolev inequalities.
In \cite{BDR20}, Beigi, Datta, and Rouzé systematically investigated quantum hypercontractivity for quantum Markov semigroups.
They showed that primitive GNS-symmetric quantum Markov semigroups on finite-dimensional matrix algebras possess hypercontractivity via the quantum Stroock--Varopoulos inequality.
Recently, Gao and Wang \cite{GaoWang26} and Wan and Wu \cite{WanWu26} proved hypercontractivity for primitive KMS-symmetric quantum Markov semigroups on finite-dimensional matrix algebras.
Carbone and Sasso \cite{CS08} proved hypercontractivity for quantum Ornstein--Uhlenbeck semigroups.
Beigi and Rahimi-Keshari \cite{BRK25} showed the logarithmic Sobolev inequality for one-mode quantum Ornstein--Uhlenbeck semigroups.
In \cite{SXZ26}, Sun, Xu, and Zhang proved hypercontractivity for a family of quantum Ornstein--Uhlenbeck semigroups.
In this paper, we show that finite-mode Gaussian quantum Markov semigroups with a faithful invariant centered Gaussian state and Hurwitz drift attain every fixed forward exponent pair $1<p<q<\infty$ at sufficiently large times.
Under the stronger condition $\gamma_{\KMS}>0$, we obtain explicit forward exponent curves with a positive rate from time zero.
This condition holds in the KMS-symmetric case with Hurwitz drift, by Proposition~\ref{prop:hurwitzpos}.

Quantum reverse hypercontractivity~\cite{CKMT15} was systematically studied by Beigi, Datta, and Rouzé \cite{BDR20}.
They showed that primitive GNS-symmetric quantum Markov semigroups on finite-dimensional matrix algebras possess quantum reverse hypercontractivity.
However, the third author \cite{Wu26} showed that a positive uniform reverse hypercontractivity rate on $(0,1)$ need not exist for primitive KMS-symmetric quantum Markov semigroups on finite-dimensional matrix algebras, but reverse hypercontractivity does hold in the restricted range $1/2<p<q<1$.
We show that this phenomenon also appears for Gaussian quantum Markov semigroups.
We also present finite-mode examples for which the usual reverse exponent curves cannot have a strictly negative initial derivative, and a finite-mode example with negative $p$-logarithmic Sobolev constants for $0<p<1/2$.
Under the Hurwitz assumption, every fixed pair $1/2<p<q<1$ is nevertheless attained at a sufficiently large time.
If $\gamma_{\KMS}>0$, the reverse exponent curves in this range also have a positive rate from time zero.


The paper is organized as follows.
In Section 2, we recall the basic notations for Gaussian states and Gaussian quantum Markov semigroups, and revisit GNS and KMS symmetry.
In Section 3, we characterize the relative complete modified logarithmic Sobolev inequalities under the fixed-sector and relaxing-sector decomposition.
In Section 4, we show that finite-mode Gaussian quantum Markov semigroups with a faithful invariant centered Gaussian state and Hurwitz drift attain every fixed forward exponent pair at sufficiently large times.
In Section 5, we show reverse hypercontractivity for these semigroups on a restricted range.
In Section 6, we present examples illustrating the hypercontractivity and reverse hypercontractivity results and the role of their assumptions.
In Section 7, we investigate the $p$-logarithmic Sobolev inequalities and construct a finite-mode counterexample for $0<p<1/2$.

\begin{ac}
Zhengwei Liu was supported by Beijing Natural Science Foundation Key Program (Grant No. Z220002) and Beijing Natural Science Foundation (Grant No. Z221100002722017).
Jinsong Wu was supported by grants from Beijing Institute of Mathematical Sciences and Applications.
Jinsong Wu was supported by NSFC (Grant no. 12371124). 
\end{ac}

\section{Preliminaries}

In this section, we shall recall the notations for Gaussian states, Gaussian quantum Markov semigroups etc.

\subsection{Gaussian States}
Suppose that $\displaystyle \cF_s(\bC^d)= \bigoplus_{n=0}^{\infty}(\mathbb C^d)^{\otimes_s n}$ is the symmetric Fock space with canonical orthonormal basis 
\begin{align*}
    e_{j_1, \ldots, j_d}, \quad j_1, \ldots, j_d\in \bN_0.
\end{align*}
Here $\bN=\{1,2,\ldots\}$ and $\bN_0=\{0,1,2,\ldots\}$.
The exponential vector $e_{z}$ is defined as
\begin{align*}
     e_z=\bigoplus_{n\geq0}\frac{z^{\otimes n}}{\sqrt{n!}},\qquad z\in\mathbb C^d.
\end{align*}
The annihilation $\bfa_j$  and creation operators $\bfa_{j}^\dag$ are given by
\begin{align*}
    \bfa_k e_{j_1, \ldots, j_d}=& \sqrt{j_k} e_{j_1, \ldots, j_k-1, \ldots, j_d}, \\
    \bfa_k^\dag e_{j_1, \ldots, j_d}=& \sqrt{j_k+1} e_{j_1, \ldots, j_k+1, \ldots, j_d}.
\end{align*}
We have that $[\bfa_k, \bfa_\ell^\dag]=\delta_{k\ell}\mathbbm{1}$.
Let $\cD$ be the finite linear span of the basis vectors in $\cF_s(\bC^d)$.

For each $z=(z_1, \ldots, z_d)^{\mathsf{T}}\in \bC^{d}$, the Weyl operator $W(z)$ is 
\begin{align}\label{eq:weyl}
  W(z) =\exp\left( \sum_{j=1}^d z_j\bfa_j^\dag - \overline{z_j} \bfa_j\right).  
\end{align}
We have that 
\begin{align}\label{eq:Weyl multiplication convention}
 W(z)W(w)  = e^{-i \Im \langle z, w\rangle} W(z+w). 
\end{align}
Denote by $\bfJ= \begin{pmatrix} 
    0 & I_d\\ -I_d & 0
    \end{pmatrix}$.
Suppose $z=x+iy=(x_1, \ldots, x_d)^{\mathsf{T}} +i (y_1, \ldots, y_d)^{\mathsf{T}}$ and let $\bfz=\begin{pmatrix}
    x\\ y
\end{pmatrix}$.
Then we have that $\Im \langle z, w\rangle =\bfz^{\mathsf{T}} \bfJ \bfw$.

For $j=1, \ldots, d$, let 
\begin{align}
 Q_j=\frac{\bfa_j+\bfa_j^\dagger}{\sqrt2},\qquad
 P_j=\frac{\bfa_j-\bfa_j^\dagger}{i\sqrt2}.
\end{align}
Then the field operator is
\begin{equation}\label{eq:fields}
 R(z)=\sqrt2\sum_{j=1}^d\bigl((\Im z_j)Q_j-(\Re z_j)P_j\bigr).
\end{equation}
With this convention, $W(z)=e^{iR(z)}$.
We then see that $[Q_j,P_k]=i\delta_{jk}$ and
\begin{align}
 \left[R(z),R(w)\right]=2i\,\bfz^{\mathsf{T}}\bfJ\bfw.
\end{align}

A normal Gaussian state $\rho$ on $\cB(\cF_s(\bC^d))$ is defined by
\begin{equation}\label{eq:char}
 \rho\left(W(\bfz)\right)
 =\exp\left(-\frac12\bfz^{\mathsf{T}}\Sigma\bfz+i \bfm^{\mathsf{T}}\bfz\right),
\end{equation}
where $\Sigma$ is real symmetric in $M_{2d}(\bR)$ and $\Sigma+i\bfJ\geq 0$. The matrix $\Sigma$ is called the covariance matrix of $\rho$ and the vector $\bfm\in \bR^{2d}$ is the mean vector of $\rho$. 
A Gaussian state $\rho$ is centered if $\bfm=0$.
Note that the state $\rho$ is faithful exactly when
\begin{equation}\label{eq:faithful}
 \Sigma+i\bfJ>0.
\end{equation}
Equivalently, all symplectic eigenvalues of $\Sigma$ exceed one by Williamson's theorem. 
By Equation \eqref{eq:weyl} for the Weyl operator, the vacuum covariance is $\Sigma=I_{2d}$.

By taking a real symplectic matrix $\mathbf M$ such that
\begin{equation}\label{eq:Williamson}
 \Sigma=\mathbf M^{\mathsf{T}} D_s\mathbf M,\qquad
 D_s=\diag(s_1,\ldots,s_d,s_1,\ldots,s_d),\quad s_j>1,
\end{equation}
and defining
\begin{equation}\label{eq:beta}
 \beta_j=\log\frac{s_j+1}{s_j-1},\qquad s_j=\coth(\beta_j/2),
\end{equation}
we have that the centered state up to a Gaussian unitary is
\begin{equation}\label{eq:Williamson decomposition}
 \bigotimes_{j=1}^d\rho_{\beta_j},\qquad
 \rho_\beta=(1-e^{-\beta})\exp\left(-\beta \bfa^\dag\bfa\right).
\end{equation}
Here $\rho_\beta$ is the one-mode thermal density and $\bfa$ is the one-mode annihilation operator.
All weighted inequalities below are invariant under simultaneous unitary conjugation of the state and the dynamics. Displacements therefore allow us to remove the mean of an invariant Gaussian state.



\subsection{Gaussian Quantum Markov Semigroups}

Suppose that the Gaussian state $\rho$ evolves by the means and covariances:
\begin{equation}\label{eq:covariance}
 \dot \bfm_t=\bfZ^{\mathsf{T}} \bfm_t+\bfzeta,\qquad
 \dot\Sigma_t=\bfZ^{\mathsf{T}}\Sigma_t+\Sigma_t\bfZ+\bfC,
\end{equation}
where $\bfzeta\in \bR^{2d}$, $\bfm_0=\bfm$ and $\Sigma_0=\Sigma$.
Let $\displaystyle \bfC_t=\int_0^t e^{s\bf Z^{\mathsf{T}}}{\bf C} e^{s\bf Z}\,ds$, and $\displaystyle \bfzeta_t=\int_0^t e^{s\bfZ^{\mathsf{T}}}\bfzeta\,ds$.
Then the Gaussian quantum Markov semigroup $\{\Phi_t\}_{t\geq 0}$ on $\cB(\cF_s(\bC^d))$ is given by
\begin{equation}\label{eq:Tweyl}
 \Phi_t\left(W(\bfz)\right)= e^{i\langle \bfzeta_t, \bfz\rangle}
 \exp\left(-\frac12\bfz^{\mathsf{T}} \bfC_t\mathbf z\right)W\left(e^{t\bfZ}\bfz \right).
\end{equation}
Note that the complete positivity of the linear maps $\Phi_t$ is equivalent to the quantum diffusion matrix
\begin{equation}\label{eq:CZ}
 \bfC_{\bfZ}:=\bfC-i(\bfZ^{\mathsf{T}}\bfJ+\bfJ\bfZ)\geq0.
\end{equation}
For more details on Gaussian quantum Markov semigroups, we refer to \cite{FG26}.

The generalized Gorini-Kossakowski-Sudarshan-Lindblad (GKLS) representation operators of a Gaussian quantum Markov semigroup are
\begin{align}
 H&=\sum_{j,k=1}^d\left(
O_{jk}\bfa_j^\dagger \bfa_k+\frac12K _{jk}\bfa_j^\dagger \bfa_k^\dagger
 +\frac12\overline{K_{jk}}\bfa_j \bfa_k\right)
 +\frac12\sum_{j=1}^d\left(\zeta_j \bfa_j^\dagger+\overline\zeta_j \bfa_j\right),\label{eq:H}\\
 L_\ell&=\sum_{j=1}^d\left(\overline{v_{\ell j}}\bfa_j+u_{\ell j}\bfa_j^\dagger\right),
 \qquad \ell=1,\ldots,r,\label{eq:Ll}
\end{align}
where $U=(u_{\ell j})_{\ell,j},\ V=(v_{\ell j})_{\ell,j}\in M_{r\times d}(\bC)$, the matrix $O$ is Hermitian and $K$ is symmetric.
The vector $\zeta=(\zeta_1,\ldots,\zeta_d)^{\mathsf T}\in\bC^d$ is related to the real vector in \eqref{eq:covariance} by
\[
 \bfzeta=\begin{pmatrix}\Re\zeta\\ \Im\zeta\end{pmatrix}.
\]
The real-linear operators $Z, C$ on $\mathbb C^d$ corresponding to the bold matrices $\bfZ, \bfC$ are
\begin{align}
 Z z&=\left(\frac12(U^{\mathsf{T}}\overline U-V^{\mathsf{T}}\overline V)+iO\right)z
 +\left(\frac12(U^{\mathsf{T}} V-V^{\mathsf{T}} U)+iK \right)\overline z,\label{eq:ZUV}\\
 Cz&=(U^{\mathsf{T}}\overline U+V^{\mathsf{T}}\overline V)z+(U^{\mathsf{T}}V+V^{\mathsf{T}} U)\overline z.\label{eq:CUV}
\end{align}
Suppose $A z=A_1z+A_2\overline z$.
Then its real matrix $\mathbf A$ is
\begin{align*}
 \mathbf A=\begin{pmatrix}
 \Re A_1+\Re A_2&\Im A_2-\Im A_1\\
 \Im A_1+\Im A_2&\Re A_1-\Re A_2
 \end{pmatrix}.
\end{align*}
With the convention $\Phi_t=e^{-t\cL}$, the GKSL generator is
\begin{equation}\label{eq:GKLS}
 -\cL(X)=i[H,X]+\sum_\ell\left(L_\ell^* X L_\ell-\frac12\{L_\ell^*L_\ell,X\}\right).
\end{equation}
Note that $H$ and $L_\ell$ are unbounded, \eqref{eq:GKLS} is understood as the quadratic form on $\cD\times\cD$ specified in \cite[Equation (22)]{FG26}, or on an appropriate generator core.



Throughout this paper, we shall assume that the number of modes is finite, the mean has been centered, and $\{\Phi_t\}_{t\geq 0}$ has a faithful invariant Gaussian density $\rho$. 
Thus
\begin{equation}\label{eq:lyap}
 \bfZ^{\mathsf{T}}\Sigma+\Sigma\bfZ+\bfC=0,\qquad \Sigma+i\bfJ>0,
\end{equation}
and set
\begin{equation}\label{eq:F}
 \bfF=\Sigma+i\bfJ.
\end{equation}
Combining \eqref{eq:CZ} and \eqref{eq:lyap} gives
\begin{equation}\label{eq:Fidentity}
 \bfC_{\bfZ}=-\bfZ^*\bfF-\bfF\bfZ,\qquad
 \bfF-e^{t\bfZ^{\mathsf{T}}}\bfF e^{t\bfZ}=\int_0^t e^{s\bfZ^{\mathsf{T}}}\bfC_{\bfZ} e^{s\bfZ}\,ds.
\end{equation}
We call $\bfZ$ Hurwitz if every eigenvalue has strictly negative real part. 
In the following, we present some equivalent characterizations of Hurwitz matrices.

Following \cite{FG26}, the condition $(\bfC_{\bfZ}\text{-cond})$ means that there is no nonzero $\bfZ$-invariant complex subspace contained in $\ker\bfC_{\bfZ}$.
In the finite-dimensional case, it is equivalent to
\begin{equation}\label{eq:observability}
 \bigcap_{j=0}^{2d-1}\ker(\bfC_{\bfZ}^{1/2}\bfZ^j)=\{0\},
\end{equation}
or to the full column rank of the vertical matrix with blocks $\bfC_{\bfZ}^{1/2}\bfZ^j$. 


\begin{proposition}\label{prop:control}
The following are equivalent:
\begin{enumerate}[(i)]
\item $\bfZ$ is Hurwitz;
\item $(\bfC_{\bfZ}\text{-cond})$ holds;
\item $\displaystyle \int_0^t e^{s\bfZ^{\mathsf{T}}}\bfC_{\bfZ} e^{s\bfZ}\,ds>0$ for every $t>0$;
\item $\displaystyle \norm{\bfF^{1/2}e^{t\bfZ}\bfF^{-1/2}}<1$ for some $t>0$.
\item The Gaussian quantum Markov semigroup $\{\Phi_t\}_{t\geq 0}$ is irreducible.
\end{enumerate}
\end{proposition}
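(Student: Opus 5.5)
The plan is to organize everything around the identity \eqref{eq:Fidentity}. Since $e^{t\bfZ}$ is real, $e^{t\bfZ^{\mathsf{T}}}=(e^{t\bfZ})^*$, so
\[
G_t:=\int_0^t e^{s\bfZ^{\mathsf{T}}}\bfC_{\bfZ}e^{s\bfZ}\,ds=\bfF-(e^{t\bfZ})^*\bfF e^{t\bfZ}\geq 0 ,
\]
with $\bfF>0$ by \eqref{eq:faithful}. I will prove the cycle (ii)$\Rightarrow$(iii)$\Rightarrow$(iv)$\Rightarrow$(i)$\Rightarrow$(iii)$\Rightarrow$(ii), and treat (v) separately.

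For (ii)$\Leftrightarrow$(iii), I use the standard Kalman argument. For $v\in\bC^{2d}$,
\[
\langle v,G_tv\rangle=\int_0^t\norm{\bfC_{\bfZ}^{1/2}e^{s\bfZ}v}^2ds .
\]
Hence $G_tv=0$ if and only if $\bfC_{\bfZ}^{1/2}e^{s\bfZ}v=0$ for all $s\in[0,t]$. By real-analyticity in $s$, this holds if and only if $\bfC_{\bfZ}^{1/2}\bfZ^jv=0$ for all $j\geq0$: differentiate at $s=0$ for one direction and use the power series for the other. By Cayley--Hamilton it suffices to take $0\le j\le 2d-1$. So $\ker G_t$ equals the intersection in \eqref{eq:observability}, independently of $t>0$, and (ii)$\Leftrightarrow$(iii) follows.

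For (iii)$\Rightarrow$(iv), $G_t>0$ gives $(e^{t\bfZ})^*\bfF e^{t\bfZ}<\bfF$. Conjugating by $\bfF^{-1/2}$ yields $A^*A<I$ with $A=\bfF^{1/2}e^{t\bfZ}\bfF^{-1/2}$. In finite dimensions this strict inequality gives $\norm{A}<1$.

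For (iv)$\Rightarrow$(i), $A$ is similar to $e^{t\bfZ}$, so the spectral radius of $e^{t\bfZ}$ is at most $\norm{A}<1$. Since the eigenvalues of $e^{t\bfZ}$ are $e^{t\lambda}$ for $\lambda\in\sigma(\bfZ)$, every eigenvalue $\lambda$ of $\bfZ$ satisfies $\Re\lambda<0$.

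For (i)$\Rightarrow$(iii), if $\bfZ$ is Hurwitz then $e^{t\bfZ}\to0$. Letting $t\to\infty$ in \eqref{eq:Fidentity} gives $\bfF=\int_0^\infty e^{s\bfZ^{\mathsf{T}}}\bfC_{\bfZ}e^{s\bfZ}ds$. Suppose $G_tv=0$ for some $t>0$. By the analyticity argument above, $\bfC_{\bfZ}^{1/2}e^{s\bfZ}v=0$ for all $s\ge0$. Then $\langle v,\bfF v\rangle=0$, which forces $v=0$ because $\bfF>0$. This closes the cycle among (i)--(iv).

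For (v), I would invoke the irreducibility characterization of Fagnola and Girotti \cite{FG26}, which states that the Gaussian quantum Markov semigroup is irreducible exactly when $(\bfC_{\bfZ}\text{-cond})$ holds. I would need to check that their standing hypotheses (a faithful invariant state, and the form of the generator via \eqref{eq:H}--\eqref{eq:Ll}) match ours, so that (ii)$\Leftrightarrow$(v) can be taken directly from there.

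This last step is the main obstacle, since it is not a finite-dimensional linear-algebra fact. If the cited statement carries extra assumptions, I would first try to pass through the invariant-subspace form of $(\bfC_{\bfZ}\text{-cond})$. A nonzero $\bfZ$-invariant subspace inside $\ker\bfC_{\bfZ}$ should produce a nontrivial conserved quantity: some linear combination of fields whose Weyl operators evolve without damping. That quantity would give a nontrivial invariant projection or fixed-point element, contradicting irreducibility. Conversely, under (i) the invariant state is the unique normal invariant state and attracts all states, which rules out a nontrivial subharmonic projection.
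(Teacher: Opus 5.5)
Your proof is correct and follows essentially the same route as the paper: (iii)$\Rightarrow$(iv) from \eqref{eq:Fidentity}, (iv)$\Rightarrow$(i) via the spectral radius of $e^{t\bfZ}$, the Hurwitz direction via $\bfF=\int_0^\infty e^{s\bfZ^{\mathsf{T}}}\bfC_{\bfZ}e^{s\bfZ}\,ds>0$, and (v) by citing Fagnola--Girotti. The only difference is that you prove (ii)$\Leftrightarrow$(iii) yourself, using the Kalman/analyticity argument that makes $\ker G_t$ independent of $t>0$, whereas the paper takes that equivalence, together with (v), from \cite[Proposition~2.2]{FagLi26} and \cite[Theorem~12]{FG26}.
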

\begin{proof}
By \cite[Proposition 2.2]{FagLi26} and \cite[Theorem 12]{FG26}, we see that $(ii)$, $(iii)$ and $(v)$ are equivalent.
Equation \eqref{eq:Fidentity} proves (iii)$\Rightarrow$(iv). 
Strict contraction of a matrix exponential implies all its eigenvalues have modulus below one, proving (iv)$\Rightarrow$(i). 
Finally, if (i) holds, integration to infinity in \eqref{eq:Fidentity} yields
\begin{align*}
 \bfF=\int_0^\infty e^{s\bfZ^{\mathsf{T}}}\bfC_{\bfZ} e^{s\bfZ}\,ds>0.
\end{align*}
Any vector in the intersection \eqref{eq:observability} would have zero quadratic form against $\bfF$, so (i)$\Rightarrow$(ii).
This completes the proof of the proposition.
\end{proof}
The equivalence with Hurwitz stability in Proposition \ref{prop:control} uses the faithful invariant-state assumption; 
it is not an assertion about arbitrary Gaussian quantum Markov semigroups.

\subsection{$L_p$-Spaces}

For $1\leq p<\infty$, let $L_p(\rho)$ be the completion of $\cB(\cF_s(\bC^d))$ with respect to $\|\cdot\|_{p, \rho}$, where
\begin{align*}
    \|X\|_{p, \rho}=\|X_p\|_p, \quad X_p=\rho^{\frac{1}{2p}}X\rho^{\frac{1}{2p}}.
\end{align*}
For $p=\infty$ use the operator norm. 
The maps $\Phi_t$ extend consistently as contractions on every $L_p(\rho)$, $1\leq p\leq\infty$. 
For each $1\leq p<\infty$, the operators with finite matrix support
in an eigenbasis of $\rho$ form a common dense subspace of $L_p(\rho)$,
contained in $\rho^{1/2}\cB(\cF_s(\bC^d))\rho^{1/2}$.
The Hilbert space $L_2(\rho)$ has the inner product
\begin{equation}\label{eq:KMSip}
 \langle X,Y\rangle_\rho=\Tr(\rho^{1/2}X^*\rho^{1/2}Y),
\end{equation}
which is not the GNS inner product $\Tr(\rho X^*Y)$. 

Let
\begin{equation}\label{eq:SK}
    \Sigma_{\KMS}=\bfM^{\mathsf{T}} \left(\diag \left( \sqrt{s_1^2-1}, \ldots, \sqrt{s_d^2-1}, \sqrt{s_1^2-1}, \ldots, \sqrt{s_d^2-1}\right)\right) \bfM.
\end{equation}
We have that
\begin{equation}\label{eq:SK-intrinsic}
 \bfz^{\mathsf{T}}\Sigma_{\KMS}\bfw
 =\Tr\bigl(\rho^{1/2}R(\bfz)\rho^{1/2}R(\bfw)\bigr).
\end{equation}
For real $\mathbf z,\mathbf w$,
\begin{equation}\label{eq:kernel}
 \langle W(\mathbf z),W(\mathbf w)\rangle_\rho
 =\exp\left(-\frac12\mathbf z^{\mathsf{T}}\Sigma\mathbf z
             -\frac12\mathbf w^{\mathsf{T}}\Sigma\mathbf w
             +\mathbf z^{\mathsf{T}}\Sigma_{\KMS}\mathbf w\right).
\end{equation}
By taking $\displaystyle \widetilde{W}(\bfz)= e^{\mathbf z^{\mathsf{T}}\Sigma\mathbf z/2}W(\mathbf z)$, we have that 
\begin{equation}\label{eq:expkernel}
 \left\langle \widetilde{W}(\mathbf z),\widetilde{W}(\mathbf w)\right\rangle_\rho
 =e^{\mathbf z^{\mathsf{T}}\Sigma_{\KMS}\mathbf w}.
\end{equation}
These identities follow from the Weyl--KMS pairing in
\cite[Appendix~C, Proposition~33]{FPSU25}, with the numbering of the arXiv v1 version.
The calculation is made in thermal coordinates and transferred by the Williamson transformation.
Replacing $z$ by $-z$ in that pairing gives \eqref{eq:kernel}; differentiation at the origin and rescaling give \eqref{eq:SK-intrinsic} and \eqref{eq:expkernel}, respectively.

A normal $\rho$-preserving unital completely positive map $\Phi$ has a normal unital completely positive KMS adjoint $\Phi^\sharp$, characterized by
\begin{equation}\label{eq:adjoint}
 \rho^{1/2}\Phi^\sharp(Y)\rho^{1/2}
 =\Phi^*(\rho^{1/2}Y\rho^{1/2}).
\end{equation}

We say that $\{\Phi_t\}_{t\geq 0}$ is KMS-symmetric if $\Phi_t^\sharp=\Phi_t$.
For $1<p,q<\infty$ and their conjugate exponents $p',q'$, weighted duality gives
\begin{equation}\label{eq:forward-dual}
 \norm{\Phi}_{p\to q,\rho}
 =\norm{\Phi^\sharp}_{q'\to p',\rho}.
\end{equation}

For $0<p<1$, the $p$-functional $\|\cdot\|_{p, \rho}$ is only defined on positive operators, which is not a norm. 
For $a>0$ and a bounded uniformly positive $Y$, the negative-exponent notation is defined unambiguously by
\begin{equation}\label{eq:negative}
 \| Y\|_{-a, \rho}:=\|Y^{-1}\|_{a, \rho}^{-1}.
\end{equation}
No bounded inverse of $\rho$ is assumed anywhere.

Let
\begin{align*}
\bfZ_{\KMS}=\Sigma_{\KMS}^{1/2}\bfZ\Sigma_{\KMS}^{-1/2}.
\end{align*}
For a contraction $T$ on $\bC^{2d}$, write
\[
 \cF_c(T)=\bigoplus_{n=0}^\infty T^{\otimes_s n}.
\]
It acts on exponential vectors by $\cF_c(T)e_{\bfz}=e_{T\bfz}$.

\begin{proposition}\label{prop:second}
There is a unitary $\mathcal U:L_2(\rho)\to\cF_s(\mathbb C^{2d})$ with
\begin{equation}\label{eq:U}
 \mathcal U  \widetilde{W}(\mathbf z)=e_{\Sigma_{\KMS}^{1/2}\mathbf z},
\end{equation}
and
\begin{equation}\label{eq:second}
 \mathcal U \Phi_t^{(2)}\mathcal U^{-1}=\cF_c(e^{t\bfZ_{\KMS}}),
\end{equation}
where $\Phi_t^{(2)}$ is viewed as bounded operator on $L^2(\rho)$ extending $\Phi_t$.
\end{proposition}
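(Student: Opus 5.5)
The plan is to construct $\mathcal U$ from the kernel identity \eqref{eq:expkernel}, using the standard fact that the exponential vectors $\{e_{\mathbf u}:\mathbf u\in\bC^{2d}\}$ of $\cF_s(\bC^{2d})$ are total and satisfy $\langle e_{\mathbf u},e_{\mathbf v}\rangle=e^{\langle \mathbf u,\mathbf v\rangle}$.

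\textbf{Step 1: isometry.} For real $\mathbf z,\mathbf w$ we have
\[
\langle e_{\Sigma_{\KMS}^{1/2}\mathbf z},e_{\Sigma_{\KMS}^{1/2}\mathbf w}\rangle=e^{\mathbf z^{\mathsf T}\Sigma_{\KMS}\mathbf w},
\]
which agrees with \eqref{eq:expkernel}. Here $\Sigma_{\KMS}>0$ because $s_j>1$ and $\bfM$ is invertible. Hence the assignment \eqref{eq:U}, extended linearly, preserves inner products on finite linear combinations. It is therefore well defined, since a combination with zero norm in $L_2(\rho)$ has zero image, and it extends to an isometry from $\overline{\mathrm{span}}\{\widetilde W(\mathbf z)\}$ into $\cF_s(\bC^{2d})$.

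\textbf{Step 2: surjectivity and density.} On the Fock side, the image contains $e_{\mathbf u}$ for every real $\mathbf u\in\bR^{2d}$, since $\Sigma_{\KMS}^{1/2}$ is a real invertible matrix. The map $\mathbf u\mapsto\langle \xi,e_{\mathbf u}\rangle$ is entire in $\mathbf u\in\bC^{2d}$. If $\xi$ is orthogonal to the image, this entire function vanishes on $\bR^{2d}$, hence vanishes identically. So $\xi$ is orthogonal to all exponential vectors, and $\xi=0$.

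On the $L_2(\rho)$ side, I must show that the Weyl operators are total in $L_2(\rho)$. I expect this to be the main obstacle. I would argue as follows. Suppose $Y\in L_2(\rho)$ is orthogonal to every $W(\mathbf z)$. Then $\Tr(\rho^{1/2}Y^*\rho^{1/2}W(\mathbf z))=0$ for all $\mathbf z$. The operator $\rho^{1/2}Y^*\rho^{1/2}$ is trace class, by H\"older, since $\rho^{1/4}Y\rho^{1/4}$ is Hilbert--Schmidt. A trace-class operator is determined by its Weyl transform (quantum Fourier/Plancherel uniqueness), so $\rho^{1/2}Y^*\rho^{1/2}=0$. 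Since $\rho$ is faithful, $Y=0$ in $L_2(\rho)$. This gives unitarity.

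\textbf{Step 3: intertwining.} Set $\bfz_t=e^{t\bfZ}\bfz$. By \eqref{eq:Tweyl} with $\bfzeta=0$,
\[
\Phi_t(\widetilde W(\bfz))=\exp\bigl(\tfrac12\bfz^{\mathsf T}\Sigma\bfz-\tfrac12\bfz^{\mathsf T}\bfC_t\bfz-\tfrac12\bfz_t^{\mathsf T}\Sigma\bfz_t\bigr)\,\widetilde W(\bfz_t).
\]
Integrating the real part of the Lyapunov equation \eqref{eq:lyap} gives $\Sigma-e^{t\bfZ^{\mathsf T}}\Sigma e^{t\bfZ}=\bfC_t$, so the exponent vanishes and $\Phi_t\widetilde W(\bfz)=\widetilde W(e^{t\bfZ}\bfz)$. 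Applying $\mathcal U$ yields $e_{\Sigma_{\KMS}^{1/2}e^{t\bfZ}\bfz}=e_{e^{t\bfZ_{\KMS}}\Sigma_{\KMS}^{1/2}\bfz}$. This equals $\cF_c(e^{t\bfZ_{\KMS}})e_{\Sigma_{\KMS}^{1/2}\bfz}$, using $\cF_c(T)e_{\mathbf u}=e_{T\mathbf u}$ for complex-linear $T$ extended to $\bC^{2d}$.

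To pass from this dense set to \eqref{eq:second}, I note that $\Phi_t^{(2)}$ is bounded because $\Phi_t$ is a contraction on $L_2(\rho)$. The operator $e^{t\bfZ_{\KMS}}$ is a contraction: \eqref{eq:Fidentity} and the $\Sigma_{\KMS}$-version of the isometry computation show $\|e^{t\bfZ_{\KMS}}\|\le 1$, equivalently since $\cF_c(T)$ of a contraction is bounded. Hence $\cF_c(e^{t\bfZ_{\KMS}})$ is bounded. Two bounded operators agreeing on a total set agree everywhere, which gives \eqref{eq:second}.
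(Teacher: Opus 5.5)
Your proof is correct and follows the same route as the paper: the isometry from \eqref{eq:expkernel}, surjectivity from totality of real-parameter exponential vectors, the intertwining $\Phi_t\widetilde W(\mathbf z)=\widetilde W(e^{t\bfZ}\mathbf z)$ from $\bfC_t=\Sigma-e^{t\bfZ^{\mathsf T}}\Sigma e^{t\bfZ}$, and extension from a total set. You also prove the density of the Weyl span in $L_2(\rho)$, via uniqueness of the Weyl transform of trace-class operators, which the paper only asserts. The one muddled step is the contractivity of $e^{t\bfZ_{\KMS}}$: \eqref{eq:Fidentity} directly gives $\bfF$-contractivity, not $\Sigma_{\KMS}$-contractivity, so argue as the paper does by restricting the contraction $\mathcal U\Phi_t^{(2)}\mathcal U^{-1}$, which maps $e_{\varepsilon\mathbf u}\mapsto e_{\varepsilon e^{t\bfZ_{\KMS}}\mathbf u}$, to the first chaos, i.e.\ differentiating at $\varepsilon=0$ to get $\|e^{t\bfZ_{\KMS}}\mathbf u\|\le\|\mathbf u\|$.
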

\begin{proof}
By Equation \eqref{eq:expkernel}, we see that $\mathcal U$ is an isometry on the span of normalized Weyl vectors. 
That span is dense in $L_2(\rho)$. 
On the auxiliary Fock space, exponential vectors with parameters in the real subspace $\mathbb R^{2d}$ are total:
differentiating an entire exponential-vector matrix coefficient on this real subspace determines every symmetric tensor coefficient. Hence the isometry is onto.

Stationarity of the semigroup gives $\displaystyle \bfC_t=\Sigma-e^{t\bfZ^\mathsf{T}}\Sigma e^{t\bfZ}$. Consequently
\begin{align}\label{eq: exchange relation for Gaussian QMS}
 \Phi_t \left(\widetilde{W}(\mathbf z)\right)=\widetilde{W}(e^{t\bfZ}\mathbf z).
\end{align}
Applying \eqref{eq:U} proves \eqref{eq:second} on a total set.
Since $\Phi_t$ is contractive on $L_2(\rho)$, restriction to the first chaos shows $\|e^{t\bfZ_{\KMS}}\|\leq1$.
\end{proof}

\begin{remark}
 Taking Hilbert-space adjoints, we have that 
    \begin{equation}\label{eq:secondadj}
 \mathcal U(\Phi_t^\sharp)^{(2)}\mathcal U^{-1}
 =\cF_c(e^{t\bfZ_{\KMS}^{\mathsf{T}}}).
\end{equation}
\end{remark}

Let
\begin{equation}\label{eq:CK}
 \bfC_{\KMS}=-\bfZ^{\mathsf{T}}\Sigma_{\KMS}-\Sigma_{\KMS}\bfZ.
\end{equation}
\begin{proposition}\label{prop:CK}
We have that $\bfC_{\KMS}\geq 0$.
Moreover, if
\begin{equation}\label{eq:gammaK}
\gamma_{\KMS}
=\lambda_{\min}\left(\Sigma_{\KMS}^{-1/2}\bfC_{\KMS}\Sigma_{\KMS}^{-1/2}\right)>0,
\end{equation}
then
\begin{equation}\label{eq:oneparticle-bound}
 \|e^{t\bfZ_{\KMS}}\|, \|e^{t\bfZ_{\KMS}^*}\|\leq e^{-\gamma_{\KMS} t/2}.
\end{equation}
\end{proposition}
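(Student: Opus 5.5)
The plan is to reduce both claims to the contractivity of the one-particle operator $e^{t\bfZ_{\KMS}}$ from Proposition~\ref{prop:second}. In that proof we saw $\|e^{t\bfZ_{\KMS}}\|\leq 1$ for every $t\geq0$.

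Write $T_t=e^{t\bfZ_{\KMS}}$. Then
\[
T_t^*T_t=\Sigma_{\KMS}^{-1/2}e^{t\bfZ^{\mathsf{T}}}\Sigma_{\KMS}e^{t\bfZ}\Sigma_{\KMS}^{-1/2}.
\]
Differentiating at $t=0$ gives
\[
\frac{d}{dt}\Big|_{t=0}T_t^*T_t=\Sigma_{\KMS}^{-1/2}\left(\bfZ^{\mathsf{T}}\Sigma_{\KMS}+\Sigma_{\KMS}\bfZ\right)\Sigma_{\KMS}^{-1/2}=-\Sigma_{\KMS}^{-1/2}\bfC_{\KMS}\Sigma_{\KMS}^{-1/2}.
\]
Since $T_0^*T_0=I$ and $T_t^*T_t\leq I$ for all $t\geq0$, the quadratic form $v^*(T_t^*T_t-I)v$ is nonpositive and vanishes at $t=0$. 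Its right derivative at $0$ is therefore $\leq0$. Hence $\Sigma_{\KMS}^{-1/2}\bfC_{\KMS}\Sigma_{\KMS}^{-1/2}\geq0$, and conjugating by $\Sigma_{\KMS}^{1/2}$ gives $\bfC_{\KMS}\geq0$.

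The only subtlety is that we need $\Sigma_{\KMS}$ to be invertible. This holds because it is congruent via $\bfM$ to a positive diagonal matrix: the condition $s_j>1$ gives $\sqrt{s_j^2-1}>0$.

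For the bound, I will use a Grönwall argument on $f(t)=\|T_tv\|^2$ for a fixed vector $v$. Write $\bfC'=\Sigma_{\KMS}^{-1/2}\bfC_{\KMS}\Sigma_{\KMS}^{-1/2}\geq\gamma_{\KMS}I$. Since $\bfZ_{\KMS}^*+\bfZ_{\KMS}=-\bfC'$, we get
\[
f'(t)=(T_tv)^*(\bfZ_{\KMS}^*+\bfZ_{\KMS})T_tv=-(T_tv)^*\bfC'T_tv\leq-\gamma_{\KMS}f(t).
\]
Therefore $f(t)\leq e^{-\gamma_{\KMS}t}f(0)$, which gives $\|e^{t\bfZ_{\KMS}}\|\leq e^{-\gamma_{\KMS}t/2}$.

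For the adjoint, note that $e^{t\bfZ_{\KMS}^*}=(e^{t\bfZ_{\KMS}})^*$ has the same operator norm. Also, $\bfZ_{\KMS}$ is a real matrix, so $\bfZ_{\KMS}^*=\bfZ_{\KMS}^{\mathsf{T}}$.

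There is no real obstacle here. The one step needing care is the sign and derivative argument for $\bfC_{\KMS}\geq0$. It relies on the already established contractivity $\|e^{t\bfZ_{\KMS}}\|\leq1$ from Proposition~\ref{prop:second}, which is inherited from the $L_2(\rho)$-contractivity of $\Phi_t$.
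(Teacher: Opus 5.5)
Your proposal is correct and matches the paper's proof. Both differentiate the one-particle contractivity $\|e^{t\bfZ_{\KMS}}\|\leq 1$ at $t=0$ to obtain $\bfZ_{\KMS}+\bfZ_{\KMS}^{\mathsf T}=-\Sigma_{\KMS}^{-1/2}\bfC_{\KMS}\Sigma_{\KMS}^{-1/2}\leq 0$, then apply Gr\"onwall to $\|e^{t\bfZ_{\KMS}}v\|^2$ under the lower bound $\gamma_{\KMS}$, and handle the adjoint through equality of operator norms.
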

\begin{proof}
Differentiate $\|e^{t\bfZ_{\KMS}}\bfz\|^2\leq\| \bfz\|^2$ at zero, we see that $\bfZ_{\KMS}+\bfZ_{\KMS}^{\mathsf{T}}\leq0$, equivalently $\bfC_{\KMS} \geq 0$. 
Under \eqref{eq:gammaK}, $\bfZ_{\KMS}+\bfZ_{\KMS}^{\mathsf{T}}\leq-\gamma_{\KMS} I$, and differentiation followed by Gronwall inequality proves \eqref{eq:oneparticle-bound}.
\end{proof}




\subsection{GNS and KMS symmetry}
In this section, we shall investigate GNS and KMS symmetry for Gaussian quantum Markov semigroups.
The KMS spectral gap of Gaussian quantum Markov semigroups has been investigated by Li~\cite{Li26}.
For related algebraic characterizations of GNS symmetry, see Poletti, Sasso, and Umanit\`a~\cite{PSU26}.

\begin{proposition}\label{prop:kmssym}
Suppose that $\{\Phi_t\}_{t\geq 0}$ is a Gaussian quantum Markov semigroup.
Then the following are equivalent:
\begin{enumerate}
    \item $\{\Phi_t\}_{t \geq 0}$ is KMS-symmetric;
    \item $\bfZ^{\mathsf{T}}\Sigma_{\KMS}=\Sigma_{\KMS}\bfZ$;
    \item $\bfZ_{\KMS}$ is Hermitian.
\end{enumerate}
\end{proposition}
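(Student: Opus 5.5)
The plan is to transport everything to the auxiliary Fock space via Proposition~\ref{prop:second} and its adjoint version \eqref{eq:secondadj}. Since $\mathcal U$ is unitary, KMS symmetry $\Phi_t^\sharp=\Phi_t$ (as maps on $L_2(\rho)$, hence on $\cB(\cF_s(\bC^d))$ by density and normality) is equivalent to
\[
\cF_c(e^{t\bfZ_{\KMS}})=\cF_c(e^{t\bfZ_{\KMS}^{\mathsf{T}}})\quad\text{for all }t\ge0 .
\]
The equivalence (2)$\Leftrightarrow$(3) is pure linear algebra. Multiplying $\bfZ^{\mathsf{T}}\Sigma_{\KMS}=\Sigma_{\KMS}\bfZ$ on both sides by $\Sigma_{\KMS}^{-1/2}$ gives $\bfZ_{\KMS}^{\mathsf{T}}=\Sigma_{\KMS}^{-1/2}\bfZ^{\mathsf{T}}\Sigma_{\KMS}^{1/2}=\Sigma_{\KMS}^{1/2}\bfZ\Sigma_{\KMS}^{-1/2}=\bfZ_{\KMS}$. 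This uses that $\Sigma_{\KMS}$ is real symmetric positive definite, which holds because $s_j>1$ and by \eqref{eq:SK}. Since $\bfZ_{\KMS}$ is real, being Hermitian is the same as being symmetric.

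For (3)$\Rightarrow$(1): if $\bfZ_{\KMS}=\bfZ_{\KMS}^{\mathsf{T}}$, the two second quantizations coincide. Hence $(\Phi_t^\sharp)^{(2)}=\Phi_t^{(2)}$ on $L_2(\rho)$. The embedding $X\mapsto X$ of $\cB(\cF_s(\bC^d))$ into $L_2(\rho)$ is injective because $\rho$ is faithful, so $\Phi_t^\sharp=\Phi_t$. A more direct check on Weyl operators is also available: by \eqref{eq: exchange relation for Gaussian QMS} and \eqref{eq:expkernel},
\[
\langle \widetilde W(\bfw),\Phi_t\widetilde W(\bfz)\rangle_\rho=e^{\bfw^{\mathsf{T}}\Sigma_{\KMS}e^{t\bfZ}\bfz},\qquad
\langle \Phi_t\widetilde W(\bfw),\widetilde W(\bfz)\rangle_\rho=e^{\bfw^{\mathsf{T}}e^{t\bfZ^{\mathsf{T}}}\Sigma_{\KMS}\bfz}.
\]
Condition (2) gives $\Sigma_{\KMS}e^{t\bfZ}=e^{t\bfZ^{\mathsf{T}}}\Sigma_{\KMS}$, and totality of the Weyl operators finishes the argument.

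For (1)$\Rightarrow$(2): from KMS symmetry, the two kernels above agree for all real $\bfz,\bfw$ and all $t$. Taking logarithms (both sides are positive reals) gives $\Sigma_{\KMS}e^{t\bfZ}=e^{t\bfZ^{\mathsf{T}}}\Sigma_{\KMS}$, and differentiating at $t=0$ yields (2). Equivalently, one can restrict $\cF_c(e^{t\bfZ_{\KMS}})$ to the first chaos to get $e^{t\bfZ_{\KMS}}=e^{t\bfZ_{\KMS}^{\mathsf{T}}}$ and differentiate.

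The only delicate step is justifying that KMS symmetry of $\Phi_t$ on $\cB(\cF_s(\bC^d))$ implies symmetry of the $L_2(\rho)$ extension, and conversely. Concretely, one needs the identity $\langle Y,\Phi_t X\rangle_\rho=\langle \Phi_t^\sharp Y,X\rangle_\rho$ for bounded $X,Y$, which follows from \eqref{eq:adjoint} by cyclicity of the trace. One also needs that the normalized Weyl operators are total in $L_2(\rho)$, which was established in the proof of Proposition~\ref{prop:second}. Apart from this, the argument is bookkeeping.
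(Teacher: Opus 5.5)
Your proposal is correct and follows essentially the paper's argument: the paper also works in the second-quantization picture of Proposition~\ref{prop:second}, restricting to the first chaos for (1)$\Rightarrow$(3) and using that every symmetric tensor power of a self-adjoint $e^{t\bfZ_{\KMS}}$ is self-adjoint for (3)$\Rightarrow$(1). Your explicit Weyl-kernel computation, the linear-algebra step (2)$\Leftrightarrow$(3), and the identification of $\Phi_t^\sharp$ with the $L_2(\rho)$-adjoint via \eqref{eq:adjoint} fill in details that the paper leaves implicit.
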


\begin{proof}
Self-adjointness of the full semigroup implies self-adjointness of its first-chaos restriction. Conversely, if $\bfZ_{\KMS}=\bfZ_{\KMS}^*$, every tensor power of $e^{t\bfZ_{\KMS}}$ is self-adjoint, so Proposition~\ref{prop:second} gives KMS symmetry.
\end{proof}

\begin{proposition}\label{prop:hurwitzpos}
Suppose that $\{\Phi_t\}_{t\geq 0}$ is a KMS-symmetric Gaussian quantum Markov semigroup and $\bfZ$ is Hurwitz.
Then
\begin{equation}\label{eq:symmetric-rate}
 \gamma_{\KMS} =-2\lambda_{\max}(\bfZ_{\KMS})>0.
\end{equation}   
\end{proposition}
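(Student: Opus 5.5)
By Proposition~\ref{prop:kmssym}, KMS symmetry means $\bfZ_{\KMS}$ is a real symmetric matrix. The plan is therefore to rewrite $\gamma_{\KMS}$ directly in terms of $\bfZ_{\KMS}$ and then use the spectral theorem together with similarity invariance of eigenvalues.

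First, conjugate the definition \eqref{eq:CK} by $\Sigma_{\KMS}^{-1/2}$ on both sides:
\begin{align*}
 \Sigma_{\KMS}^{-1/2}\bfC_{\KMS}\Sigma_{\KMS}^{-1/2}
 =-\Sigma_{\KMS}^{-1/2}\bfZ^{\mathsf{T}}\Sigma_{\KMS}^{1/2}-\Sigma_{\KMS}^{1/2}\bfZ\Sigma_{\KMS}^{-1/2}
 =-\bfZ_{\KMS}^{\mathsf{T}}-\bfZ_{\KMS}.
\end{align*}
This uses that $\Sigma_{\KMS}$ is real symmetric and positive definite. Positive definiteness holds because $s_j>1$ gives $\sqrt{s_j^2-1}>0$ in \eqref{eq:SK}, and $\bfM$ is invertible. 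Under KMS symmetry the right-hand side equals $-2\bfZ_{\KMS}$, so
\[
 \gamma_{\KMS}=\lambda_{\min}(-2\bfZ_{\KMS})=-2\lambda_{\max}(\bfZ_{\KMS}).
\]
Since $\bfZ_{\KMS}$ is symmetric, all its eigenvalues are real, so $\lambda_{\max}$ makes sense.

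Second, $\bfZ_{\KMS}=\Sigma_{\KMS}^{1/2}\bfZ\Sigma_{\KMS}^{-1/2}$ is similar to $\bfZ$, so it has the same spectrum. Because $\bfZ$ is Hurwitz, every eigenvalue of $\bfZ_{\KMS}$ is real and strictly negative. Hence $\lambda_{\max}(\bfZ_{\KMS})<0$, which gives $\gamma_{\KMS}>0$.

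The only point that needs care is the first identity. The conjugation must be carried out with the symmetric square root, so that $(\Sigma_{\KMS}^{1/2}\bfZ\Sigma_{\KMS}^{-1/2})^{\mathsf{T}}=\Sigma_{\KMS}^{-1/2}\bfZ^{\mathsf{T}}\Sigma_{\KMS}^{1/2}$. One should also observe that Hermitian and real symmetric coincide here, since all the matrices involved are real.
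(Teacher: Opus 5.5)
Your proof is correct and follows the same route as the paper. KMS symmetry makes $\bfZ_{\KMS}$ real symmetric, and similarity to the Hurwitz matrix $\bfZ$ forces its real eigenvalues to be strictly negative; the identity $\Sigma_{\KMS}^{-1/2}\bfC_{\KMS}\Sigma_{\KMS}^{-1/2}=-\bfZ_{\KMS}-\bfZ_{\KMS}^{\mathsf{T}}=-2\bfZ_{\KMS}$, which you spell out and the paper's one-line proof leaves implicit, then gives the formula for $\gamma_{\KMS}$.
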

\begin{proof}
    Similarity preserves eigenvalues; thus a symmetric $\bfZ_{\KMS}$ similar to a Hurwitz matrix is strictly negative definite.
\end{proof}

\begin{proposition}
Suppose that $\{\Phi_t\}_{t\geq 0}$ is a Gaussian quantum Markov semigroup.
Then $\{\Phi_t\}_{t\geq 0}$ is GNS-symmetric if and only if $\mathbf{Z}^\mathsf{T}\mathbf{F}=\mathbf{F}\mathbf{Z}$.
\end{proposition}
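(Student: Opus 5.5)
Our plan mirrors the proof of Proposition~\ref{prop:kmssym}, replacing the KMS inner product by the GNS inner product $\langle X,Y\rangle_{\mathrm{GNS}}=\Tr(\rho X^*Y)$ and the KMS kernel $\Sigma_{\KMS}$ by the GNS kernel. The first step is to compute the GNS pairing of normalized Weyl operators. Using $W(\bfz)^*=W(-\bfz)$, the multiplication rule \eqref{eq:Weyl multiplication convention}, and the characteristic function \eqref{eq:char} with $\bfm=0$, we get
\begin{align*}
 \Tr\bigl(\rho W(\bfz)^*W(\bfw)\bigr)=e^{i\bfz^{\mathsf{T}}\bfJ\bfw}\exp\Bigl(-\tfrac12(\bfw-\bfz)^{\mathsf{T}}\Sigma(\bfw-\bfz)\Bigr).
\end{align*}
Hence, for real $\bfz,\bfw$,
\begin{align*}
 \Tr\bigl(\rho\widetilde W(\bfz)^*\widetilde W(\bfw)\bigr)=\exp\bigl(\bfz^{\mathsf{T}}(\Sigma+i\bfJ)\bfw\bigr)=e^{\bfz^{\mathsf{T}}\bfF\bfw}.
\end{align*}
The sign of the $\bfJ$ term depends on the convention and should be double-checked. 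If it comes out reversed, the argument below runs identically with $\overline{\bfF}$ in place of $\bfF$, and the final criterion is unchanged, since $\bfZ$ is real and the condition for $\bfF$ is equivalent to its complex conjugate.

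GNS symmetry means $\Tr(\rho\Phi_t(X)^*Y)=\Tr(\rho X^*\Phi_t(Y))$ for all $t$. Since finite linear combinations of Weyl operators are weak$^*$ dense and $\Phi_t$ is normal, it suffices to test this identity on pairs $X=\widetilde W(\bfz)$, $Y=\widetilde W(\bfw)$. The exchange relation \eqref{eq: exchange relation for Gaussian QMS} then turns the condition into
\begin{align*}
 e^{(e^{t\bfZ}\bfz)^{\mathsf{T}}\bfF\bfw}=e^{\bfz^{\mathsf{T}}\bfF e^{t\bfZ}\bfw}\qquad\text{for all real }\bfz,\bfw,\ t\geq0.
\end{align*}

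For the forward direction, replace $\bfz,\bfw$ by $s\bfz,s\bfw$ and differentiate in $s$, or take logarithms near $s=0$ where both sides are near $1$. This gives $\bfz^{\mathsf{T}}(e^{t\bfZ^{\mathsf{T}}}\bfF-\bfF e^{t\bfZ})\bfw=0$ for all real $\bfz,\bfw$. A complex matrix whose bilinear form vanishes on $\bR^{2d}\times\bR^{2d}$ is zero, so $e^{t\bfZ^{\mathsf{T}}}\bfF=\bfF e^{t\bfZ}$. Differentiating at $t=0$ yields $\bfZ^{\mathsf{T}}\bfF=\bfF\bfZ$.

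For the converse, $\bfZ^{\mathsf{T}}\bfF=\bfF\bfZ$ implies $(\bfZ^{\mathsf{T}})^n\bfF=\bfF\bfZ^n$ for every $n$, hence $e^{t\bfZ^{\mathsf{T}}}\bfF=\bfF e^{t\bfZ}$. So the displayed identity holds on Weyl pairs. It then extends to all $X,Y$ by sesquilinearity, by $\sigma$-weak continuity of $\Phi_t$ and of $X\mapsto\Tr(\rho X)$, and by Kaplansky density.

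The main obstacle is this density and extension step, because the operators are unbounded and we need the pairing to determine $\Phi_t$ on all of $\cB(\cF_s(\bC^d))$. We handle it exactly as in Proposition~\ref{prop:second}: the GNS Hilbert space $L_2^{\mathrm{GNS}}(\rho)$ is the closure of $\cB\rho^{1/2}$, and the Weyl vectors $W(\bfz)\rho^{1/2}$ are total in it. On $L_2^{\mathrm{GNS}}(\rho)$, $\Phi_t$ acts as a bounded contraction, and self-adjointness there is equivalent to GNS symmetry. Totality therefore suffices.
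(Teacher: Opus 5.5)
Your proposal is correct and follows the paper's proof: you compute the GNS pairing $\Tr(\rho\widetilde W(\bfz)^*\widetilde W(\bfw))=e^{\bfz^{\mathsf T}\bfF\bfw}$ (your sign is the right one), combine it with the exchange relation $\Phi_t(\widetilde W(\bfz))=\widetilde W(e^{t\bfZ}\bfz)$, and conclude by totality of the Weyl vectors. You also spell out steps the paper leaves implicit, namely extracting $e^{t\bfZ^{\mathsf T}}\bfF=\bfF e^{t\bfZ}$ from equality of the exponentials by scaling, the power-series converse, and the extension via the GNS contraction and Kaplansky density; these are all sound, although the Weyl operators are bounded unitaries, so no unboundedness issue actually arises.
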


\begin{proof}
    Using $W(z)^*=W(-z)$ and the Weyl multiplication equation \eqref{eq:Weyl multiplication convention}, we obtain
\begin{align}\label{eq: GNS isometry}
\big\langle\widetilde W(z),
\widetilde W(w)\big\rangle_{\mathrm{GNS},\rho}
&=\exp\!\bigl(
\mathbf z^{\mathsf{T}}\Sigma\mathbf w
+i\mathbf z^{\mathsf{T}}\mathbf J\mathbf w
\bigr)
=\exp\!\bigl(\mathbf z^{\mathsf{T}}\mathbf F\mathbf w\bigr).
\end{align}
Combining with \eqref{eq: exchange relation for Gaussian QMS}, we have
\begin{align*}
\big\langle\widetilde W(z),
\Phi_t(\widetilde W(w))\big\rangle_{\mathrm{GNS},\rho}
&=\exp\!\bigl(
\mathbf z^{\mathsf{T}}\mathbf F e^{t\mathbf Z}\mathbf w
\bigr),
\\
\big\langle\Phi_t(\widetilde W(z)),
\widetilde W(w)\big\rangle_{\mathrm{GNS},\rho}
&=\exp\!\bigl(
\mathbf z^{\mathsf{T}}e^{t\mathbf Z^{\mathsf{T}}}\mathbf F\mathbf w
\bigr).
\end{align*}
The above inner product is the GNS inner product induced by $\rho$. Since the linear span of the elements of the form $\widetilde W(z)$ is dense,
the proof of the proposition is complete.
\end{proof}

\section{Complete Modified Logarithmic Sobolev Inequalities}
\label{sec:complete-mlsi}


In this section, we determine the optimal relative complete
modified logarithmic Sobolev constant for finite-mode Gaussian
quantum Markov semigroups.
We give an explicit matrix formula for this constant, identify
it with twice the optimal relative GNS norm-decay rate, and
characterize its strict positivity.
We also determine the optimal contraction coefficient at each fixed time and show that the positivity criterion holds without assuming convergence of the drift evolution.

Whenever it exists, let $\mathbb E_\Phi$ denote the normal
$\rho$-preserving conditional expectation onto the fixed-point algebra
of $\{\Phi_t\}_{t \ge 0}$.
We say that $\Phi$ satisfies CMLSI with rate $\beta \geq 0$ if
\begin{equation}\label{eq:section4-cmlsi}
H\!\left(
(\Phi_t^*\otimes\Id_k)(\omega)
\,\middle\|\,
(\mathbb E_{\Phi}^*\otimes\Id_k)(\omega)
\right)
\leq
e^{-\beta t}
H\!\left(
\omega
\,\middle\|\,
(\mathbb E_{\Phi}^*\otimes\Id_k)(\omega)
\right)
\end{equation}
for every $t\geq0$, every $k\geq1$, and every density $\omega$
for which the relative entropy on the right is finite.
Here $H$ denotes Umegaki relative entropy.
The supremum of all admissible rates is denoted by
$\alpha_{\mathrm{CMLSI}}(\Phi;\mathbb E_\Phi)$, or simply
$\alpha_{\mathrm{CMLSI}}$.
We normalize the CMLSI constant as the complete relative entropy decay rate.
Thus the convention with decay $e^{-2\alpha t}$ assigns half this value.
The rate $\beta=0$ is always admissible by data processing.
We say that $\Phi$ satisfies a positive relative CMLSI if
$\alpha_{\mathrm{CMLSI}}(\Phi;\mathbb E_\Phi)>0$.

We also write
$\lambda_{\mathrm{GNS}}(\Phi;\mathbb E_\Phi)$
for the largest $\lambda\geq0$ such that
\begin{equation}\label{eq:lambda GNS}
\|\Phi_t(X)\|_{\mathrm{GNS},\rho}
\leq e^{-\lambda t}\|X\|_{\mathrm{GNS},\rho},
\qquad
t\geq0,\quad \mathbb E_\Phi(X)=0.
\end{equation}

\begin{lemma} \label{lem:section4-fixed-splitting}
Suppose that
$\displaystyle \mathbf P=\lim_{t\to\infty}e^{t\mathbf Z}$
exists. There is a symplectic change of Weyl coordinates such that,
after grouping the coordinates by sector,
\begin{equation}\label{eq:section4-fixed-blocks}
\begin{aligned}
\mathbf Z&=0_{2l}\oplus\mathbf Z_{\mathrm r},
&
\Sigma&=\Sigma_0\oplus\Sigma_{\mathrm r},\\
\mathbf F&=\mathbf F_0\oplus\mathbf F_{\mathrm r},
&
\bfC_{\bfZ}&=0_{2l}\oplus\bfC_{\bfZ,\mathrm r},
\end{aligned}
\end{equation}
where $\mathbf Z_{\mathrm r}$ is Hurwitz whenever the relaxing sector
is nonempty, and every nonempty covariance block is faithful.
Moreover, $\mathbf C=0_{2l}\oplus\mathbf C_{\mathrm r}$.

In the corresponding Gaussian-unitary realization,
\[
\Phi_t
=
\Id_{\mathcal B(\mathcal F_s(\mathbb C^l))}
\otimes\Phi_{t,\mathrm r},
\qquad
\rho=\rho_0\otimes\rho_{\mathrm r},
\]
where
$\rho_0:=\operatorname{Tr}_{\mathsf h_{\mathrm r}}(\rho),
\;
\rho_{\mathrm r}:=\operatorname{Tr}_{\mathsf h_0}(\rho)$
in the representation
$\mathsf h=\mathsf h_0\otimes\mathsf h_{\mathrm r}$.
Its fixed-point conditional expectation is
\[
\mathbb E_\Phi(X_0\otimes X_{\mathrm r})
=
\operatorname{Tr}(\rho_{\mathrm r}X_{\mathrm r})
X_0\otimes1_{\mathrm r},
\qquad
\mathbb E_{\Phi}^*(\omega)
=
\operatorname{Tr}_{\mathrm r}(\omega)\otimes\rho_{\mathrm r}.
\]
Empty sectors are omitted.
\end{lemma}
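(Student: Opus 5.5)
The plan is to work entirely at the level of the one-particle data $(\bfZ,\Sigma,\bfF,\bfC_\bfZ)$ and then lift to the Fock space by a Gaussian unitary. First I show that $e^{t\bfZ}$ is power-bounded in the $\bfF$-weighted norm. By \eqref{eq:Fidentity} and $\bfC_\bfZ\ge0$ we have $e^{t\bfZ^{\mathsf T}}\bfF e^{t\bfZ}\le\bfF$, so $\|\bfF^{1/2}e^{t\bfZ}\bfF^{-1/2}\|\le1$. Hence every eigenvalue of $\bfZ$ has nonpositive real part, and the purely imaginary eigenvalues are semisimple. Existence of $\mathbf P=\lim e^{t\bfZ}$ excludes nonzero imaginary eigenvalues. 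Therefore $\bR^{2d}=\ker\bfZ\oplus\operatorname{ran}\bfZ$, with $\mathbf P$ the spectral projection onto $\ker\bfZ$ and $\bfZ|_{\operatorname{ran}\bfZ}$ Hurwitz.

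Next I show this real decomposition is $\bfF$-orthogonal, which gives both $\Sigma$- and $\bfJ$-orthogonality, since $\Sigma$ and $\bfJ$ are the real and imaginary parts of $\bfF$ on real vectors. For $v\in\ker\bfZ$, \eqref{eq:Fidentity} gives $v^*\bfC_\bfZ v=-v^*(\bfZ^*\bfF+\bfF\bfZ)v=0$, and $\bfC_\bfZ\ge0$ then forces $\bfC_\bfZ v=0$. Hence $\bfZ^*\bfF v=0$, so $\bfF v\perp\operatorname{ran}\bfZ$ in the Euclidean sense, that is, $w^{*}\bfF v=0$ for all $w\in\operatorname{ran}\bfZ$. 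So $\ker\bfZ$ and $\operatorname{ran}\bfZ$ are orthogonal for the positive form $\bfF$. Both are real subspaces, so both $\Sigma$ and $\bfJ$ split block-diagonally. The $\bfJ$-orthogonality makes each piece symplectic, and nondegeneracy of $\bfJ$ makes each piece even-dimensional; write $2l=\dim\ker\bfZ$.

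I then choose symplectic bases of the two pieces and assemble them into a symplectic matrix $\mathbf S$. In these coordinates $\bfZ=0_{2l}\oplus\bfZ_{\mathrm r}$, with $\bfZ_{\mathrm r}$ Hurwitz, and $\Sigma,\bfF$ are block diagonal. Each block of $\Sigma+i\bfJ$ is the restriction of a positive definite form, so each covariance block is faithful. The block form of $\bfC_\bfZ$ then follows from $\bfC_\bfZ=-\bfZ^*\bfF-\bfF\bfZ$. Likewise, \eqref{eq:lyap} gives $\bfC=-\bfZ^{\mathsf T}\Sigma-\Sigma\bfZ=0_{2l}\oplus\bfC_{\mathrm r}$.

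For the operator statement, I implement $\mathbf S$ by a Gaussian (metaplectic) unitary, under which $\cF_s(\bC^d)\cong\cF_s(\bC^l)\otimes\cF_s(\bC^{d-l})$ and Weyl operators factor as $W(\bfz_0\oplus\bfz_{\mathrm r})=W(\bfz_0)\otimes W(\bfz_{\mathrm r})$. Since the mean is centered ($\bfzeta=0$), \eqref{eq:Tweyl} with the block data gives $\Phi_t(W(\bfz_0)\otimes W(\bfz_{\mathrm r}))=W(\bfz_0)\otimes\Phi_{t,\mathrm r}(W(\bfz_{\mathrm r}))$. By weak$^*$-density and normality this yields $\Phi_t=\Id\otimes\Phi_{t,\mathrm r}$. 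The characteristic function factorizes, so $\rho=\rho_0\otimes\rho_{\mathrm r}$ with the stated partial traces.

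For the conditional expectation, Proposition~\ref{prop:control} applied to the Hurwitz block shows that $\Phi_{t,\mathrm r}$ is irreducible with faithful invariant state $\rho_{\mathrm r}$. Directly, $\Phi_{t,\mathrm r}(\widetilde W(\bfz_{\mathrm r}))=\widetilde W(e^{t\bfZ_{\mathrm r}}\bfz_{\mathrm r})\to\widetilde W(0)$, so $\Phi_{t,\mathrm r}(X)\to\Tr(\rho_{\mathrm r}X)1$ weak$^*$. Consequently the fixed-point algebra is $\cB(\mathsf h_0)\otimes1$. The map $X_0\otimes X_{\mathrm r}\mapsto\Tr(\rho_{\mathrm r}X_{\mathrm r})X_0\otimes1$ is a $\rho$-preserving normal conditional expectation onto it, and its predual is $\omega\mapsto\Tr_{\mathrm r}\omega\otimes\rho_{\mathrm r}$.

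I expect the main obstacle to be the $\bfF$-orthogonality step, specifically the complexification bookkeeping: passing from $\bfC_\bfZ v=0$ to the orthogonality of the real subspaces for both $\Sigma$ and $\bfJ$. A secondary point is to justify, carefully but briefly, the metaplectic implementation for unbounded generators.
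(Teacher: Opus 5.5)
Your proof is correct and has the same overall architecture as the paper's: split $\mathbb R^{2d}$ into $\ker\bfZ$ and a $\bfJ$-orthogonal, $\bfZ$-invariant complement, block-diagonalize in a symplectic basis, then tensor-factorize. The difference lies in how the splitting is produced.

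\textbf{How the splitting is obtained.} The paper takes the symplectic complement $\mathcal N^{\perp_{\bfJ}}$ of $\mathcal N=\ker_{\mathbb R}\bfZ$ as the second summand. It must therefore first show that $\bfJ$ is nondegenerate on $\mathcal N$. It does this through $\Sigma\mathcal N=\bfJ\mathcal N=\ker_{\mathbb R}\bfZ^{*}$, which uses a dimension count and the trick of solving $\bfJ y=\Sigma x$ inside $\mathcal N$. It then checks $\operatorname{Ran}\bfZ\subseteq\mathcal N^{\perp_{\bfJ}}$ and the vanishing of the $\Sigma$ cross terms as separate steps.

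You instead obtain $\ker\bfZ\oplus\operatorname{ran}\bfZ$ from semisimplicity of the eigenvalue $0$, which follows from $\bfF$-contractivity. The single identity $\bfZ^{*}\bfF v=0$ then gives $\Sigma$- and $\bfJ$-orthogonality at once, so nondegeneracy and even dimension of each piece come for free. This is slightly more economical. The paper's route avoids the spectral splitting entirely, since invertibility of $\bfZ$ on $\mathcal N^{\perp_{\bfJ}}$ falls out of the symplectic structure. Both arguments use the existence of $\lim_{t\to\infty}e^{t\bfZ}$ in the same place, to exclude nonzero imaginary eigenvalues on the complement.

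The complexification bookkeeping you flag is harmless. For real $w,v$ one has $w^{\mathsf T}\bfF v=w^{\mathsf T}\Sigma v+i\,w^{\mathsf T}\bfJ v$ with both terms real, which is exactly how the paper takes real and imaginary parts.

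\textbf{The conditional-expectation step.} The paper simply cites \cite[Theorem~17]{GP26}. Your direct argument needs one more sentence. Convergence of $\Phi_{t,\mathrm r}$ on Weyl operators does not by itself give weak$^*$ convergence on all of $\mathcal B(\mathsf h_{\mathrm r})$, because the Weyl span is only weak$^*$-dense. Either of the following closes the step:
\begin{enumerate}[(i)]
\item Use the $L_2(\rho_{\mathrm r})$ picture of Proposition~\ref{prop:second}. On the Hurwitz relaxing factor, $\cF_c(e^{t\bfZ_{\KMS}})$ converges strongly to the vacuum projection. Together with $\|\Phi_{t,\mathrm r}(X)\|\leq\|X\|$ and faithfulness of $\rho_{\mathrm r}$, this yields weak$^*$ convergence.
\item Use your irreducibility route: an irreducible semigroup with a faithful normal invariant state has trivial fixed-point algebra.
\end{enumerate}
In either case, finish with a slice-map argument for $\Id\otimes\Phi_{t,\mathrm r}$ and the uniqueness of the $\rho$-preserving normal conditional expectation onto $\mathcal B(\mathsf h_0)\otimes1$.
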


\begin{proof}
Let $\mathcal N=\ker_{\mathbb R}\mathbf Z$.
For $x\in\mathcal N$,
$x^{\mathsf{T}}\bfC_{\bfZ}x=0$ and $\bfC_{\bfZ}\geq0$
implies that
$\bfC_{\bfZ}x=0,
\;
\mathbf Z^*\mathbf F x=0$.
Taking real and imaginary parts yields
$\mathbf Z^*\Sigma x=0,
\;
\mathbf Z^*\mathbf Jx=0$.
By invertibility of $\Sigma$ and $\mathbf J$, and by
$\dim\ker_{\mathbb R}\mathbf Z
=\dim\ker_{\mathbb R}\mathbf Z^*$, we obtain
$\Sigma\mathcal N
=
\mathbf J\mathcal N
=
\ker_{\mathbb R}\mathbf Z^*$.

The symplectic form is nondegenerate on $\mathcal N$.
Indeed, suppose that $x\in\mathcal N$ satisfies
$x^{\mathsf{T}}\mathbf Jy=0$ for every $y\in\mathcal N$.
Choose $y\in\mathcal N$ such that $\mathbf Jy=\Sigma x$.
Then
$0=x^{\mathsf{T}}\mathbf Jy=x^{\mathsf{T}}\Sigma x$
shows $x=0$.
Hence $\dim\mathcal N=2l$ for some $l$, and
$\mathbb R^{2d}
=
\mathcal N\oplus\mathcal N^{\perp_{\mathbf J}}$.

Both summands are invariant under $\mathbf Z$.
Indeed, $\mathbf Z\mathcal N=0$, while
$\mathbf Z^*\mathbf Jx=0$ for $x\in\mathcal N$ implies
$x^{\mathsf{T}}\mathbf J\mathbf Zy=0$
for every $y$. Thus
$\operatorname{Ran}\mathbf Z\subseteq
\mathcal N^{\perp_{\mathbf J}}$.
The restriction of $\mathbf Z$ to
$\mathcal N^{\perp_{\mathbf J}}$ is invertible because its kernel is
contained in
$\mathcal N\cap\mathcal N^{\perp_{\mathbf J}}=\{0\}$.
Existence of $\displaystyle \lim_{t\to\infty}e^{t\mathbf Z}$ therefore implies that
this restriction is Hurwitz.

The covariance has no cross terms between the two summands.
For $x\in\mathcal N$, write $\Sigma x=\mathbf Jy$ with
$y\in\mathcal N$. If $v\in\mathcal N^{\perp_{\mathbf J}}$, then
$x^{\mathsf{T}}\Sigma v
=
(\mathbf Jy)^{\mathsf{T}}v
=
-y^{\mathsf{T}}\mathbf Jv=0$.

Choose symplectic bases of $\mathcal N$ and
$\mathcal N^{\perp_{\mathbf J}}$, and let
$S=(S_0\ S_{\mathrm r})$
be the resulting change-of-basis matrix. Then
\[
S^{\mathsf{T}}\mathbf J S
=
\mathbf J_0\oplus\mathbf J_{\mathrm r},
\quad
\mathbf Z':=S^{-1}\mathbf ZS
=
\begin{pmatrix}
0&0\\
0&\mathbf Z_{\mathrm r}
\end{pmatrix}.
\]
The previously established identity
$x^{\mathsf{T}}\Sigma v=0$ for
$x\in\mathcal N$ and $v\in\mathcal N^{\perp_{\mathbf J}}$
implies $S_0^{\mathsf{T}}\Sigma S_{\mathrm r}=0$.
Since $\Sigma$ is symmetric,
\[
\Sigma':=S^{\mathsf{T}}\Sigma S
=
\begin{pmatrix}
\Sigma_0&0\\
0&\Sigma_{\mathrm r}
\end{pmatrix},
\qquad
\mathbf F':=S^{\mathsf{T}}\mathbf F S
=
\begin{pmatrix}
\mathbf F_0&0\\
0&\mathbf F_{\mathrm r}
\end{pmatrix},
\]
where
$\mathbf F_0=\Sigma_0+i\mathbf J_0,
\qquad
\mathbf F_{\mathrm r}
=\Sigma_{\mathrm r}+i\mathbf J_{\mathrm r}$.

Setting $\mathbf{C}'=S^{\mathsf{T}}\mathbf{C}S$ and
$\bfC_{\bfZ}'=S^{\mathsf{T}}\bfC_{\bfZ}S$, the stationary
identities become
\[
\mathbf{C}'=-\mathbf Z'^*\Sigma'-\Sigma'\mathbf Z',
\qquad
\bfC_{\bfZ}'=-\mathbf Z'^*\mathbf F'-\mathbf F'\mathbf Z'.
\]
Substituting the preceding block matrices gives
\[
\mathbf{C}'=0_{2l}\oplus\mathbf{C}_{\mathrm r},
\qquad
\bfC_{\bfZ}'=0_{2l}\oplus\bfC_{\bfZ,\mathrm r},
\]
with
$\mathbf{C}_{\mathrm r}
=-\mathbf Z_{\mathrm r}^*\Sigma_{\mathrm r}
-\Sigma_{\mathrm r}\mathbf Z_{\mathrm r},
\;
\bfC_{\bfZ,\mathrm r}
=-\mathbf Z_{\mathrm r}^*\mathbf F_{\mathrm r}
-\mathbf F_{\mathrm r}\mathbf Z_{\mathrm r}$.

In the corresponding Gaussian-unitary realization,
$W(\mathbf z_0\oplus\mathbf z_{\mathrm r})
=W_0(\mathbf z_0)\otimes W_{\mathrm r}(\mathbf z_{\mathrm r})$.
The block forms of $\mathbf Z$ and $\mathbf C$ in
\eqref{eq:Tweyl} give
$\Phi_t=\Id\otimes\Phi_{t,\mathrm r}$,
while the block form of $\Sigma$ in \eqref{eq:char} gives
$\rho=\rho_0\otimes\rho_{\mathrm r}$.
The factors are faithful because $\mathbf F_0$ and
$\mathbf F_{\mathrm r}$ are positive definite whenever nonempty.
Since $\mathbf Z_{\mathrm r}$ is Hurwitz,
\cite[Theorem~17]{GP26} identifies the fixed-point algebra as
$\mathcal B(\mathsf h_0)\otimes1_{\mathrm r}$ and the asymptotic
conditional expectation as
$\Id\otimes\operatorname{Tr}(\rho_{\mathrm r}\,\cdot)1_{\mathrm r}$.
This gives the stated formulas for $\mathbb E_\Phi$ and its predual.
\end{proof}

\begin{theorem}\label{thm:section4-relative-cmlsi}
Under the assumptions of Lemma~\ref{lem:section4-fixed-splitting},
suppose that $l<d$, so that the relaxing sector contains at
least one mode. Then
\begin{equation}\label{eq:section4-cmlsi-bound}
\begin{aligned}
\alpha_{\mathrm{CMLSI}}(\Phi;\mathbb E_\Phi)
=
\lambda_{\min}\!\left(
\mathbf F_{\mathrm r}^{-1/2}
\bfC_{\bfZ,\mathrm r}
\mathbf F_{\mathrm r}^{-1/2}
\right)
=
2\lambda_{\mathrm{GNS}}(\Phi;\mathbb E_\Phi).
\end{aligned}
\end{equation}
The optimal rate admits the equivalent matrix characterizations
\begin{equation}\label{eq:section4-cmlsi-order}
\begin{split}
\lambda_{\min}\!\left(
\mathbf F_{\mathrm r}^{-1/2}
\bfC_{\bfZ,\mathrm r}
\mathbf F_{\mathrm r}^{-1/2}
\right)
=
\sup\left\{
\beta\geq0:
\bfC_{\bfZ}\geq
\beta(\mathbf F-\mathbf P^*\mathbf F\mathbf P)
\right\},
\end{split}
\end{equation}
and
\begin{equation}\label{eq:section4-cmlsi-rayleigh}
\begin{split}
\lambda_{\min}\!\left(
\mathbf F_{\mathrm r}^{-1/2}
\bfC_{\bfZ,\mathrm r}
\mathbf F_{\mathrm r}^{-1/2}
\right)
=
\inf_{\mathbf z\notin\ker_{\mathbb C}\mathbf Z}
\frac{
\mathbf z^*\bfC_{\bfZ}\mathbf z
}{
\operatorname{dist}_{\mathbf F}
(\mathbf z,\ker_{\mathbb C}\mathbf Z)^2
},
\end{split}
\end{equation}
where $\operatorname{dist}_{\mathbf F}$ denotes the distance
induced by the inner product
$\langle u,v\rangle_{\mathbf F}=u^*\mathbf Fv$.
In particular,
\begin{equation}\label{eq:section4-positive-matrix-rate}
\lambda_{\min}\!\left(
\mathbf F_{\mathrm r}^{-1/2}
\bfC_{\bfZ,\mathrm r}
\mathbf F_{\mathrm r}^{-1/2}
\right)>0
\quad\text{ if and only if }\quad
\ker_{\mathbb C}\bfC_{\bfZ}
=
\ker_{\mathbb C}\mathbf Z.
\end{equation}
Whenever these equivalent conditions hold, $\Phi$ satisfies a
positive CMLSI relative to $\mathbb E_\Phi$.
No GNS or KMS symmetry is required.
\end{theorem}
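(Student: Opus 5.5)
The plan is to reduce everything to the relaxing sector via Lemma~\ref{lem:section4-fixed-splitting}, dispose of the linear-algebra identities first, and then prove the upper and lower bounds on $\alpha_{\mathrm{CMLSI}}$ separately. Write $\gamma:=\lambda_{\min}(\mathbf F_{\mathrm r}^{-1/2}\bfC_{\bfZ,\mathrm r}\mathbf F_{\mathrm r}^{-1/2})$.

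\emph{Matrix characterizations.} In the coordinates of Lemma~\ref{lem:section4-fixed-splitting} we have $\mathbf P=I_{2l}\oplus 0$, because $\mathbf Z_{\mathrm r}$ is Hurwitz. Hence $\mathbf F-\mathbf P^*\mathbf F\mathbf P=0_{2l}\oplus\mathbf F_{\mathrm r}$ and $\bfC_{\bfZ}=0_{2l}\oplus\bfC_{\bfZ,\mathrm r}$. The inequality $\bfC_{\bfZ}\ge\beta(\mathbf F-\mathbf P^*\mathbf F\mathbf P)$ is therefore exactly $\bfC_{\bfZ,\mathrm r}\ge\beta\mathbf F_{\mathrm r}$, which gives \eqref{eq:section4-cmlsi-order}. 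Next, $\ker_{\mathbb C}\mathbf Z=\mathbb C^{2l}\oplus0$ and $\mathbf F$ is block diagonal. So the $\mathbf F$-orthogonal projection onto the kernel just drops the relaxing component, and $\operatorname{dist}_{\mathbf F}(\mathbf z,\ker\mathbf Z)^2=\mathbf z_{\mathrm r}^*\mathbf F_{\mathrm r}\mathbf z_{\mathrm r}$. The Rayleigh quotient \eqref{eq:section4-cmlsi-rayleigh} then follows. Since $\mathbf F_{\mathrm r}>0$, we get $\gamma>0$ iff $\ker\bfC_{\bfZ,\mathrm r}=0$ iff $\ker_{\mathbb C}\bfC_{\bfZ}=\ker_{\mathbb C}\mathbf Z$, which is \eqref{eq:section4-positive-matrix-rate}. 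Note that Hurwitz alone does not force this kernel condition. Proposition~\ref{prop:control}(ii) only excludes invariant subspaces of the kernel.

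\emph{GNS rate.} By \eqref{eq: GNS isometry}, the map $\widetilde W(\mathbf z)\mapsto e_{\mathbf F^{1/2}\mathbf z}$ is, exactly as in Proposition~\ref{prop:second}, a unitary from the GNS space onto a Fock space. Under it $\Phi_t$ becomes $\cF_c(\mathbf F^{1/2}e^{t\mathbf Z}\mathbf F^{-1/2})$. The subspace $\{\mathbb E_\Phi X=0\}$ is spanned by the chaoses carrying at least one relaxing particle. On $n$ particles the norm is at most $\|T_0\|^{n-1}\|T_{\mathrm r}\|\le\|T_{\mathrm r}\|$, where $T_0=I$ and $T_{\mathrm r}=\mathbf F_{\mathrm r}^{1/2}e^{t\mathbf Z_{\mathrm r}}\mathbf F_{\mathrm r}^{-1/2}$, and equality holds on the first relaxing chaos. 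So $\lambda_{\mathrm{GNS}}$ is the uniform exponential decay rate of $\|T_{\mathrm r}(t)\|$. Differentiating $\|T_{\mathrm r}(t)v\|^2$ and using \eqref{eq:Fidentity} gives $\frac{d}{dt}\,\mathbf w^*\mathbf F_{\mathrm r}\mathbf w=-\mathbf w^*\bfC_{\bfZ,\mathrm r}\mathbf w$ along $\mathbf w=e^{t\mathbf Z_{\mathrm r}}\mathbf z$. Gronwall then gives $\|T_{\mathrm r}\|\le e^{-\gamma t/2}$. The first-order expansion at $t=0$ along the minimizing vector shows no larger rate is possible, so $\lambda_{\mathrm{GNS}}=\gamma/2$.

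\emph{Upper bound $\alpha_{\mathrm{CMLSI}}\le2\lambda_{\mathrm{GNS}}$.} Take the minimizing first-chaos direction, let $X=R(\mathbf z)$ (a field operator, approximated by bounded ones), and perturb $\omega_\epsilon=\rho^{1/2}(1+\epsilon Y)\rho^{1/2}$ suitably, with $k=1$. Expanding both sides of \eqref{eq:section4-cmlsi} to second order in $\epsilon$ gives a $\chi^2$-type quadratic form that is diagonal on Weyl chaoses. Then $t\to0$ yields $\beta\le\gamma$. I will use the fact that the first chaos is an eigenspace-like invariant subspace for every such quadratic form, so the specific metric (BKM versus GNS) only contributes the same first-order decay $\mathbf z^*\bfC_{\bfZ}\mathbf z/\mathbf z^*\mathbf F\mathbf z$. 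Completing the square in the complex direction, via $\mathbf F$ rather than $\Sigma$, is exactly where the GNS matrix $\mathbf F$ appears.

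\emph{Lower bound $\alpha_{\mathrm{CMLSI}}\ge\gamma$.} This is the main obstacle. First I reduce to the relaxing sector. Since $\Phi_t=\Id\otimes\Phi_{t,\mathrm r}$ and $\mathbb E_\Phi^*=\Tr_{\mathrm r}(\cdot)\otimes\rho_{\mathrm r}$, the fixed sector can be absorbed into the ancilla, approximating $\mathcal B(\mathsf h_0)$ by matrix blocks and using lower semicontinuity of relative entropy. It then suffices to prove the complete inequality $H(\Phi_{t,\mathrm r}^*\omega\|\sigma\otimes\rho_{\mathrm r})\le e^{-\gamma t}H(\omega\|\sigma\otimes\rho_{\mathrm r})$ with $\sigma=\Tr_{\mathrm r}\omega$. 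The approach I would try first is a \emph{channel factorization}. By \eqref{eq:Fidentity}, $\bfC_{\bfZ}\ge\gamma\mathbf F_{\mathrm r}$ gives $e^{t\mathbf Z^*_{\mathrm r}}\mathbf F_{\mathrm r}e^{t\mathbf Z_{\mathrm r}}\le e^{-\gamma t}\mathbf F_{\mathrm r}$. This lets me write $\Phi_{t,\mathrm r}^*$ as a Gaussian channel that first mixes the input with a fresh copy of $\rho_{\mathrm r}$ through a Gaussian unitary of ``transmissivity'' $\eta=e^{-\gamma t}$ and then applies a $\rho_{\mathrm r}$-preserving Gaussian channel. Data processing disposes of the second step. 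For the mixing step I aim for a strong data-processing bound with constant $\eta$, via convexity along the semigroup parameter. Concretely, I would show that $\eta\mapsto H(\cdot)$ along the interpolating family has the right concavity, equivalently an entropy-power/Fisher-information estimate obtained from the intertwining $\partial_{\mathbf z}\Phi_t=\Phi_t\partial_{e^{t\mathbf Z}\mathbf z}$ together with the $\mathbf F$-contraction. If this fails, the fallback is to differentiate: prove entropy production $\ge\gamma H$ from the complete gradient bound via the same intertwining. Combining the upper and lower bounds with $\lambda_{\mathrm{GNS}}=\gamma/2$ gives \eqref{eq:section4-cmlsi-bound}.
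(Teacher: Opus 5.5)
Your matrix characterizations, the second-quantization computation $\lambda_{\mathrm{GNS}}=\gamma/2$, and the factorization $\Phi_t=\mathcal R_t\circ\Psi_{\gamma t}$ all agree with the paper. The two inequalities for $\alpha_{\mathrm{CMLSI}}$, however, are not established.

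\emph{Upper bound.} The step with $k=1$ and $\omega_\epsilon=\rho^{1/2}(1+\epsilon Y)\rho^{1/2}$ cannot work. A density perturbation at $k=1$ forces $Y=Y^*$, so in the first chaos you only test $R(\mathbf z)$ with real $\mathbf z$, against a real (BKM-type) quadratic form. For real $\mathbf z$ one has $\mathbf z^{\mathsf T}\mathbf F\mathbf z=\mathbf z^{\mathsf T}\Sigma\mathbf z$ and $\mathbf z^{\mathsf T}\bfC_{\bfZ}\mathbf z=\mathbf z^{\mathsf T}\bfC\mathbf z$. The imaginary parts carrying $\mathbf J$ therefore never enter, and there is no complex direction in which to complete the square.

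Your claim that the metric only contributes the same first-order decay is also false. Along $e^{t\bfZ}$ the form $\mathbf z^*G\mathbf z$ has derivative $-\mathbf z^*(\bfZ^{\mathsf T}G+G\bfZ)\mathbf z$. This equals $\mathbf z^*\bfC_{\bfZ}\mathbf z$ only for $G=\mathbf F$; for $G=\Sigma$ it is $\bfC$, and for $G=\Sigma_{\KMS}$ it is $\bfC_{\KMS}$, with different kernels.

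Example~\ref{ex:anisotropic-boundary} at the completely positive boundary with $r_1\neq r_2$ shows the failure is fatal. There $\gamma=0$. But $\mathbf z^{\mathsf T}\bfC_{\bfZ}\mathbf z=2\nu(r_1z_1^2+r_2z_2^2)>0$ for real $\mathbf z\neq0$, and $\Sigma$ is scalar, so your first-chaos test yields only $\beta\le 2\min(r_1,r_2)=\gamma_{\KMS}>0$.

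The ancilla is essential. In Lemma~\ref{lem:section4-entropy-implies-gns} the paper takes $\mathsf k=\mathbb C^2$ and $\omega=\begin{pmatrix}(1-q)\rho&\varepsilon A\\ \varepsilon A^*&q\rho\end{pmatrix}$ with non-self-adjoint, trace-zero $A$. The second-order expansion then has weights $\frac{\log((1-q)r_i)-\log(qr_j)}{(1-q)r_i-qr_j}$. Dividing by $\log(1/q)$ and letting $q\downarrow0$ leaves the one-sided weight $1/r_i$, which gives $\|\rho^{-1/2}\Phi^*(A)\|_2^2\le\eta\|\rho^{-1/2}A\|_2^2$. Duality converts this into GNS contraction on $\{I\}^\perp$, and only then does your GNS computation force $a\le\gamma$.

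\emph{Lower bound.} After the factorization and data processing for $\mathcal R_t$, everything rests on the complete contraction
\[
H\bigl((\Psi_s^*\otimes\Id_{\mathsf k})(\omega)\,\big\|\,\rho\otimes\omega_{\mathsf k}\bigr)\le e^{-s}H(\omega\,\|\,\rho\otimes\omega_{\mathsf k})
\]
for the reference semigroup. You state this only as an aim. Neither the ``right concavity'' nor the Fisher-information/intertwining estimate is actually formulated. The fallback, via differentiation and a complete gradient bound, is a Bakry--\'Emery-type argument that is not available off the shelf in this infinite-dimensional, quantum-ancilla setting.

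The missing idea is the splitting $H(\omega\|\rho\otimes\omega_{\mathsf k})=\Tr(\omega_{\mathsf h}(-\log\rho))-S(\mathsf h|\mathsf k)_\omega$.
\begin{itemize}
\item The energy term is exactly affine under the beam splitter with environment $\rho$, because $-\log\rho=cI+\frac12\mathbf R^{\mathsf T}G\mathbf R$ and $\rho$ is centered: $\Tr((-\log\rho)\Psi_s^*(\omega_{\mathsf h}))=e^{-s}\Tr((-\log\rho)\omega_{\mathsf h})+(1-e^{-s})S(\rho)$.
\item The conditional-entropy term is controlled by the linear conditional entropy power inequality of De~Palma and Trevisan: $S(\mathsf h|\mathsf k)_{\mathrm{out}}\ge e^{-s}S(\mathsf h|\mathsf k)_\omega+(1-e^{-s})S(\rho)$.
\end{itemize}
Subtracting cancels the $S(\rho)$ terms and leaves exactly the factor $e^{-s}$. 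Finiteness of the entropies, and a finite-dimensional approximation of $\mathsf k$ with lower semicontinuity, are then routine (Lemma~\ref{lem:section4-reference-entropy}).
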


The lower bound follows from Gaussian comparison with a reference
semigroup in Lemmas~\ref{lem:section4-reference-entropy}
and~\ref{lem:section4-gaussian-comparison}.
The upper bound follows from the entropy-to-GNS implication in
Lemma~\ref{lem:section4-entropy-implies-gns} and the exact GNS rate
in Lemma~\ref{lem:section4-relative-gns-rate}, which also gives
the matrix characterizations and positivity criterion.

In what follows, $\mathsf h$ denotes the system Hilbert space
and $\mathsf k$ an auxiliary Hilbert space.
For a joint density operator $\omega$ on
$\mathsf h\otimes\mathsf k$, we write
\[
\omega_{\mathsf h}:=\operatorname{Tr}_{\mathsf k}\omega,
\qquad
\omega_{\mathsf k}:=\operatorname{Tr}_{\mathsf h}\omega
\]
for the corresponding reduced density operators.

The corresponding single-mode thermal estimate was proved by
De Palma and Huber in \cite[Theorem~9]{DH18}.
We obtain the finite-mode Gaussian version needed here from the
conditional entropy power inequality of De Palma and Trevisan
\cite[Theorem~6]{DP18}, together with a
finite-dimensional approximation of the auxiliary system.
\begin{lemma}
\label{lem:section4-reference-entropy}
Let $\rho$ be a faithful centered Gaussian density on a finite-mode
bosonic system, with covariance $\Sigma$. 
Define
\[
\Psi_s(W(z))
=
\exp\!\left[
-\frac12(1-e^{-s})\mathbf z^{\mathsf T}\Sigma\mathbf z
\right]W(e^{-s/2}z),
\qquad s\geq0.
\]
For every separable auxiliary Hilbert space $\mathsf k$ and every
joint density $\omega$ with
$H(\omega\|\rho\otimes\omega_{\mathsf k})<\infty$, we have that 
\[
H\!\left(
(\Psi_s^*\otimes\Id_{\mathsf k})(\omega)
\,\middle\|\,
\rho\otimes\omega_{\mathsf k}
\right)\leq
e^{-s}H(\omega\|\rho\otimes\omega_{\mathsf k}).
\]
\end{lemma}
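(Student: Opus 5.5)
Our plan is to realize $\Psi_s^*$ as a beam-splitter mixing of the input with an independent copy of $\rho$, and then apply the conditional entropy power inequality of De Palma and Trevisan \cite[Theorem~6]{DP18}. Set $\lambda=e^{-s}\in(0,1]$. Consider the $2$-fold system $A\otimes B$ (each with $d$ modes) and the passive Gaussian unitary $U_\lambda$ implementing the symplectic map $\mathbf z_A\oplus\mathbf z_B\mapsto(\sqrt\lambda\,\mathbf z_A+\sqrt{1-\lambda}\,\mathbf z_B)\oplus(\cdots)$ on the output port. On Weyl operators one checks
\[
\operatorname{Tr}_B\!\big[U_\lambda(\omega_A\otimes\rho_B)U_\lambda^*\big]\big(W(z)\big)
=\omega_A\big(W(\sqrt\lambda z)\big)\,\rho\big(W(\sqrt{1-\lambda}z)\big),
\]
which equals $\omega_A(\Psi_s(W(z)))$ by \eqref{eq:char} with $\bfm=0$. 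Since the characteristic function determines the state, $\Psi_s^*(\omega_A)$ is the $\lambda$-beam-splitter output. Tensoring with $\Id_{\mathsf k}$ gives the same identity for joint states, so $\Psi_s$ is a genuine quantum channel and the conditional setting applies.

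Next we reduce the relative entropy to conditional entropies. Since $\rho$ is Gaussian and faithful, $-\log\rho$ is a quadratic polynomial in $Q_j,P_j$ plus a constant. Hence, for any $\omega$ with finite second moments, $H(\omega\|\rho\otimes\omega_{\mathsf k})=-S(A|K)_\omega+\Tr(\omega_A(-\log\rho))$, where the last term depends only on the covariance of $\omega_A$. The covariance evolves affinely under $\Psi_s^*$: $\Sigma_{\omega}\mapsto\lambda\Sigma_\omega+(1-\lambda)\Sigma$. Therefore the energy term satisfies $E(\Psi_s^*\omega)-E(\rho)=\lambda(E(\omega)-E(\rho))$ exactly. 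It then suffices to prove
\[
S(A|K)_{\Psi_s^*\omega}\geq \lambda S(A|K)_\omega+(1-\lambda)S(\rho),
\]
which is linear concavity in the exponent. The conditional EPI gives $\exp(S(C|K)/d)\geq\lambda\exp(S(A|K)/d)+(1-\lambda)\exp(S(B)/d)$, and with $B$ independent of $AK$ we have $S(B|K)=S(\rho)$. Concavity of $\log$ then yields the displayed linear inequality. We must check that the multimode EPI of \cite{DP18} covers a general Gaussian $\rho$ (not only thermal) in $B$ and the beam-splitter form used. If it is stated for the standard mixing $\sqrt\lambda A+\sqrt{1-\lambda}B$ with arbitrary $B$ independent of $AK$, it applies directly.

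The main obstacle is the regularity bookkeeping. First, the DP18 theorem presumably requires $\mathsf k$ finite-dimensional and finite energy/entropy. Second, finite $H(\omega\|\rho\otimes\omega_{\mathsf k})$ gives finite energy, because $-\log\rho\ge c\,N-c'$ by faithfulness. It does not, however, give finite $S(A|K)$ separately in a usable form when $\mathsf k$ is infinite. We handle this as follows. For finite-dimensional $\mathsf k$, finite energy implies finite $S(\omega_A)$ and the decomposition above is legitimate. For separable $\mathsf k$, we project onto finite-rank $P_n$ in $\mathsf k$ and set $\omega_n=(1\otimes P_n)\omega(1\otimes P_n)/\Tr(\cdot)$. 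The left side is lower semicontinuous under $\omega_n\to\omega$ in trace norm, since $\Psi_s^*\otimes\Id$ is continuous and relative entropy is jointly lsc, and the marginal $\omega_{n,\mathsf k}\to\omega_{\mathsf k}$. For the right side we need $H(\omega_n\|\rho\otimes\omega_{n,\mathsf k})\to H(\omega\|\rho\otimes\omega_{\mathsf k})$. We would obtain this by writing it as the quantum mutual-information-type quantity $\sup$ over finite projections, which is monotone under the compression via the conditional expectation onto $\mathcal B(\mathsf h)\otimes P_n\mathcal B(\mathsf k)P_n$ (Lindblad's approximation). Passing to the limit gives the general statement.
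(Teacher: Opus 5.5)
Your argument is correct and is essentially the paper's proof: the beam-splitter realization of $\Psi_s^*$, the exact energy identity $\Tr((-\log\rho)\Psi_s^*(\omega_{\mathsf h}))=e^{-s}\Tr((-\log\rho)\omega_{\mathsf h})+(1-e^{-s})S(\rho)$ (which relies on $\rho$ being centered), the linear conditional entropy power inequality of \cite{DP18} (the paper cites it directly, while you derive it from the exponential form by concavity of $\log$), and a reduction to finite-dimensional $\mathsf k$ closed by joint lower semicontinuity. The paper is tighter at two places where your justification is loose. First, finite energy comes from data processing plus the Gibbs variational inequality $\frac12\Tr(\omega_{\mathsf h}(-\log\rho))\le H(\omega_{\mathsf h}\|\rho)+\log\Tr\sqrt\rho$; the bound $-\log\rho\ge a\mathbf N-b$ only turns this into finite photon number. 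Second, the approximation uses the channels $\Theta_n(\tau)=P_n\tau P_n+\Tr((I-P_n)\tau)|e_0\rangle\langle e_0|$, so data processing immediately bounds the approximating input entropies by $H(\omega\|\rho\otimes\omega_{\mathsf k})$ with no Lindblad-type convergence statement needed.
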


\begin{proof}
The case $s=0$ is immediate.
First suppose that $\mathsf k$ is finite-dimensional.
By the Gaussian normal form \eqref{eq:Williamson decomposition}, $\Tr(\sqrt{\rho})<\infty$ and 
\begin{equation}\label{eq:- log rho expansion}
-\log\rho
=
cI+\frac12\mathbf R^{\mathsf T}G\mathbf R,
\qquad
\mathbf R=(Q_1,\ldots,Q_d,P_1,\ldots,P_d)^{\mathsf T},
\end{equation}
where $c\in\mathbb R$ and $G=G^{\mathsf T}>0$ is a real
$2d\times2d$ matrix. 
Applying the Gibbs variational inequality
\cite[Proposition~1]{PD88} to bounded spectral truncations of
$-\log\rho$, followed by monotone convergence and data processing,
gives
\[
\begin{split}
\frac12\operatorname{Tr}
\bigl(\omega_{\mathsf h}(-\log\rho)\bigr)
&\leq
H(\omega_{\mathsf h}\|\rho)
+\log\operatorname{Tr}\sqrt{\rho}\\
&\leq
H(\omega\|\rho\otimes\omega_{\mathsf k})
+\log\operatorname{Tr}\sqrt{\rho}
<\infty.
\end{split}
\]
Let $\displaystyle \mathbf{N}=\sum_{j=1}^d \mathbf {a_j^\dagger a_j}$ be the total number operator.
By \eqref{eq:- log rho expansion}, we have $-\log\rho\geq a\mathbf N-bI$ for some $a>0$ and $b\geq0$, then
$a\operatorname{Tr}(\omega_{\mathsf h} \mathbf{N})
\leq
\operatorname{Tr}\bigl(\omega_{\mathsf h}(-\log\rho)\bigr)+b
<\infty$.
The Gibbs variational principle yields the following von Neumann entropy bound:
\[
S(\omega_{\mathsf h})
\leq
\operatorname{Tr}(\omega_{\mathsf h}\mathbf N)
+\log\operatorname{Tr}(e^{- \mathbf N})
=
\operatorname{Tr}(\omega_{\mathsf h}\mathbf N)
-d\log(1-e^{-1})
<\infty.
\]
By the subadditivity and $\dim\mathsf k<\infty$,
\[
S(\omega)
\leq
S(\omega_{\mathsf h})+S(\omega_{\mathsf k})
\leq
S(\omega_{\mathsf h})+\log\dim\mathsf k
<\infty.
\]

Let $\mathsf h'$ be a copy of $\mathsf h$.
The environment $\mathsf h'$ is prepared in the state $\rho$,
independently of the input.
Let $U_s$ be the beam-splitter unitary on
$\mathsf h\otimes\mathsf h'$ with transmissivity $e^{-s}$.
Then
\[
\Psi_s^*(\omega_{\mathsf h})
=
\operatorname{Tr}_{\mathsf h'}
\left[
U_s(\omega_{\mathsf h}\otimes\rho)U_s^*
\right],
\]
and
\[
U_s^*(\mathbf R\otimes I_{\mathsf h'})U_s
=
e^{-s/2}(\mathbf R\otimes I_{\mathsf h'})
+\sqrt{1-e^{-s}}(I_{\mathsf h}\otimes\mathbf R).
\]
Using \eqref{eq:- log rho expansion} and the fact $\rho$ is centered,
we obtain
\begin{equation}\label{eq: energy eq}
\begin{aligned}
&\operatorname{Tr}
\bigl((-\log\rho)\Psi_s^*(\omega_{\mathsf h})\bigr)\\
&\quad=
\operatorname{Tr}\left[
(\omega_{\mathsf h}\otimes\rho)
U_s^*\bigl((-\log\rho)\otimes I_{\mathsf h'}\bigr)U_s
\right]\\
&\quad=
c+\frac12\sum_{j,k=1}^{2d}G_{jk}
\operatorname{Tr}\left[
(\omega_{\mathsf h}\otimes\rho)
U_s^*(R_jR_k\otimes I_{\mathsf h'})U_s
\right]\\
&\quad=
e^{-s}\left[
c+\frac12\operatorname{Tr}
\bigl(\omega_{\mathsf h}\mathbf R^{\mathsf T}G\mathbf R\bigr)
\right]
+(1-e^{-s})\left[
c+\frac12\operatorname{Tr}
\bigl(\rho\mathbf R^{\mathsf T}G\mathbf R\bigr)
\right]\\
&\quad=
e^{-s}\operatorname{Tr}
\bigl((-\log\rho)\omega_{\mathsf h}\bigr)
+(1-e^{-s})S(\rho)<\infty.
\end{aligned}
\end{equation}

The Gibbs entropy bound then yields
\[
S\bigl(\Psi_s^*(\omega_{\mathsf h})\bigr)
\leq
\operatorname{Tr}\bigl(\mathbf N\Psi_s^*(\omega_{\mathsf h})\bigr)
+\log\operatorname{Tr}(e^{- \mathbf N})
<\infty.
\]
Together with $\dim\mathsf k<\infty$, this ensures that the
input and output conditional entropies are finite.

Let $I(\mathsf h:\mathsf h'\mid\mathsf k)$ denote the quantum
conditional mutual information between $\mathsf h$ and
$\mathsf h'$ given $\mathsf k$, and let
$S(\mathsf h'\mid\mathsf k)$ denote the conditional von Neumann
entropy of $\mathsf h'$ given $\mathsf k$.
Subscripts specify the state in which these quantities are
evaluated.
The remaining notation is defined analogously; see
\cite[Sections~11.4 and~11.7]{WildeMark17}.
\[
\begin{aligned}
I(\mathsf h:\mathsf h'\mid\mathsf k)_{\omega\otimes\rho}=
S(\omega)
+S(\rho\otimes\omega_{\mathsf k})
-S(\omega_{\mathsf k})
-S(\omega\otimes\rho)=0.
\end{aligned}
\]
Moreover,
\[
\begin{aligned}
S(\mathsf h'\mid\mathsf k)_{\rho\otimes\omega_{\mathsf k}}
=
S(\rho\otimes\omega_{\mathsf k})-S(\omega_{\mathsf k})=S(\rho).
\end{aligned}
\]
For $s>0$, the linear conditional entropy power inequality
\cite[Theorem~6, Eq.~(106)]{DP18}, with its parameters
$\lambda,\eta$ both chosen equal to the transmissivity $e^{-s}$,
therefore gives
\[
S(\mathsf h\mid\mathsf k)_{
(\Psi_s^*\otimes\Id_{\mathsf k})(\omega)}
\geq
e^{-s}S(\mathsf h\mid\mathsf k)_\omega
+(1-e^{-s})S(\rho).
\]
Here
\[
\begin{aligned}
S(\mathsf h\mid\mathsf k)_\omega
&=S(\omega)-S(\omega_{\mathsf k}),\\
S(\mathsf h\mid\mathsf k)_{
(\Psi_s^*\otimes\Id_{\mathsf k})(\omega)}
&=
S\bigl((\Psi_s^*\otimes\Id_{\mathsf k})(\omega)\bigr)
-S(\omega_{\mathsf k}),
\end{aligned}
\]
because the channel leaves the marginal on $\mathsf k$ unchanged.

Combining with \eqref{eq: energy eq}, we have
\[
\begin{aligned}
&H\!\left(
(\Psi_s^*\otimes\Id_{\mathsf k})(\omega)
\,\middle\|\,
\rho\otimes\omega_{\mathsf k}
\right)\\
&\quad=
\operatorname{Tr}
\bigl((-\log\rho)\Psi_s^*(\omega_{\mathsf h})\bigr)
-
S(\mathsf h\mid\mathsf k)_{
(\Psi_s^*\otimes\Id_{\mathsf k})(\omega)}\\
&\quad\leq
e^{-s}\operatorname{Tr}
\bigl((-\log\rho)\omega_{\mathsf h}\bigr)
+(1-e^{-s})S(\rho)\\
&\qquad\quad
-e^{-s}S(\mathsf h\mid\mathsf k)_\omega
-(1-e^{-s})S(\rho)\\
&\quad=
e^{-s}\left[
\operatorname{Tr}
\bigl((-\log\rho)\omega_{\mathsf h}\bigr)
-S(\mathsf h\mid\mathsf k)_\omega
\right]\\
&\quad=
e^{-s}H(\omega\|\rho\otimes\omega_{\mathsf k}).
\end{aligned}
\]
This proves the estimate when $\mathsf k$ is finite-dimensional.
If $\omega_{\mathsf k}$ is not faithful, the logarithmic
identities are understood after restricting the auxiliary
space to the support of $\omega_{\mathsf k}$.

For a general separable $\mathsf k$, choose finite-rank projections
$P_n\uparrow I_{\mathsf k}$ and a unit vector
$e_0\in\operatorname{Ran}P_1$. The channels
\[
\Theta_n(\tau)
=
P_n\tau P_n
+
\operatorname{Tr}((I_{\mathsf k}-P_n)\tau)
|e_0\rangle\langle e_0|
\]
have finite-dimensional output and satisfy
\[
(\Id_{\mathsf h}\otimes\Theta_n)(\omega)
\longrightarrow\omega
\qquad\text{in trace norm}.
\]
Apply the finite-dimensional estimate to these joint densities.
By the data processing inequality, their input relative entropies with respect to $\rho\otimes\Theta_n(\omega_{\mathsf k})$ are bounded above by
$H(\omega\|\rho\otimes\omega_{\mathsf k})$.
The output densities and their corresponding reference densities
converge in trace norm, by trace-norm contractivity.
Joint lower semicontinuity of relative entropy therefore yields
the asserted estimate.
\end{proof}

\begin{lemma}
\label{lem:section4-gaussian-comparison}
Let $\{\Phi_t\}_{t\geq 0}$ be a finite-mode Gaussian QMS preserving a faithful
centered Gaussian density $\rho$, with coefficients
$\mathbf Z,\mathbf C$ and covariance $\Sigma$.
If $\bfC_{\bfZ}\geq\beta\mathbf F$ for some $\beta\geq0$,
then there is a normal $\rho$-preserving Gaussian quantum Markov semigroup
$\{\mathcal R_t\}_{t\geq 0}$ such that
\[
\Phi_t
=\mathcal R_t\circ\Psi_{\beta t}
=\Psi_{\beta t}\circ\mathcal R_t,
\]
where $\Psi$ is the reference semigroup of
Lemma~\ref{lem:section4-reference-entropy}. Consequently, for every
separable auxiliary Hilbert space $\mathsf k$,
\[
H\!\left(
(\Phi_t^*\otimes\Id_{\mathsf k})(\omega)
\,\middle\|\,
\rho\otimes\omega_{\mathsf k}
\right)\leq
e^{-\beta t}H(\omega\|\rho\otimes\omega_{\mathsf k}),
\]
whenever the relative entropy on the right is finite.
\end{lemma}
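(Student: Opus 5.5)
The plan is to write down the candidate semigroup $\mathcal R_t$ explicitly through its drift and diffusion matrices, check that it is a legitimate $\rho$-preserving Gaussian QMS, verify the factorization on Weyl operators, and then deduce the entropy estimate from Lemma~\ref{lem:section4-reference-entropy} together with data processing.

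\emph{Step 1: identify $\Psi_s$ as a Gaussian QMS.} Comparing with \eqref{eq:Tweyl}, $\Psi_s$ has drift $\mathbf Z_\Psi=-\tfrac12 I_{2d}$. Its diffusion is $\mathbf C_\Psi=\Sigma$, since $\int_0^s e^{-u}\Sigma\,du=(1-e^{-s})\Sigma$. Consequently $\Psi_{\beta t}$ has drift $-\tfrac\beta2 I$ and diffusion $\beta\Sigma$ in the time variable $t$.

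\emph{Step 2: define $\mathcal R_t$.} I set
\[
\mathbf Z_{\mathcal R}=\mathbf Z+\tfrac\beta2 I,\qquad \mathbf C_{\mathcal R}=\mathbf C-\beta\Sigma,
\]
and check three things.
\begin{itemize}
\item[(a)] Invariance: $\mathbf Z_{\mathcal R}^{\mathsf T}\Sigma+\Sigma\mathbf Z_{\mathcal R}+\mathbf C_{\mathcal R}=\mathbf Z^{\mathsf T}\Sigma+\Sigma\mathbf Z+\mathbf C=0$ by \eqref{eq:lyap}.
\item[(b)] Complete positivity: $\mathbf C_{\mathcal R}-i(\mathbf Z_{\mathcal R}^{\mathsf T}\mathbf J+\mathbf J\mathbf Z_{\mathcal R})=\bfC_{\bfZ}-\beta(\Sigma+i\mathbf J)=\bfC_{\bfZ}-\beta\mathbf F\ge0$ by hypothesis, which is \eqref{eq:CZ} for $\mathcal R$.
\item[(c)] Semigroup realization: the data $(\mathbf Z_{\mathcal R},\mathbf C_{\mathcal R})$ define, via \eqref{eq:Tweyl}, a Gaussian QMS. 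Concretely, I factor $\bfC_{\bfZ}-\beta\mathbf F$ as a Gram matrix to produce $U,V$ in \eqref{eq:ZUV}--\eqref{eq:CUV}, choose $O,K$ to match the remaining Hamiltonian part of the drift, and invoke the construction of \cite{FG26}.
\end{itemize}
Step (c) is the main technical obstacle, because the generator is unbounded. I would lean on the fact that the CP condition \eqref{eq:CZ} is exactly the criterion from \cite{FG26} for the Weyl-defined maps to form a normal unital CP semigroup. Normality and $\rho$-invariance then follow from (a) and \eqref{eq:Tweyl}.

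\emph{Step 3: factorization on Weyl operators.} Stationarity gives $\mathbf C_{\mathcal R,t}=\Sigma-e^{t\mathbf Z_{\mathcal R}^{\mathsf T}}\Sigma e^{t\mathbf Z_{\mathcal R}}$, as in the proof of Proposition~\ref{prop:second}. Computing $\mathcal R_t(\Psi_{\beta t}(W(\mathbf z)))$, the Weyl argument becomes $e^{t\mathbf Z_{\mathcal R}}e^{-\beta t/2}\mathbf z=e^{t\mathbf Z}\mathbf z$. The Gaussian exponent becomes
\[
(1-e^{-\beta t})\Sigma+e^{-\beta t}\bigl(\Sigma-e^{t\mathbf Z_{\mathcal R}^{\mathsf T}}\Sigma e^{t\mathbf Z_{\mathcal R}}\bigr)=\Sigma-e^{t\mathbf Z^{\mathsf T}}\Sigma e^{t\mathbf Z}=\mathbf C_t .
\]
Hence $\mathcal R_t\circ\Psi_{\beta t}=\Phi_t$ on Weyl operators. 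Because $-\tfrac\beta2 I$ commutes with every matrix, the same computation works in the reverse order, giving $\Psi_{\beta t}\circ\mathcal R_t=\Phi_t$. Both sides are normal, so they agree everywhere by weak$^*$ density of the span of Weyl operators.

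\emph{Step 4: entropy estimate.} On states, $\Phi_t^*=\Psi_{\beta t}^*\circ\mathcal R_t^*$. Put $\omega'=(\mathcal R_t^*\otimes\Id_{\mathsf k})(\omega)$. Since $\mathcal R_t^*$ is trace preserving, $\omega'_{\mathsf k}=\omega_{\mathsf k}$. Since $\mathcal R_t^*(\rho)=\rho$, the channel $\mathcal R_t^*\otimes\Id_{\mathsf k}$ fixes $\rho\otimes\omega_{\mathsf k}$, so data processing gives
\[
H(\omega'\|\rho\otimes\omega'_{\mathsf k})\le H(\omega\|\rho\otimes\omega_{\mathsf k})<\infty .
\]
Applying Lemma~\ref{lem:section4-reference-entropy} to $\omega'$ with $s=\beta t$ then gives the claimed bound with factor $e^{-\beta t}$.
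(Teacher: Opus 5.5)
Your proposal is correct and follows the paper's proof essentially step for step. The paper also defines $\mathcal R_t$ by the shifted coefficients $\mathbf Z+\frac{\beta}{2}I_{2d}$ and $\mathbf C-\beta\Sigma$, checks stationarity and $\bfC_{\bfZ}-\beta\mathbf F\geq0$, invokes the Gaussian complete-positivity criterion, verifies the factorization $\Phi_t=\mathcal R_t\circ\Psi_{\beta t}=\Psi_{\beta t}\circ\mathcal R_t$ on Weyl operators, and concludes with Lemma~\ref{lem:section4-reference-entropy} plus data processing using $\mathcal R_t^*(\rho)=\rho$; your Gram-matrix realization in Step~2(c) and the explicit covariance bookkeeping in Step~3 simply spell out what the paper leaves as ``direct composition''.
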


\begin{proof}
For $\beta=0$, the asserted entropy estimate follows from data
processing, and one may take $\mathcal R_t=\Phi_t$.
Assume henceforth that $\beta>0$.
The Gaussian coefficients
\[
\mathbf Z+\frac{\beta}{2}I_{2d},
\qquad
\mathbf C-\beta\Sigma
\]
satisfy
\[
\begin{split}
&(\mathbf C-\beta\Sigma)
-i\left[
\left(\mathbf Z+\frac{\beta}{2}I_{2d}\right)^*\mathbf J
+\mathbf J\left(\mathbf Z+\frac{\beta}{2}I_{2d}\right)
\right]\\
&\hspace{35mm}=\bfC_{\bfZ}-\beta\mathbf F\geq0,
\end{split}
\]
and
\[
\left(\mathbf Z+\frac{\beta}{2}I_{2d}\right)^*\Sigma
+\Sigma\left(\mathbf Z+\frac{\beta}{2}I_{2d}\right)
+(\mathbf C-\beta\Sigma)=0.
\]
The Gaussian complete-positivity criterion \eqref{eq:covariance},\eqref{eq:CZ} from Section~2 therefore
gives a normal $\rho$-preserving unital completely positive
Gaussian semigroup $\mathcal R_t$ with these coefficients.
Its Weyl action is
\[
\mathcal R_t(W(\mathbf z))
=
\exp\!\left[
-\frac12\mathbf z^{\mathsf T}
\left(
\Sigma-e^{\beta t}e^{t\mathbf Z^*}\Sigma e^{t\mathbf Z}
\right)\mathbf z
\right]\times
W(e^{\beta t/2}e^{t\mathbf Z}\mathbf z).
\]
Direct composition on Weyl operators yields
\[
\Phi_t
=
\mathcal R_t\circ\Psi_{\beta t}
=
\Psi_{\beta t}\circ\mathcal R_t.
\]
By Lemma~\ref{lem:section4-reference-entropy} and the data processing inequality, we have that 
\[
\begin{split}
&H\!\left(
(\Phi_t^*\otimes\Id_{\mathsf k})(\omega)
\,\middle\|\,
\rho\otimes\omega_{\mathsf k}
\right)\\
&\quad\leq
e^{-\beta t}
H\!\left(
(\mathcal R_t^*\otimes\Id_{\mathsf k})(\omega)
\,\middle\|\,
\rho\otimes\omega_{\mathsf k}
\right)\\
&\quad\leq
e^{-\beta t}H(\omega\|\rho\otimes\omega_{\mathsf k}).
\end{split}
\]
The reference state remains since $\mathcal R_t^*(\rho)=\rho$.
\end{proof}

\begin{lemma}\label{lem:section4-relative-gns-rate}
Under the assumptions of Lemma~\ref{lem:section4-fixed-splitting},
suppose $l<d$ and set
\[
\beta=
\lambda_{\min}\!\left(
\mathbf F_{\mathrm r}^{-1/2}
\bfC_{\bfZ,\mathrm r}
\mathbf F_{\mathrm r}^{-1/2}
\right).
\]
Then
\[
\lambda_{\mathrm{GNS}}(\Phi;\mathbb E_\Phi)=\frac{\beta}{2},
\]
where $\lambda_{\mathrm{GNS}}$ is the optimal GNS norm-decay rate
with prefactor one. Moreover,
\[
\beta
=\sup\left\{b\geq0:
\bfC_{\bfZ}\geq
b(\mathbf F-\mathbf P^*\mathbf F\mathbf P)\right\}=\inf_{\mathbf z\notin\ker_{\mathbb C}\mathbf Z}
\frac{\mathbf z^*\bfC_{\bfZ}\mathbf z}
{\operatorname{dist}_{\mathbf F}
(\mathbf z,\ker_{\mathbb C}\mathbf Z)^2}.
\]
\[
\beta>0
\quad \text{ if and only if }\quad
\ker_{\mathbb C}\bfC_{\bfZ}
=\ker_{\mathbb C}\mathbf Z.
\]
The rate identities also hold when $\beta=0$; no symmetry assumption
is required.
\end{lemma}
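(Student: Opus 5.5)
The plan is to mirror Proposition~\ref{prop:second}, but for the GNS inner product instead of the KMS one. By \eqref{eq: GNS isometry},
\[
\bigl\langle \widetilde W(\mathbf z),\widetilde W(\mathbf w)\bigr\rangle_{\mathrm{GNS},\rho}=e^{\mathbf z^{\mathsf T}\mathbf F\mathbf w},
\]
where $\mathbf F=\Sigma+i\mathbf J>0$ is a complex positive definite matrix. So I would define $\mathcal V\widetilde W(\mathbf z)=e_{\overline{\mathbf F}^{1/2}\mathbf z}$, adjusting the conjugation convention so that the exponential-vector kernel reproduces $e^{\mathbf z^{\mathsf T}\mathbf F\mathbf w}$. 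This gives a unitary from the GNS completion $L_2^{\mathrm{GNS}}(\rho)$ onto $\cF_s(\bC^{2d})$. Density of the span of the $\widetilde W(\mathbf z)$ and totality of real-parameter exponential vectors are handled exactly as in Proposition~\ref{prop:second}. Using \eqref{eq: exchange relation for Gaussian QMS}, we get
\[
\mathcal V\Phi_t\mathcal V^{-1}=\cF_c(T_t),\qquad T_t=\mathbf F^{1/2}e^{t\mathbf Z}\mathbf F^{-1/2}
\]
(with the matching conjugation). Contractivity of $\Phi_t$ in the GNS norm then forces $\|T_t\|\le1$.

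Next I would pass to the coordinates of Lemma~\ref{lem:section4-fixed-splitting}. There everything is block diagonal, so $\mathcal V$ factorizes as the tensor product of the corresponding unitaries for $\rho_0$ and $\rho_{\mathrm r}$, and $T_t=I_{2l}\oplus T_{t,\mathrm r}$ with $T_{t,\mathrm r}=e^{tA}$, where $A=\mathbf F_{\mathrm r}^{1/2}\mathbf Z_{\mathrm r}\mathbf F_{\mathrm r}^{-1/2}$. The conditional expectation $\mathbb E_\Phi=\Id\otimes\Tr(\rho_{\mathrm r}\,\cdot)1$ corresponds to $I\otimes|\Omega\rangle\langle\Omega|$ on the relaxing Fock factor, since $\Tr(\rho_{\mathrm r}\widetilde W_{\mathrm r}(\mathbf z))=1$ matches $\langle\Omega,e_{\cdot}\rangle=1$. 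Thus $\ker\mathbb E_\Phi$ corresponds to $L_2^{\mathrm{GNS}}(\rho_0)\otimes\bigoplus_{n\ge1}(\bC^{2d-2l})^{\otimes_s n}$. On this subspace $\cF_c(T_{t,\mathrm r})$ acts as $\bigoplus_{n\ge1}T_{t,\mathrm r}^{\otimes_s n}$, and its norm is $\sup_{n\ge1}\|T_{t,\mathrm r}\|^n=\|T_{t,\mathrm r}\|$ because $\|T_{t,\mathrm r}\|\le 1$. Hence $\lambda_{\mathrm{GNS}}$ is the largest $\lambda$ with $\|e^{tA}\|\le e^{-\lambda t}$ for all $t\ge0$.

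That reduces the rate to a one-particle question. From $\bfC_{\bfZ,\mathrm r}=-\mathbf Z_{\mathrm r}^*\mathbf F_{\mathrm r}-\mathbf F_{\mathrm r}\mathbf Z_{\mathrm r}$ we have
\[
A+A^*=-\mathbf F_{\mathrm r}^{-1/2}\bfC_{\bfZ,\mathrm r}\mathbf F_{\mathrm r}^{-1/2}.
\]
Differentiating $\|e^{tA}x\|^2\le e^{-2\lambda t}\|x\|^2$ at $t=0$ gives $A+A^*\le-2\lambda$. Conversely, $A+A^*\le -2\lambda$ together with Gronwall, as in Proposition~\ref{prop:CK}, gives the bound for all $t$. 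Therefore $\lambda_{\mathrm{GNS}}=\beta/2$, and this argument includes the case $\beta=0$.

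For the matrix identities I would work in the same real symplectic coordinates. All the quantities involved transform by real congruence: $\bfC_{\bfZ}$, $\mathbf F$, $\mathbf P^*\mathbf F\mathbf P$ with $\mathbf P\mapsto S^{-1}\mathbf PS$, and $\ker_{\bC}\mathbf Z$. Since $\mathbf Z_{\mathrm r}$ is Hurwitz, $\mathbf P=I_{2l}\oplus0$, so $\mathbf F-\mathbf P^*\mathbf F\mathbf P=0\oplus\mathbf F_{\mathrm r}$. The condition $\bfC_{\bfZ}\ge b(0\oplus\mathbf F_{\mathrm r})$ is then exactly $\bfC_{\bfZ,\mathrm r}\ge b\mathbf F_{\mathrm r}$, and the supremum of such $b$ is $\beta$.

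For the Rayleigh quotient, $\ker_{\bC}\mathbf Z=\bC^{2l}\oplus0$, because $\mathbf Z_{\mathrm r}$ is invertible. Its $\mathbf F$-orthogonal complement is $0\oplus\bC^{2d-2l}$ because $\mathbf F$ is block diagonal, so $\operatorname{dist}_{\mathbf F}(\mathbf z,\ker_{\bC}\mathbf Z)^2=\mathbf z_{\mathrm r}^*\mathbf F_{\mathrm r}\mathbf z_{\mathrm r}$. Also $\mathbf z^*\bfC_{\bfZ}\mathbf z=\mathbf z_{\mathrm r}^*\bfC_{\bfZ,\mathrm r}\mathbf z_{\mathrm r}$, so the infimum is the generalized Rayleigh quotient, which equals $\beta$.

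Finally, for positivity: $\beta>0$ if and only if $\bfC_{\bfZ,\mathrm r}>0$, since $\mathbf F_{\mathrm r}>0$. This holds if and only if $\ker_{\bC}\bfC_{\bfZ}=\bC^{2l}\oplus0=\ker_{\bC}\mathbf Z$; the inclusion $\ker\mathbf Z\subseteq\ker\bfC_{\bfZ}$ always holds, as shown in the proof of Lemma~\ref{lem:section4-fixed-splitting}.

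The step I expect to be the main obstacle is the first one. It requires setting up the GNS second-quantization unitary correctly with a complex, non-real $\mathbf F^{1/2}$: getting the conjugation convention right and checking that real-parameter exponential vectors are total. It also requires verifying that $\mathbb E_\Phi$ corresponds exactly to the vacuum projection in the relaxing factor. Without symmetry this is not an orthogonal projection for the KMS structure, so I would rely on the GNS invariance of $\rho_{\mathrm r}$ and on the tensor factorization to make the identification clean.
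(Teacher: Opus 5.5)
Your proposal is correct and follows the paper's proof essentially step for step: GNS second quantization via $\widetilde W(\mathbf z)\mapsto e_{\mathbf F^{1/2}\mathbf z}$, identification of $\ker\mathbb E_\Phi$ with $L_{2,\mathrm{GNS}}(\rho_0)\otimes\{1_{\mathrm r}\}^{\perp}$, reduction to the one-particle norm $\|\mathbf F_{\mathrm r}^{1/2}e^{t\mathbf Z_{\mathrm r}}\mathbf F_{\mathrm r}^{-1/2}\|$, and then the block-coordinate order, Rayleigh and kernel characterizations. (With the Fock inner product antilinear in the first slot, no conjugate is needed in that map, since $(\mathbf F^{1/2})^*\mathbf F^{1/2}=\mathbf F$ already reproduces $e^{\mathbf z^{\mathsf T}\mathbf F\mathbf w}$.) The only differences are cosmetic: you quantize the whole system before factorizing, and your $A+A^*\le-2\lambda$ plus Gronwall argument is the paper's equivalence $\bfC_{\bfZ,\mathrm r}\ge b\mathbf F_{\mathrm r}\iff e^{t\mathbf Z_{\mathrm r}^*}\mathbf F_{\mathrm r}e^{t\mathbf Z_{\mathrm r}}\le e^{-bt}\mathbf F_{\mathrm r}$ conjugated by $\mathbf F_{\mathrm r}^{1/2}$.
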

\begin{proof}
In the adapted coordinates of Lemma~\ref{lem:section4-fixed-splitting},
\[
\mathbf P=I_{2l}\oplus0,
\qquad
\bfC_{\bfZ}=0\oplus\bfC_{\bfZ,\mathrm r},
\qquad
\mathbf F-\mathbf P^*\mathbf F\mathbf P=0\oplus\mathbf F_{\mathrm r}.
\]
Since $\mathbf Z_{\mathrm r}$ is invertible and
$\mathbf P^*\mathbf F=\mathbf F\mathbf P$, the matrix $\mathbf P$
is the $\mathbf F$-orthogonal projection onto
$\ker_{\mathbb C}\mathbf Z=\mathbb C^{2l}\oplus\{0\}$.
Thus the Rayleigh principle gives
\[
\begin{aligned}
\beta
&=\inf_{\mathbf z_{\mathrm r}\neq0}
\frac{\mathbf z_{\mathrm r}^*\bfC_{\bfZ,\mathrm r}\mathbf z_{\mathrm r}}
{\mathbf z_{\mathrm r}^*\mathbf F_{\mathrm r}\mathbf z_{\mathrm r}}\\
&=\inf_{\mathbf z\notin\ker_{\mathbb C}\mathbf Z}
\frac{\mathbf z^*\bfC_{\bfZ}\mathbf z}
{\operatorname{dist}_{\mathbf F}(\mathbf z,\ker_{\mathbb C}\mathbf Z)^2},
\end{aligned}
\]
and, for $b\geq0$,
\[
\bfC_{\bfZ}\geq b(\mathbf F-\mathbf P^*\mathbf F\mathbf P)
\iff\bfC_{\bfZ,\mathrm r}\geq b\mathbf F_{\mathrm r}
\iff b\leq\beta.
\]
The same block decomposition and $\mathbf F_{\mathrm r}>0$ imply
\[
\beta>0\iff\bfC_{\bfZ,\mathrm r}>0
\iff\ker_{\mathbb C}\bfC_{\bfZ}=\ker_{\mathbb C}\mathbf Z.
\]
These characterizations are invariant under the coordinate change:
the three quadratic forms transform by the same congruence, and
the kernel and its $\mathbf F$-distance transform accordingly.

By \eqref{eq: GNS isometry} and the totality argument of
Proposition~\ref{prop:second}, the map
$U_{\mathrm r}\widetilde W(\mathbf z)=e_{\mathbf F_{\mathrm r}^{1/2}\mathbf z}$
for $\mathbf z\in\mathbb R^{2(d-l)}$
extends to a unitary from $L_{2,\mathrm{GNS}}(\rho_{\mathrm r})$
onto $\cF_s(\mathbb C^{2(d-l)})$.
The Weyl intertwining relation then gives
\[
U_{\mathrm r}\Phi_{t,\mathrm r}^{\mathrm{GNS}}U_{\mathrm r}^*
=\bigoplus_{n=0}^{\infty}
\left(\mathbf F_{\mathrm r}^{1/2}e^{t\mathbf Z_{\mathrm r}}
\mathbf F_{\mathrm r}^{-1/2}\right)^{\otimes_s n},
\]
where $\Phi_{t,\mathrm r}^{\mathrm{GNS}}$ is the GNS extension.
It is contractive because $\Phi_{t,\mathrm r}$ is unital, completely
positive, and preserves $\rho_{\mathrm r}$; hence its first-chaos
restriction is also contractive.
Since $U_{\mathrm r}1_{\mathrm r}=e_0$, it follows that
\[
\left\|\Phi_{t,\mathrm r}^{\mathrm{GNS}}
\big|_{\{1_{\mathrm r}\}^{\perp}}\right\|
=\sup_{n\geq1}
\left\|\mathbf F_{\mathrm r}^{1/2}e^{t\mathbf Z_{\mathrm r}}
\mathbf F_{\mathrm r}^{-1/2}\right\|_{\mathrm{op}}^n
=\left\|\mathbf F_{\mathrm r}^{1/2}e^{t\mathbf Z_{\mathrm r}}
\mathbf F_{\mathrm r}^{-1/2}\right\|_{\mathrm{op}}.
\]
By the product decomposition in Lemma~\ref{lem:section4-fixed-splitting},
\[
\ker\bE_\Phi=L_{2,\mathrm{GNS}}(\rho_0)\otimes
\{1_{\mathrm r}\}^{\perp}_{L_{2,\mathrm{GNS}}(\rho_{\mathrm r})}.
\]
Here $\bE_\Phi$ denotes its orthogonal projection extension to
$L_{2,\mathrm{GNS}}(\rho)$.
As $\Phi_t=\Id\otimes\Phi_{t,\mathrm r}$, its GNS extension
has exactly the centered relaxing norm on this kernel.

Finally, matrix differentiation gives
\[
\frac{d}{dt}\left(e^{t\mathbf Z_{\mathrm r}^*}\mathbf F_{\mathrm r}
e^{t\mathbf Z_{\mathrm r}}\right)
=-e^{t\mathbf Z_{\mathrm r}^*}\bfC_{\bfZ,\mathrm r}
e^{t\mathbf Z_{\mathrm r}}.
\]
For $b\geq0$, integrating this identity with an integrating factor,
and conversely differentiating the resulting inequality at $t=0$,
shows that
\[
\bfC_{\bfZ,\mathrm r}\geq b\mathbf F_{\mathrm r}
\iff
e^{t\mathbf Z_{\mathrm r}^*}\mathbf F_{\mathrm r}e^{t\mathbf Z_{\mathrm r}}
\leq e^{-bt}\mathbf F_{\mathrm r}\quad(t\geq0).
\]
The latter is equivalent to
$\|\mathbf F_{\mathrm r}^{1/2}e^{t\mathbf Z_{\mathrm r}}
\mathbf F_{\mathrm r}^{-1/2}\|_{\mathrm{op}}\leq e^{-bt/2}$
for all $t\geq0$.
The order characterization of $\beta$ therefore proves
$\lambda_{\mathrm{GNS}}(\Phi;\mathbb E_\Phi)=\beta/2$,
including optimality when $\beta=0$.
\end{proof}

\begin{lemma}[Complete entropy contraction implies GNS contraction]
\label{lem:section4-entropy-implies-gns}
Let $\rho$ be a faithful density operator on a separable Hilbert
space $\mathsf h$, and let
$\Phi:\mathcal B(\mathsf h)\to\mathcal B(\mathsf h)$ be a normal
unital completely positive map satisfying $\Phi^*(\rho)=\rho$.
Suppose that, for some $\eta\in[0,1]$ and $\mathsf k=\mathbb C^2$,
\[
H\!\left(
(\Phi^*\otimes\Id_{\mathsf k})(\omega)
\,\middle\|\,
\rho\otimes\omega_{\mathsf k}
\right)
\leq
\eta H(\omega\|\rho\otimes\omega_{\mathsf k})
\]
for every joint density $\omega$ for which the relative entropy
on the right is finite. Then
\[
\left\|
\Phi^{\mathrm{GNS}}\big|_{\{I\}^{\perp}}
\right\|^2\leq\eta.
\]
\end{lemma}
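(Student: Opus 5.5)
The plan is a linearization argument with a coherent qubit ancilla. The essential point is that the qubit lets off-diagonal blocks of the form $\rho X$ appear; a classical ancilla would only produce the Kubo--Mori (BKM) metric. I will look for a one-parameter family $\omega_\delta$ on $\mathsf h\otimes\mathbb C^2$, built from a centered $X$, such that as $\delta\to0$
\[
H(\omega_\delta\|\rho\otimes(\omega_\delta)_{\mathsf k})=\delta\,\|X\|_{\mathrm{GNS},\rho}^2+o(\delta),
\]
while the output entropy has leading coefficient $\|\Phi^\dagger(X)\|^2_{\mathrm{GNS},\rho}$. Here $\Phi^\dagger$ is the GNS adjoint, defined by $\Phi^*(\rho Y)=\rho\,\Phi^\dagger(Y)$. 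Dividing by $\delta$, letting $\delta\to0$, and taking the supremum over $X$ gives $\|\Phi^\dagger|_{\{I\}^\perp}\|^2\le\eta$. Since $\Phi^\dagger$ and $\Phi^{\mathrm{GNS}}$ are Hilbert-space adjoints on $L_{2,\mathrm{GNS}}(\rho)$ and both leave $\{I\}^\perp$ invariant (by unitality and $\Phi^*(\rho)=\rho$), the two restricted norms agree.

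\textbf{Step 1 (test family).} Fix $X$ in a dense class: finite matrix support in an eigenbasis of $\rho$, with $\Tr(\rho X)=0$. Put $R_\delta=I\otimes|0\rangle+\sqrt\delta\,X^*\otimes|1\rangle$ and
\[
\omega_\delta=\frac{R_\delta\rho R_\delta^*}{1+\delta\|X\|^2_{\mathrm{GNS},\rho}}
=\frac{1}{1+\delta\|X\|^2_{\mathrm{GNS},\rho}}\begin{pmatrix}\rho&\sqrt\delta\,\rho X\\ \sqrt\delta\,X^*\rho&\delta X^*\rho X\end{pmatrix}.
\]
Centering makes the off-diagonal traces vanish. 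Hence $(\omega_\delta)_{\mathsf k}$ is diagonal, with eigenvalues proportional to $1$ and $\delta\|X\|^2_{\mathrm{GNS},\rho}$. Applying $\Phi^*\otimes\Id$ replaces the blocks by $\rho$, $\sqrt\delta\,\rho\,\Phi^\dagger(X)$ and $\delta\,\Phi^*(X^*\rho X)$. The $\mathsf k$-marginal is unchanged, since $\Phi^*$ is trace preserving.

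\textbf{Step 2 (expansion).} I will write $H(\omega\|\rho\otimes\omega_{\mathsf k})=H(\omega\|\rho\otimes I)+S(\omega_{\mathsf k})$ and expand both terms in $\delta$. The marginal entropy has the form $-\delta\|X\|^2\log(\delta\|X\|^2)+\delta\|X\|^2+o(\delta)$. For the joint term, $\omega_\delta$ is a block matrix whose lower corner has size $\delta$. I will compute its spectrum and logarithm by a Schur-complement argument relative to the leading block $\rho\oplus0$.

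On the input side, the lower Schur complement vanishes identically, since $X^*\rho X-X^*\rho\,\rho^{-1}\rho X=0$. This should make the $\delta\log\delta$ terms cancel exactly and leave a leading term proportional to $\delta\|X\|^2_{\mathrm{GNS}}$.

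On the output side, the Schur complement $\Phi^*(X^*\rho X)-\Phi^\dagger(X)^*\rho\,\Phi^\dagger(X)\ge0$ is the Kadison--Schwarz defect of $\Phi^\dagger$. The leading term should combine $\|\Phi^\dagger X\|^2_{\mathrm{GNS}}$ with an entropy-like contribution from this defect. I will need to check whether the defect contribution has the favorable sign, or whether it can be removed by a further rescaling, for example replacing $\sqrt\delta$ by a different power in the off-diagonal corner only.

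\textbf{Main obstacle.} The hardest step will be making this second-order expansion rigorous and sign-correct. The difficulties are that $\mathsf h$ is infinite-dimensional, that $\rho^{-1}$ is unbounded, and that $\Phi^*(X^*\rho X)$ need not have finite support. I would first prove the expansion with $X$ of finite support, using the perturbation formula for $\log$ of a block matrix together with the integral representation $\log A=\int_0^\infty\bigl((1+s)^{-1}-(A+s)^{-1}\bigr)\,ds$. On the output side, where finite support fails, I would rely on lower semicontinuity of relative entropy, which gives the inequality in the direction I need. A density argument in $L_{2,\mathrm{GNS}}(\rho)$ then extends the bound to all centered $X$.
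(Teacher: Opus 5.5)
\textbf{The gap is in Step~2: the expansion you aim for is false, so the plan as written extracts no inequality.} Write $n=\Tr(X^*\rho X)$ and $c=(1+\delta n)^{-1}$. Since $\omega_\delta=c\,(R_\delta\rho^{1/2})(R_\delta\rho^{1/2})^*$, its nonzero spectrum is that of $c\,\rho^{1/2}(I+\delta XX^*)\rho^{1/2}$, which is analytic in $\delta$. The vanishing Schur complement therefore means the joint input entropy has \emph{no} $\delta\log\delta$ term at all. Nothing cancels the term $\delta n\log(1/\delta)$ coming from $S((\omega_\delta)_{\mathsf k})$. A direct computation in an eigenbasis $\rho f_i=r_if_i$ gives
\[
H\bigl(\omega_\delta\,\|\,\rho\otimes(\omega_\delta)_{\mathsf k}\bigr)
=\delta n\log\frac1\delta+\delta\Bigl(n-n\log n+\sum_{i,j}r_i|X_{ij}|^2\log\frac{r_i}{r_j}\Bigr)+o(\delta).
\]
Already for $\rho=I/2$ on $\mathbb C^2$ and $X=\sigma_z$, the state $\omega_\delta$ is classical--quantum with pure conditional states. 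There the left side is exactly the binary entropy of $\delta/(1+\delta)$. So dividing by $\delta$ and letting $\delta\to0$ gives $\infty\leq\infty$. Moreover, the $O(\delta)$ coefficient is not even homogeneous of degree two in $X$, so it cannot be a GNS norm. The question you leave open, about the sign of the defect term, cannot be settled at that order.

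The idea can be repaired by dividing by $\delta\log(1/\delta)$ instead. At that order the Kadison--Schwarz defect $D=\Phi^*(X^*\rho X)-\Phi^\dagger(X)^*\rho\,\Phi^\dagger(X)\geq0$ is handled exactly, with no rescaling of powers.

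\begin{itemize}
\item In finite dimensions, the small eigenvalues of the output state are $c\delta\lambda_k(D)+O(\delta^2)$, the eigenvalues of the Schur complement. They add $\delta\Tr(D)\log(1/\delta)+O(\delta)$ to the joint entropy.
\item Hence the output coefficient of $\delta\log(1/\delta)$ is $n-\Tr D=\Tr\bigl(\Phi^\dagger(X)^*\rho\,\Phi^\dagger(X)\bigr)$, by trace preservation.
\item With $A=\rho X$, this is precisely the paper's estimate \eqref{eq:section4-weighted-predual-contraction}, $\|\rho^{-1/2}\Phi^*(A)\|_2^2\leq\eta\|\rho^{-1/2}A\|_2^2$. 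The paper then concludes by Hilbert--Schmidt duality; your adjoint argument does the same job.
\end{itemize}

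The factor $\log(1/\delta)$ is the same mechanism the paper uses. There the test states have full-rank blocks $(1-q)\rho$, $q\rho$ and off-diagonal block $\varepsilon A$. One expands to second order in $\varepsilon$ at fixed $q$, and the resulting weight behaves like $\log(1/q)/r_i$ as $q\downarrow0$. Your saturated family merges these two limits into one. The price is that you need eigenvalue asymptotics for a nearly singular block matrix rather than a standard Hessian.

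Two further corrections:
\begin{itemize}
\item In infinite dimensions, the lower bound you need on the output relative entropy comes from data processing under a finite-rank compression such as $\Lambda_K$, whose extra coordinate only adds a nonnegative term. It does not come from lower semicontinuity, which bounds a limit from above.
\item $\Tr(X^*\rho X)$ is the GNS norm of $X^*$, not of $X$. Likewise, your left-sided $\Phi^\dagger$ is the GNS adjoint conjugated by $X\mapsto X^*$. This is harmless, since that antiunitary preserves $\{I\}^\perp$.
\end{itemize}
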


\begin{proof}
Choose an orthonormal eigenbasis $\rho f_i=r_i f_i$, with
$r_i>0$, and let $P_K$ be the projection onto
$\operatorname{span}\{f_1,\ldots,f_K\}$.
In finite dimension, take $P_K=I$ and omit the compression below.

Fix an operator $A$ with finite matrix support in this basis
and $\operatorname{Tr}A=0$.
For $q\in(0,1)$ and sufficiently small real $\varepsilon$,
\[
\omega=
\begin{pmatrix}
(1-q)\rho&\varepsilon A\\
\varepsilon A^*&q\rho
\end{pmatrix}
\]
is a density operator with marginals
$\omega_{\mathsf h}=\rho$ and
$\omega_{\mathsf k}=\operatorname{diag}(1-q,q)$.
For every sufficiently large $K$ such that $A=P_KAP_K$,
the states $\omega$ and $\rho\otimes\omega_{\mathsf k}$ are block diagonal
with respect to
\[
 (P_K\mathsf h\otimes\mathbb C^2)
 \oplus((I-P_K)\mathsf h\otimes\mathbb C^2).
\]
They agree on the second summand, and the reference state is
strictly positive on the finite-dimensional first summand.
Hence $H(\omega\|\rho\otimes\omega_{\mathsf k})<\infty$.

For every such $K$, consider the channel
\[
\Lambda_K(S)
=
P_KSP_K\oplus\operatorname{Tr}((I-P_K)S).
\]
Compression leaves the input relative entropy unchanged.
By data processing and the assumed contraction,
\[
H\!\left(
((\Lambda_K\circ\Phi^*)\otimes\Id_{\mathsf k})(\omega)
\,\middle\|\,
\Lambda_K(\rho)\otimes\omega_{\mathsf k}
\right)
\leq
\eta H(\omega\|\rho\otimes\omega_{\mathsf k}).
\]
Applying the finite-dimensional second-order expansion of relative
entropy, as in the proof of \cite[Lemma~2.3]{LWW26}, to the compressed
input and output gives the following inequality after division by
$\varepsilon^2$ and passage to $\varepsilon=0$.
The added output coordinate contributes a nonnegative quadratic term,
which we discard:
\[
\begin{aligned}
\sum_{i,j\leq K}
|(\Phi^*(A))_{ij}|^2
\frac{\log((1-q)r_i)-\log(qr_j)}
     {(1-q)r_i-qr_j}\leq
\eta\sum_{i,j}|A_{ij}|^2
\frac{\log((1-q)r_i)-\log(qr_j)}
     {(1-q)r_i-qr_j}.
\end{aligned}
\]
The quotients are understood by continuity when their
denominators vanish.

Since
\[
\lim_{q\downarrow0}
\frac{\log((1-q)r_i)-\log(qr_j)}
     {((1-q)r_i-qr_j)\log(1/q)}
=
\frac1{r_i},
\]
divide by $\log(1/q)$ and let $q\downarrow0$ for fixed $K$.
Then let $K\to\infty$ by monotone convergence, omitting this
last limit in finite dimension.
We obtain
\begin{equation}
\label{eq:section4-weighted-predual-contraction}
\begin{aligned}
\|\rho^{-1/2}\Phi^*(A)\|_2^2
&=
\sum_{i,j}\frac{|(\Phi^*(A))_{ij}|^2}{r_i}\leq
\eta\sum_{i,j}\frac{|A_{ij}|^2}{r_i}
=
\eta\|\rho^{-1/2}A\|_2^2.
\end{aligned}
\end{equation}
Here $\|\cdot\|_2$ is the Hilbert--Schmidt norm, and the
weighted expressions are interpreted through the displayed sums.

For $X\in\mathcal B(\mathsf h)$ with $\operatorname{Tr}(\rho X)=0$,
Hilbert--Schmidt duality gives
\[
\|X\|_{\mathrm{GNS},\rho}
=
\sup_{\substack{
A\text{ has finite matrix support}\\
\operatorname{Tr}A=0,\;
\|\rho^{-1/2}A\|_2\leq1}}
|\operatorname{Tr}(AX)|.
\]
Indeed, Cauchy--Schwarz gives the upper bound.
Without the finite-support restriction, equality is attained
at $A=\rho X^*/\|X\|_{\mathrm{GNS},\rho}$ when $X\neq0$.
For this $A$, the operators
\[
A_K
=
P_KAP_K-\operatorname{Tr}(P_KAP_K)
|f_1\rangle\langle f_1|
\]
have finite matrix support and trace zero, and satisfy
$\|\rho^{-1/2}(A_K-A)\|_2\to0$.
Rescaling them if necessary preserves the unit-norm constraint,
so the finite-support restriction does not change the supremum.

Since $\operatorname{Tr}(\rho\Phi(X))=0$, the same duality and
\eqref{eq:section4-weighted-predual-contraction} yield
\[
\begin{aligned}
\|\Phi(X)\|_{\mathrm{GNS},\rho}
&=
\sup_A|\operatorname{Tr}(A\Phi(X))|\\
&=
\sup_A|\operatorname{Tr}(\Phi^*(A)X)|\\
&\leq
\sup_A
\|\rho^{-1/2}\Phi^*(A)\|_2
\|X\rho^{1/2}\|_2\\
&\leq
\sqrt{\eta}\,\|X\|_{\mathrm{GNS},\rho}.
\end{aligned}
\]
The operator-norm conclusion follows by density of the centered
bounded operators in $\{I\}^{\perp}$.
\end{proof}

\begin{proof}[Proof of Theorem~\ref{thm:section4-relative-cmlsi}]
Let $\beta$ be as in Lemma~\ref{lem:section4-relative-gns-rate}.
The decomposition in Lemma~\ref{lem:section4-fixed-splitting}
and Gaussian comparison in Lemma~\ref{lem:section4-gaussian-comparison},
applied to the relaxing factor with auxiliary space
$\mathsf h_0\otimes\mathbb C^k$, give relative entropy contraction
at rate $\beta$. Indeed, after reordering tensor factors, the
reference density is
$(\mathbb E_\Phi^*\otimes\Id_k)(\omega)$.
Hence $\alpha_{\mathrm{CMLSI}}(\Phi;\mathbb E_\Phi)\geq\beta$.

Conversely, any admissible relative CMLSI rate $a\geq0$ applies
to $\rho_0\otimes\omega$, with $\omega$ a joint density on
$\mathsf h_{\mathrm r}\otimes\mathbb C^2$ and finite relative entropy.
Cancelling the common factor $\rho_0$ gives entropy contraction
for $\Phi_{t,\mathrm r}$ with two-dimensional auxiliary space
and factor $e^{-at}$.
Lemma~\ref{lem:section4-entropy-implies-gns} therefore gives
centered GNS norm contraction at rate $a/2$.
By Lemma~\ref{lem:section4-relative-gns-rate}, its optimal rate
is $\beta/2$, so $a\leq\beta$.
Taking the supremum over $a$ proves
\eqref{eq:section4-cmlsi-bound}; the matrix characterizations
and strict positivity criterion are exactly those of
Lemma~\ref{lem:section4-relative-gns-rate}.
\end{proof}

The same comparison gives the exact complete entropy contraction coefficient at each fixed time.
\begin{proposition}\label{prop:section4-fixed-time}
Under the assumptions of Lemma~\ref{lem:section4-fixed-splitting}, suppose that $l<d$.
For $t\geq0$, let $\eta_{\mathrm c}(\Phi_t;\mathbb E_\Phi)$ be the smallest $\eta\in[0,1]$ for which \eqref{eq:section4-cmlsi} holds at this time with $e^{-\beta t}$ replaced by $\eta$, for every finite auxiliary dimension and every input with finite relative entropy.
Then
\begin{equation}\label{eq:section4-fixed-time}
 \eta_{\mathrm c}(\Phi_t;\mathbb E_\Phi)
 =\left\|\mathbf F_{\mathrm r}^{1/2}e^{t\mathbf Z_{\mathrm r}}
 \mathbf F_{\mathrm r}^{-1/2}\right\|_{\mathrm{op}}^2.
\end{equation}
In particular, $\eta_{\mathrm c}(\Phi_t;\mathbb E_\Phi)<1$ for every $t>0$, even when $\alpha_{\mathrm{CMLSI}}=0$.
\end{proposition}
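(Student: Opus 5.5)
The plan is to prove the two inequalities separately and reduce both to the relaxing factor $\Phi_{t,\mathrm r}$. Throughout, set
\[
\eta_t:=\left\|\mathbf F_{\mathrm r}^{1/2}e^{t\mathbf Z_{\mathrm r}}\mathbf F_{\mathrm r}^{-1/2}\right\|_{\mathrm{op}}^2 .
\]
By Lemma~\ref{lem:section4-fixed-splitting} we have $\Phi_t=\Id\otimes\Phi_{t,\mathrm r}$, $\rho=\rho_0\otimes\rho_{\mathrm r}$ and $\mathbb E_\Phi^*(\omega)=\Tr_{\mathrm r}(\omega)\otimes\rho_{\mathrm r}$. Hence the relative contraction for $\Phi_t$ with auxiliary space $\mathbb C^k$ is equivalent to contraction of $\Phi_{t,\mathrm r}$ with reference $\rho_{\mathrm r}\otimes\omega_{\mathsf k'}$ and auxiliary space $\mathsf k'=\mathsf h_0\otimes\mathbb C^k$. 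The proof of Theorem~\ref{thm:section4-relative-cmlsi} already performs this reordering of tensor factors.

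\emph{Lower bound.} Suppose $\eta$ is admissible at time $t$. Take inputs $\rho_0\otimes\omega$ with $\omega$ a joint density on $\mathsf h_{\mathrm r}\otimes\mathbb C^2$, and cancel the common factor $\rho_0$ exactly as in the proof of Theorem~\ref{thm:section4-relative-cmlsi}. Lemma~\ref{lem:section4-entropy-implies-gns} applied to the single map $\Phi_{t,\mathrm r}$ then gives
\[
\eta\geq\left\|\Phi_{t,\mathrm r}^{\mathrm{GNS}}|_{\{1_{\mathrm r}\}^\perp}\right\|^2 .
\]
The second-quantization computation in the proof of Lemma~\ref{lem:section4-relative-gns-rate} identifies this norm with $\|\mathbf F_{\mathrm r}^{1/2}e^{t\mathbf Z_{\mathrm r}}\mathbf F_{\mathrm r}^{-1/2}\|_{\mathrm{op}}$. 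Therefore $\eta_{\mathrm c}\geq\eta_t$.

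\emph{Upper bound via a fixed-time factorization.} The aim is a fixed-time analogue of Lemma~\ref{lem:section4-gaussian-comparison}. We may assume $t>0$, since $\eta_0=1$ is trivial. Because $e^{t\mathbf Z_{\mathrm r}}$ is invertible, $\eta_t>0$, so we can define $s:=-\log\eta_t\in[0,\infty)$ and $A:=e^{s/2}e^{t\mathbf Z_{\mathrm r}}$. On the relaxing sector define the Gaussian map
\[
\mathcal R(W(\mathbf z))=\exp\!\left(-\tfrac12\mathbf z^{\mathsf T}(\Sigma_{\mathrm r}-A^{\mathsf T}\Sigma_{\mathrm r}A)\mathbf z\right)W(A\mathbf z).
\]
Composition on Weyl operators gives $\Psi_s\circ\mathcal R(W(\mathbf z))=\exp(-\tfrac12\mathbf z^{\mathsf T}\mathbf C_t\mathbf z)W(e^{t\mathbf Z_{\mathrm r}}\mathbf z)$, and stationarity gives $\mathbf C_t=\Sigma_{\mathrm r}-e^{t\mathbf Z_{\mathrm r}^{\mathsf T}}\Sigma_{\mathrm r}e^{t\mathbf Z_{\mathrm r}}$. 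Hence $\Phi_{t,\mathrm r}=\Psi_s\circ\mathcal R$. In the Schrödinger picture this reads $\Phi_{t,\mathrm r}^*=\mathcal R^*\circ\Psi_s^*$; the order should be double-checked, and the commuting order $\mathcal R\circ\Psi_s$ also holds, as in Lemma~\ref{lem:section4-gaussian-comparison}.

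The map $\mathcal R$ preserves $\rho_{\mathrm r}$ by construction. The key point is its complete positivity. The standard criterion for a single Gaussian channel $W(\mathbf z)\mapsto e^{-\frac12\mathbf z^{\mathsf T}Y\mathbf z}W(A\mathbf z)$ is $Y-i(\mathbf J-A^{\mathsf T}\mathbf JA)\geq0$, which is the fixed-time analogue of \eqref{eq:CZ}. For our $Y=\Sigma_{\mathrm r}-A^{\mathsf T}\Sigma_{\mathrm r}A$ this becomes
\[
\mathbf F_{\mathrm r}-A^{*}\mathbf F_{\mathrm r}A\geq0,
\]
that is, $\|\mathbf F_{\mathrm r}^{1/2}A\mathbf F_{\mathrm r}^{-1/2}\|^2=e^{s}\eta_t\leq1$, which holds with equality by the choice of $s$. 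Given this, Lemma~\ref{lem:section4-reference-entropy} with auxiliary space $\mathsf k'$ yields a factor $e^{-s}=\eta_t$. Data processing under $\mathcal R^*\otimes\Id$, which fixes $\rho_{\mathrm r}\otimes\omega_{\mathsf k'}$, loses nothing, so $\eta_{\mathrm c}\leq\eta_t$.

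I expect the main obstacle to be justifying complete positivity of the single-time map $\mathcal R$. Earlier results only state the criterion at the semigroup level. I would first cite the standard Gaussian-channel criterion in the form $\mathbf F_{\mathrm r}\geq A^*\mathbf F_{\mathrm r}A$. Failing that, I would realize $\mathcal R$ through a Stinespring dilation by a symplectic unitary on the system plus an environment in a Gaussian state, whose covariance is built from $\mathbf F_{\mathrm r}-A^*\mathbf F_{\mathrm r}A$.

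Finally, $\eta_t<1$ for $t>0$. The restriction $\mathbf Z_{\mathrm r}$ is Hurwitz and $\Phi_{t,\mathrm r}$ has the faithful invariant state $\rho_{\mathrm r}$. So Proposition~\ref{prop:control} (i)$\Rightarrow$(iii), applied to the relaxing semigroup, together with \eqref{eq:Fidentity}, gives
\[
\mathbf F_{\mathrm r}-e^{t\mathbf Z_{\mathrm r}^*}\mathbf F_{\mathrm r}e^{t\mathbf Z_{\mathrm r}}>0,
\]
hence $\eta_t<1$ for every $t>0$. This strict bound does not involve $\lambda_{\min}(\mathbf F_{\mathrm r}^{-1/2}\bfC_{\bfZ,\mathrm r}\mathbf F_{\mathrm r}^{-1/2})$, so it holds even when $\alpha_{\mathrm{CMLSI}}=0$.
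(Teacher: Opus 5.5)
Your proposal is correct and follows essentially the same route as the paper. The lower bound comes from Lemma~\ref{lem:section4-entropy-implies-gns} combined with the second-quantization norm identity. The upper bound comes from the fixed-time factorization $\Phi_{t,\mathrm r}=\mathcal R\circ\Psi_{-\log c}=\Psi_{-\log c}\circ\mathcal R$ with $\mathbf S=c^{-1/2}e^{t\mathbf Z_{\mathrm r}}$ (your $A$). There, complete positivity of $\mathcal R$ is taken, just as you propose, from the Gaussian criterion $\mathbf F_{\mathrm r}-\mathbf S^*\mathbf F_{\mathrm r}\mathbf S\geq0$. Both composition orders hold, so your hedge on the ordering is harmless. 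The strict bound for $t>0$ is obtained exactly as you describe, from Proposition~\ref{prop:control} and \eqref{eq:Fidentity}.
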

\begin{proof}
Fix $t\geq0$ and denote the right side of \eqref{eq:section4-fixed-time} by $c$.
The matrix exponential is invertible and contractive in the $\mathbf F_{\mathrm r}$ norm, so $0<c\leq1$.
Put $\mathbf S=c^{-1/2}e^{t\mathbf Z_{\mathrm r}}$. Then
\[
 \mathbf F_{\mathrm r}-\mathbf S^*\mathbf F_{\mathrm r}\mathbf S\geq0.
\]
The Gaussian complete-positivity criterion therefore gives a normal $\rho_{\mathrm r}$-preserving channel $\mathcal R$ with Weyl action
\[
 \mathcal R(W(\mathbf z))
 =\exp\!\left[-\frac12\mathbf z^{\mathsf T}
 (\Sigma_{\mathrm r}-\mathbf S^*\Sigma_{\mathrm r}\mathbf S)\mathbf z\right]
 W(\mathbf S\mathbf z).
\]
Direct composition with the reference semigroup of Lemma~\ref{lem:section4-reference-entropy}, for covariance $\Sigma_{\mathrm r}$, gives
\[
 \Phi_{t,\mathrm r}=\mathcal R\circ\Psi_{-\log c}
 =\Psi_{-\log c}\circ\mathcal R.
\]
That lemma and data processing give entropy contraction by $c$ on the relaxing factor, with any separable auxiliary system.
Taking the fixed sector together with $\mathbb C^k$ as the auxiliary system proves
$\eta_{\mathrm c}(\Phi_t;\mathbb E_\Phi)\leq c$.
Here $\mathcal R$ is a channel at the fixed time $t$; no semigroup property for these channels is needed.

Conversely, any admissible coefficient $\eta$ applies to inputs $\rho_0\otimes\omega$, where $\omega$ is a joint density on the relaxing factor and $\mathbb C^2$.
Cancelling the fixed factor and applying Lemma~\ref{lem:section4-entropy-implies-gns} gives
\[
 \eta\geq
 \left\|\Phi_{t,\mathrm r}^{\mathrm{GNS}}
 \big|_{\{1_{\mathrm r}\}^{\perp}}\right\|^2=c,
\]
where the equality follows from the second-quantization representation in Lemma~\ref{lem:section4-relative-gns-rate}.
This proves \eqref{eq:section4-fixed-time}.
Since $\mathbf Z_{\mathrm r}$ is Hurwitz, Proposition~\ref{prop:control} and \eqref{eq:Fidentity}, applied to the relaxing factor, give $c<1$ for $t>0$.
\end{proof}

\begin{corollary}\label{cor:section4-kernel-equilibration}
Suppose that $\{\Phi_t\}_{t\geq0}$ is a finite-mode Gaussian
quantum Markov semigroup with a faithful centered invariant
Gaussian density.
Then the normal $\rho$-preserving fixed-point conditional expectation $\mathbb E_\Phi$ exists, and
\[
\alpha_{\mathrm{CMLSI}}(\Phi;\mathbb E_\Phi)>0
\quad\text{if and only if}\quad
\ker_{\mathbb C}\bfC_{\bfZ}
=
\ker_{\mathbb C}\mathbf Z.
\]
Whenever these equivalent conditions hold,
$\displaystyle\lim_{t\to\infty}e^{t\mathbf Z}$ exists.
If the semigroup is not the identity, its optimal rate is given by
\eqref{eq:section4-cmlsi-bound} and is strictly positive.

If all modes are fixed, then $\Phi_t=\Id$ and
$\mathbb E_\Phi=\Id$; the relative entropy is identically
zero, and $\alpha_{\mathrm{CMLSI}}=+\infty$.
\end{corollary}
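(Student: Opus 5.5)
The proof splits into the case where $\lim_{t\to\infty}e^{t\mathbf Z}$ exists, where Theorem~\ref{thm:section4-relative-cmlsi} applies, and a separate argument for the general case.

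\textbf{Step 1: existence of $\mathbb E_\Phi$ without assuming convergence.} The key observation is that $e^{t\mathbf Z}$ is always bounded. By Proposition~\ref{prop:second} (and its GNS analogue in Lemma~\ref{lem:section4-relative-gns-rate}), $\mathbf F^{1/2}e^{t\mathbf Z}\mathbf F^{-1/2}$ is a contraction for all $t\geq0$, which is also visible from \eqref{eq:Fidentity}. Hence $\mathbf Z$ is similar to a matrix $\mathbf F^{1/2}\mathbf Z\mathbf F^{-1/2}$ with dissipative Hermitian part.

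This forces every eigenvalue to have nonpositive real part, with purely imaginary eigenvalues semisimple. So the spectral subspace $\mathcal N_{\mathrm{im}}$ for imaginary eigenvalues is $\mathbf F$-orthogonally complemented and $\mathbf Z$ acts on it $\mathbf F$-skew-adjointly. On that subspace $\bfC_{\bfZ}$ vanishes: differentiating $\|\mathbf F^{1/2}e^{t\mathbf Z}x\|$, which is constant there, gives $x^*\bfC_{\bfZ}x=0$.

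Repeating the symplectic-orthogonality argument of Lemma~\ref{lem:section4-fixed-splitting} with $\mathcal N$ replaced by the real part of $\mathcal N_{\mathrm{im}}$ splits the system as $\Phi_t=\Phi_{t,0}\otimes\Phi_{t,\mathrm r}$. Here $\Phi_{t,0}$ is a Gaussian unitary (quasi-free automorphism) group preserving $\rho_0$, and $\Phi_{t,\mathrm r}$ has Hurwitz drift. The fixed-point algebra is the commutant of the quadratic Hamiltonian on the first factor, tensored with $1_{\mathrm r}$. Since $\rho_0$ commutes with that Hamiltonian, a $\rho$-preserving conditional expectation onto it exists, by Takesaki's criterion or as the ergodic average of the $\rho$-preserving automorphism group. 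Tensoring with $\operatorname{Tr}(\rho_{\mathrm r}\,\cdot)$ gives $\mathbb E_\Phi$.

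\textbf{Step 2: ``if''.} Assume $\ker_{\mathbb C}\bfC_{\bfZ}=\ker_{\mathbb C}\mathbf Z$. A nonzero-frequency imaginary eigenvector $x$ would satisfy $\bfC_{\bfZ}x=0$ by Step 1 but $\mathbf Zx\neq0$, contradicting the hypothesis. So all imaginary eigenvalues are zero and semisimple, and $e^{t\mathbf Z}$ converges to the $\mathbf F$-projection onto the kernel. Then Theorem~\ref{thm:section4-relative-cmlsi} applies, and \eqref{eq:section4-positive-matrix-rate} gives a strictly positive rate whenever $l<d$, i.e.\ the semigroup is not the identity.

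\textbf{Step 3: ``only if'' — the main obstacle.} Assume $\alpha_{\mathrm{CMLSI}}>0$. If $e^{t\mathbf Z}$ converges, we are done by \eqref{eq:section4-positive-matrix-rate}, so the real work is to rule out nonzero imaginary eigenvalues.

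For this I would adapt Lemma~\ref{lem:section4-entropy-implies-gns} relative to $\mathbb E_\Phi$. The plan is to take inputs $\omega$ whose $\mathsf h$-marginal is $\rho$ but with off-diagonal block $\varepsilon A$, where $A=\rho^{1/2}$-weighted first-chaos element $\rho\,\widetilde W$-type vector in the oscillating sector with $\mathbb E_\Phi^*$-component zero. Concretely, $A$ corresponds in the GNS picture to a first-chaos eigenvector $x$ of frequency $\omega_0\neq0$. Such $x$ lies in $\ker\mathbb E_\Phi$, since the fixed algebra contains no first-chaos component at nonzero frequency.

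The same second-order expansion and $q\downarrow0$ limit, now with reference $(\mathbb E_\Phi^*\otimes\Id)(\omega)$, yields $\|\Phi_t(X)\|_{\mathrm{GNS}}\leq e^{-at/2}\|X\|_{\mathrm{GNS}}$ for $X\in\ker\mathbb E_\Phi$. The technical point is that the reference state now depends on $\omega$. Still, $\mathbb E_\Phi^*$ kills the chosen perturbation, so the reference is again $\rho\otimes\omega_{\mathsf k}$ up to $O(\varepsilon^2)$ corrections that vanish at second order after checking the linear term.

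On the other hand, $\Phi_t$ acts on that first-chaos vector isometrically in GNS norm, as multiplication by $e^{i\omega_0 t}$. This contradicts decay unless no such $x$ exists. Hence $e^{t\mathbf Z}$ converges, and we conclude via \eqref{eq:section4-positive-matrix-rate}.

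\textbf{Step 4: the trivial case.} If all modes are fixed, then $\mathbf Z=0$ and $\bfC_{\bfZ}=0$, so $\mathbf C=0$ and $\Phi_t=\Id$. The relative entropy is identically zero and every rate is admissible.
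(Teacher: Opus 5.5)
Your proposal is correct in outline, but Step 3 takes a genuinely different and heavier route than the paper. Steps 1, 2 and 4 follow the paper's logic. The only difference there is that the paper cites \cite[Theorems~11 and~20]{GP26} for the splitting and for the existence of $\mathbb E_\Phi$, while you rebuild the splitting from the $\mathbf F$-contractivity of $e^{t\mathbf Z}$. That reconstruction is sound: unimodular eigenvectors of a contraction reduce it, and $\bfC_{\bfZ}$ vanishes on the oscillating block. So $\mathbf Z$ is skew for both $\Sigma$ and $\mathbf J$ there, and the symplectic argument of Lemma~\ref{lem:section4-fixed-splitting} goes through.

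\textbf{How the two versions of Step 3 differ.} The paper does not linearize. In the noiseless factor with $H_0=\sum_j\omega_j\bfa_j^\dagger\bfa_j$, $\mathbb E_\Phi$ is the pinching onto the eigenspaces of $H_0$. The paper takes $\omega=\frac12|u+v\rangle\langle u+v|\otimes\rho_{\mathrm r}$ with $u,v$ of different energies. Its relative entropy to $\mathbb E_\Phi^*(\omega)$ is exactly $\log2$ for all $t$, which contradicts every positive rate already for $k=1$.

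You instead prove a relative version of Lemma~\ref{lem:section4-entropy-implies-gns}, namely GNS decay at rate $a/2$ on $\ker\mathbb E_\Phi$. You then contradict it with the undamped first-chaos eigenoperator $R(\mathbf z)$, $\mathbf Z\mathbf z=i\omega_0\mathbf z$, which lies in $\ker\mathbb E_\Phi$ because $\mathbb E_\Phi\circ\Phi_t=\mathbb E_\Phi$. This strategy is valid. It buys a more quantitative statement: a positive relative CMLSI forces a GNS gap on $\ker\mathbb E_\Phi$. The price is that it needs $\mathsf k=\mathbb C^2$ and an approximation argument that the paper's explicit witness avoids.

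\textbf{Three places where your sketch needs correction.}
\begin{enumerate}
\item The eigenoperator cannot itself be the off-diagonal block. For $A=\rho X^*$ with $X$ a field, $\rho^{-1/2}A\rho^{-1/2}=\rho^{1/2}X^*\rho^{-1/2}$ is unbounded, so the block matrix is not positive for any $\varepsilon>0$. You must run the lemma over all finite-support $A$ with $\mathbb E_\Phi^*(A)=0$ and then dualize, exactly as in its proof.
\item For such $A$, the reference $(\mathbb E_\Phi^*\otimes\Id)(\omega)$ equals $\rho\otimes\omega_{\mathsf k}$ exactly. There are no $O(\varepsilon^2)$ corrections to control.
\item The duality step needs finite-support elements of $\ker\mathbb E_\Phi^*$ to be dense in $\ker\mathbb E_\Phi^*$ for the norm $\|\rho^{-1/2}\cdot\|_2$. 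This is not automatic, because $\mathbb E_\Phi^*$ inserts the factor $\rho_{\mathrm r}$ and destroys finite support. Write $\mathbb E_\Phi^*(B)=D_0\otimes\rho_{\mathrm r}$, with $D_0$ of finite support in a joint eigenbasis of $\rho_0$ and $H_0$. Then replacing $B-D_0\otimes\rho_{\mathrm r}$ by $B-D_0\otimes\rho_{\mathrm r}^{(K)}$, for normalized finite truncations $\rho_{\mathrm r}^{(K)}$, repairs this.
\end{enumerate}

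Your own Step 1 already gives you the paper's shortcut. $\Phi_{t,0}^*$ is a unitary conjugation that commutes with the pinching. Hence $H(\Phi_t^*(\omega)\|\mathbb E_\Phi^*(\omega))$ is conserved for every $\omega=\omega_0\otimes\rho_{\mathrm r}$. Any $\omega_0$ that is not block diagonal for $H_0$, for example a pure superposition of two $H_0$-eigenvectors, then gives the contradiction directly.
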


\begin{proof}
The Gaussian decomposition and the trace-norm Ces\`aro convergence in
\cite[Theorems~11 and~20]{GP26} give a normal conditional expectation onto the fixed-point algebra.
It preserves $\rho$ because each $\Phi_t$ does.
Assume first that $\ker_{\mathbb C}\bfC_{\bfZ}=\ker_{\mathbb C}\mathbf Z$.
Since $\bfC_{\bfZ}\geq0$, stationarity gives
$e^{t\mathbf Z^*}\mathbf F e^{t\mathbf Z}\leq\mathbf F,
\; t\geq0.$
As $\mathbf F>0$, the semigroup $e^{t\mathbf Z}$ is uniformly
bounded. Hence every eigenvalue of $\mathbf Z$ has nonpositive
real part, and every eigenvalue on the imaginary axis is
semisimple.

If $\mathbf Z\mathbf z=i\omega\mathbf z$ for
$\omega\in\mathbb R$ and $\mathbf z\neq0$, then
$\mathbf z^*\bfC_{\bfZ}\mathbf z
=
-2\operatorname{Re}
\bigl(\mathbf z^*\mathbf F\mathbf Z\mathbf z\bigr)
=0$.
Positivity and the kernel assumption imply
$\bfC_{\bfZ}\mathbf z=0
\Longrightarrow
\mathbf Z\mathbf z=0,$
so $\omega=0$.
Thus every nonzero eigenvalue has strictly negative real part,
while zero, if present, is semisimple.
Consequently, $\lim\limits_{t\to\infty}e^{t\mathbf Z}$ exists.

If the relaxing sector is nonempty,
Theorem~\ref{thm:section4-relative-cmlsi} now gives
\[
\alpha_{\mathrm{CMLSI}}(\Phi;\mathbb E_\Phi)
=
\beta
=
\lambda_{\min}\!\left(
\mathbf F_{\mathrm r}^{-1/2}
\bfC_{\bfZ,\mathrm r}
\mathbf F_{\mathrm r}^{-1/2}
\right)>0.
\]
If all modes are fixed, then $\mathbf Z=0$ and stationarity
gives $\mathbf C=0$.
The Weyl formula yields
$\Phi_t=\mathbb E_\Phi=\Id$, so the relative entropy
vanishes and the contraction estimate holds for every
$\beta>0$.

Conversely, suppose that $\alpha_{\mathrm{CMLSI}}(\Phi;\mathbb E_\Phi)>0$.
If $e^{t\mathbf Z}$ does not converge, the decomposition in
\cite[Theorem~11]{GP26} contains a noiseless factor with Hamiltonian
$H_0=\sum_j\omega_j\bfa_j^\dagger\bfa_j$, where some $\omega_j\neq0$,
tensored with a relaxing Gaussian semigroup with invariant density $\rho_{\mathrm r}$.
By \cite[Theorem~20]{GP26}, the fixed-point expectation on this factor is the pinching onto the distinct energy eigenspaces of $H_0$.
Choose unit eigenvectors $u,v$ with different energies; one may take the vacuum and a one-particle vector in a mode with nonzero frequency.
For
\[
 \omega=\frac12|u+v\rangle\langle u+v|\otimes\rho_{\mathrm r},
 \qquad
 \mathbb E_\Phi^*(\omega)
 =\frac12\bigl(|u\rangle\langle u|+|v\rangle\langle v|\bigr)
 \otimes\rho_{\mathrm r},
\]
the noiseless evolution changes only the relative phase of $u$ and $v$.
Thus
\[
 H\!\left(\Phi_t^*(\omega)\,\middle\|\,\mathbb E_\Phi^*(\omega)\right)
 =\log2
 =H\!\left(\omega\,\middle\|\,\mathbb E_\Phi^*(\omega)\right)
 \qquad(t\geq0).
\]
If the relaxing factor is empty, its tensor factor is omitted.
This finite-entropy input contradicts every strictly positive rate in \eqref{eq:section4-cmlsi}, already for $k=1$.
Consequently, $e^{t\mathbf Z}$ converges.
If the semigroup is not the identity, Theorem~\ref{thm:section4-relative-cmlsi} now gives the kernel equality.
For the identity semigroup, $\mathbf Z=\mathbf C=0$, so the same equality holds.
\end{proof}

\begin{remark}\label{rem:section4-relative-scope}
Under the assumptions of
Theorem~\ref{thm:section4-relative-cmlsi},
the optimal rate cannot in general be replaced by the smallest
strictly positive eigenvalue of
$\mathbf F^{-1/2}\bfC_{\bfZ}\mathbf F^{-1/2}$.
Indeed,
\[
\ker_{\mathbb C}\bfC_{\bfZ}
\neq\ker_{\mathbb C}\mathbf Z
\quad\Longrightarrow\quad
\alpha_{\mathrm{CMLSI}}(\Phi;\mathbb E_\Phi)=0,
\]
even if this matrix has strictly positive eigenvalues.

The fixed-sector decomposition uses the actual limit
$\mathbf P=\lim_{t\to\infty}e^{t\mathbf Z}$.
Replacing it by a Ces\`aro limit does not, in general,
preserve the identification of the fixed-point algebra and
its conditional expectation, since persistent noiseless
rotations may then occur.
\end{remark}

The following two examples illustrate the distinction between
KMS norm decay and complete entropy contraction.
The first shows that a positive KMS norm-decay rate does not
guarantee a positive CMLSI constant, even for a KMS-symmetric
Gaussian semigroup.
The second exhibits a family of Gaussian semigroups with a
strictly positive optimal CMLSI rate, although KMS symmetry
fails when the rotation parameter is nonzero.
In this family, the optimal rate is unchanged by the added
Hamiltonian, which commutes with the invariant density.
\begin{example}\label{ex:anisotropic-boundary}
Let
\begin{equation}\label{eq:anisotropic-example}
\Sigma=\nu I_2,\qquad
\mathbf Z=-\operatorname{diag}(r_1,r_2),\qquad
\mathbf C=2\nu\operatorname{diag}(r_1,r_2),
\qquad
\nu>1,\quad r_1,r_2>0.
\end{equation}
Then
\[
\mathbf F=
\begin{pmatrix}
\nu&i\\
-i&\nu
\end{pmatrix}>0,
\]
and the quantum diffusion matrix is
\begin{equation}\label{eq:anisotropic-Q}
\bfC_{\bfZ}
=
\begin{pmatrix}
2\nu r_1&i(r_1+r_2)\\
-i(r_1+r_2)&2\nu r_2
\end{pmatrix}.
\end{equation}
Thus these coefficients define a Gaussian quantum Markov
semigroup if and only if
\begin{equation}\label{eq:anisotropic-cp}
4\nu^2r_1r_2\geq(r_1+r_2)^2.
\end{equation}
Assume that \eqref{eq:anisotropic-cp} holds.
Since $\mathbf Z$ is Hurwitz, the semigroup has a unique
faithful centered invariant Gaussian density $\rho$ with
covariance $\Sigma$.

The semigroup is KMS-symmetric, since
\[
\Sigma_{\mathrm{KMS}}=\sqrt{\nu^2-1}\,I_2,
\qquad
\mathbf Z=\mathbf Z^{\mathsf T}.
\]
It is GNS-symmetric if and only if $r_1=r_2$.
The optimal KMS norm-decay rate is $\min(r_1,r_2)$;
equivalently,
\[
\gamma_{\mathrm{KMS}}=2\min(r_1,r_2).
\]
By Theorem~\ref{thm:section4-relative-cmlsi}, the exact CMLSI
constant is
\begin{equation}\label{eq:anisotropic-rate}
\alpha_{\mathrm{CMLSI}}(\Phi;\mathbb E_\Phi)
=
r_1+r_2-
\frac{\nu}{\sqrt{\nu^2-1}}|r_1-r_2|.
\end{equation}
Indeed,
\[
\begin{aligned}
\det(\bfC_{\bfZ}-c\mathbf F)
&=
\nu^2(2r_1-c)(2r_2-c)-(r_1+r_2-c)^2\\
&=
(\nu^2-1)(r_1+r_2-c)^2
-\nu^2(r_1-r_2)^2.
\end{aligned}
\]
The smaller root is \eqref{eq:anisotropic-rate}, and
\eqref{eq:anisotropic-cp} is equivalent to its nonnegativity.
In the isotropic case $r_1=r_2$, the optimal CMLSI constant
is $2r_1$.
At the nonisotropic completely positive boundary, it is zero
despite a strictly positive KMS norm-decay rate.
\end{example}

\begin{example}
Take
\begin{equation}\label{eq:rotating-example}
\Sigma=2I_2,\qquad
\mathbf Z_\omega=
\begin{pmatrix}
-1&\omega\\
-\omega&-2
\end{pmatrix},\qquad
\mathbf C=
\begin{pmatrix}
4&0\\
0&8
\end{pmatrix},
\qquad
\omega\in\mathbb R.
\end{equation}
Stationarity follows from
\[
\mathbf Z_\omega^{\mathsf T}\Sigma
+\Sigma\mathbf Z_\omega+\mathbf C=0.
\]
Moreover,
\[
\mathbf Z_\omega+\mathbf Z_\omega^{\mathsf T}
=
-2\operatorname{diag}(1,2)<0,
\]
so $\mathbf Z_\omega$ is Hurwitz.
For every $\omega\in\mathbb R$,
\[
\mathbf F=
\begin{pmatrix}
2&i\\
-i&2
\end{pmatrix},
\qquad
\bfC_{\bfZ}=
\begin{pmatrix}
4&3i\\
-3i&8
\end{pmatrix}>0.
\]

Writing $a=a_1$, $Q=Q_1$, and $P=P_1$, a concrete GKSL
representation is
\begin{equation}\label{eq:rotating-gksl}
H=-\frac14(QP+PQ)-\omega a^\dagger a,\qquad
L_1=\frac34Q+2iP,\qquad
L_2=\frac{\sqrt{23}}4Q.
\end{equation}
The Lindblad operators give the dissipative Weyl-label drift
$-3I_2/2$ and diffusion $\operatorname{diag}(4,8)$.
The Hamiltonian contributes the Weyl-label drift $
\operatorname{diag}(1/2,-1/2)+\omega\mathbf J$,
yielding the coefficients in \eqref{eq:rotating-example}.

The unique invariant density is
$\rho=\frac23\,3^{-a^\dagger a}$,
whose covariance is
$\coth((\log3)/2)I_2=2I_2$.
Since
$\Sigma_{\mathrm{KMS}}=\sqrt3\,I_2$,
the semigroup is KMS-symmetric if and only if $\omega=0$.

A direct computation gives
$\det(\bfC_{\bfZ}-c\mathbf F)
=
3(3-c)^2-4$.
Hence Theorem~\ref{thm:section4-relative-cmlsi} gives, for every
$\omega\in\mathbb R$,
\begin{equation}\label{eq:rotating-rate}
\alpha_{\mathrm{CMLSI}}(\Phi;\mathbb E_\Phi)
=
3-\frac2{\sqrt3}
\approx1.8452994616.
\end{equation}
The additional Hamiltonian $-\omega a^\dagger a$ commutes
with $\rho$, so it contributes no relative entropy production.
Thus the optimal rate is independent of $\omega$, even though
the rotation need not commute with the anisotropic damping.
\end{example}

\section{Hypercontractivity}
In this section, we shall obtain hypercontractivity for thermal Gaussian quantum Markov semigroup and then obtain the hypercontractivity for general one.

By taking the drift $\bfZ=-I_{2d}/2$ and the diffusion $\bfC=\Sigma$, we define a quantum Markov semigroup $\Psi_s$ as follows:
\begin{equation}\label{eq:thermal-ref}
 \Psi_s(W(\mathbf z))=
 \exp\left(-\frac12(1-e^{-s})\mathbf z^{\mathsf{T}}\Sigma\mathbf z\right)
 W(e^{-s/2}\mathbf z).
\end{equation}
Then it is a GNS-symmetric Gaussian quantum Markov semigroup with faithful invariant state $\rho$. 
In Williamson coordinates it is the product of one-mode thermal semigroups with the same covariance-relaxation rate one, but possibly different temperatures.

For the one-mode case and a general rate $\gamma>0$, the Lindbladian operator is
\begin{equation}\label{eq:thermal-L}
 -\cL_{\gamma,\beta}(X)
 =\gamma\frac{e^{\beta}}{e^\beta-1} \left(\bfa^\dag X\bfa-\frac{1}{2}\{\bfa^*\bfa,X\}\right)
 +\gamma \frac{1}{e^\beta-1} \left(\bfa X\bfa^\dag-\frac{1}{2}\{\bfa\bfa^{\dag},X\}\right).
\end{equation}
The corresponding matrices $\bfZ, \Sigma, \bfC$ are 
\begin{align*}
 \bfZ=-\frac\gamma2I_2,\qquad
 \Sigma=\coth(\beta/2)I_2,\qquad \bfC=\gamma\Sigma.
\end{align*}
In particular, $e^{-t\cL_{\gamma,\beta}}(\bfa)=e^{-\gamma t/2}\bfa$, understood in the corresponding $L_2(\rho_\beta)$ realization.

In the following, we shall obtain the hypercontractivity of the thermal Gaussian quantum Markov semigroup.
\begin{proposition}\label{thm:thermal}
Suppose that $\{\Psi_s\}_{s\geq 0}$ is a Gaussian quantum Markov semigroup given in \eqref{eq:thermal-ref}.
Then there is an explicit constant $\hc>0$ such that, for all $1<p\leq q<\infty$,
\begin{equation}\label{eq:thermal-hc}
\|\Psi_s\|_{p\to q,\rho}\leq 1
\end{equation}
whenever $\displaystyle s\geq\hc^{-1}\log\frac{q-1}{p-1}.$
\end{proposition}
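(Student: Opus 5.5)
The plan is to reduce to a product of one-mode thermal semigroups, prove hypercontractivity mode by mode, and then tensorize.

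\emph{Reduction to Williamson form.} By \eqref{eq:Williamson} there is a Gaussian unitary $U_{\mathbf M}$ implementing the symplectic matrix $\mathbf M$. Under $U_{\mathbf M}$, $\rho$ becomes $\bigotimes_j\rho_{\beta_j}$ with $\beta_j$ given by \eqref{eq:beta}. Since the drift $-I_{2d}/2$ is a multiple of the identity, it commutes with $\mathbf M$, and $\Sigma$ becomes $D_s$. Hence \eqref{eq:thermal-ref} is conjugated to $\bigotimes_{j=1}^d e^{-s\cL_{1,\beta_j}}$, with $\cL_{\gamma,\beta}$ as in \eqref{eq:thermal-L} at $\gamma=1$. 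The weighted norms $\|\cdot\|_{p,\rho}$ are invariant under simultaneous unitary conjugation of state and dynamics, as noted after \eqref{eq:Williamson decomposition}. So it suffices to prove \eqref{eq:thermal-hc} for this product, with
\[
\hc=\min_j h(\beta_j),
\]
where $h(\beta)$ is an explicit one-mode rate. Because $\Psi_s$ is GNS-symmetric, the Olkiewicz--Zegarlinski theorem \cite{OZ99} converts hypercontractivity along $q(s)-1=e^{\hc s}(p-1)$ into a logarithmic Sobolev inequality with constant of order $\hc$, and conversely. I will therefore work at the level of log-Sobolev inequalities.

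\emph{One mode.} For a single thermal mode I would use the Carbone--Sasso hypercontractivity of the quantum Ornstein--Uhlenbeck semigroup \cite{CS08}, or the explicit one-mode log-Sobolev bound of Beigi--Rahimi-Keshari \cite{BRK25}. Either gives an explicit $h(\beta)>0$ depending continuously on $\beta$, with the time normalization fixed by $e^{-s\cL_{1,\beta}}(\bfa)=e^{-s/2}\bfa$. If those bounds are only stated for $p=2\to q$, I would extend them to the whole range $1<p\le q$ via \cite{OZ99}, using GNS symmetry and the Stroock--Varopoulos inequality in the form of \cite{BDR20}.

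\emph{Tensorization (expected main obstacle).} Noncommutative $L_p\to L_q$ norms do not tensorize for free. What is needed is the \emph{complete} (matrix-valued) version of the one-mode log-Sobolev inequality, namely
\[
\|\mathrm{id}_{M_n}\otimes e^{-s\cL_{1,\beta}}\|_{p\to q}\le 1
\]
uniformly in $n$. Granting that, the product estimate follows by writing $\Psi_s$ as a composition of maps each acting on one mode and applying the complete estimate at each step with the other modes as the auxiliary system. Because all factors share the same curve $q(s)$, the constants do not degrade and one obtains $\hc=\min_j h(\beta_j)$.

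To obtain the complete one-mode inequality, my first attempt would be the Gaussian comparison idea of Lemma~\ref{lem:section4-gaussian-comparison}. The one-mode thermal channel is a beam splitter with a thermal environment, as in the proof of Lemma~\ref{lem:section4-reference-entropy}. An auxiliary matrix factor is then a passive spectator, so the Carbone--Sasso argument, which proceeds through the number-diagonal (classical birth--death) part plus a Stroock--Varopoulos/regularity step, should run verbatim with operator coefficients. If this fails, the fallback is to prove the one-mode complete log-Sobolev inequality on finite number truncations and pass to the limit on the dense finite-support subspace described in the $L_p$ subsection.
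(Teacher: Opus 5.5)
Your reduction to Williamson form and your one-mode input match the paper, but there is a genuine gap in the tensorization step, and as formulated that step cannot work. The maps you compose, $\mathrm{id}\otimes e^{-s\cL_{1,\beta_j}}\otimes\mathrm{id}$, act as the identity on observables supported on the other modes. Those modes carry the \emph{state} weights $\rho_k$, not a trace. Take $X=A\otimes 1$ with $A\geq0$ nonscalar and diagonal in the number basis of mode $k$. Then for $q>p$ you get $\np{X}{q}=\norm{A}_{q,\rho_k}>\norm{A}_{p,\rho_k}=\np{X}{p}$, so no single-mode factor is an $L_p(\rho)\to L_q(\rho)$ contraction.

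A complete estimate for $\mathrm{id}_{M_n}\otimes e^{-s\cL_{1,\beta}}$ does not repair this. With the normalized trace on $M_n$ it fails trivially, since $e_{11}\otimes 1$ gives ratio $n^{1/p-1/q}$. With the unnormalized trace it does not apply to a state-weighted auxiliary: moving the auxiliary weight from $\rho_k^{1/(2p)}$ to $\rho_k^{1/(2q)}$ costs the unbounded factor $\rho_k^{1/(2q)-1/(2p)}$.

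Classical tensorization does not compose full $p\to q$ bounds in this way. It runs through a mixed-norm bound $L_p(L_p)\to L_p(L_q)$, obtained by applying the semigroup in one variable with the other frozen, followed by Minkowski's inequality $L_p(L_q)\subset L_q(L_p)$. The noncommutative analogue would need vector-valued (Pisier) $L_p$ spaces and a cb-hypercontractivity statement for the bosonic thermal semigroup, which you do not have. The one-mode complete estimate is itself only asserted (``should run verbatim''), and that is exactly the hard part.

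The missing idea is that tensorization can be avoided altogether. The price is a dimension-dependent constant, which is all the statement requires. The paper proceeds in three steps:
\begin{enumerate}[(i)]
\item It takes the genuinely multi-mode log-Sobolev inequality of \cite[Theorem~10]{BRK25} for the reweighted generator $\widehat{\cL}=\sum_j2\sinh(\beta_j/2)\cL_j$, whose $2$-log-Sobolev constant is at least $a_d$ in \eqref{eq:ahat}. Note the $\log(2d+1)$ there.
\item It compares Dirichlet forms, $\mathcal E_{2,\widehat{\cL}}\leq 2\sinh(\beta_+/2)\,\mathcal E_{2,\sum_j\cL_j}$, to get $\alpha_2(\sum_j\cL_j)\geq a_d/(2\sinh(\beta_+/2))$.
\item It uses GNS symmetry with \cite{OZ99,BDR20} to pass from $\alpha_2$ to the full curve, giving $\hc=2a_d/\sinh(\beta_+/2)$.
\end{enumerate}
Your target $\hc=\min_j h(\beta_j)$ would be a dimension-free strengthening. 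The paper does not claim it, and your argument does not establish it.
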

\begin{proof}
For one mode, rescale the generator in \cite[Theorem 4 and Corollary 5]{BRK25} to covariance relaxation rate one. This gives the constant in \eqref{eq:h-one}.

For the multi-mode case, let $\cL_j$ denote the positive generator of the $j$th unit-rate thermal factor, so that $\Psi_s=e^{-s\sum_j\cL_j}$.
In the Heisenberg picture, the generator in \cite[Eq.~(38) and Theorem 10]{BRK25} is
\[
 \widehat{\cL}=\sum_{j=1}^d2\sinh(\beta_j/2)\cL_j,
\]
and its $2$-logarithmic Sobolev constant is at least $a_d$ from \eqref{eq:ahat}.
Since each mode has a nonnegative Dirichlet form and $\beta_+=\max_j\beta_j$,
\[
 \mathcal E_{2,\widehat{\cL}}(X)
 =\sum_{j=1}^d2\sinh(\beta_j/2)\mathcal E_{2,\cL_j}(X)
 \leq2\sinh(\beta_+/2)\mathcal E_{2,\sum_j\cL_j}(X).
\]
Consequently,
\[
 \alpha_2\Bigl(\sum_j\cL_j\Bigr)
 \geq\frac{a_d}{2\sinh(\beta_+/2)}.
\]
The log Sobolev--hypercontractivity implication in \cite{OZ99,BDR20}, together with GNS symmetry, therefore gives the unit-rate hypercontractivity constant
$\hc=2a_d/\sinh(\beta_+/2)$ in \eqref{eq:h-multi}.
This completes the proof of the proposition.
\end{proof}

\begin{remark}
In the one-mode case, the explicit constant $\hc$ can be taken as
\begin{equation}\label{eq:h-one}
 \hc=\frac4\beta\tanh(\beta/4).
\end{equation}
The generator in \cite[Theorem 4 and Corollary 5]{BRK25} has covariance relaxation rate $2\sinh(\beta/2)$; dividing by this rate gives \eqref{eq:h-one}.
\end{remark}

\begin{remark}
For the multi-mode case, we follow the notations from \cite[Theorem 10]{BRK25} and take
\begin{equation}\label{eq:ahat}
 a_d=\left[
 \frac{2+\log(2d+1)}{\sinh(\beta_-/2)}
 +\frac{\beta_-}{4\sinh^2(\beta_-/4)}
 \right]^{-1},
\end{equation}
where $\displaystyle \beta_-=\min_j\beta_j$, $\displaystyle \beta_+=\max_j\beta_j$.
A possible choice for $\hc$ is
\begin{equation}\label{eq:h-multi}
 \hc=\frac{2a_d}{\sinh(\beta_+/2)}.
\end{equation}
When $\beta_j$ are equal to $\beta$, we see that 
\begin{equation}\label{eq:h-equal}
 \hc=\frac4{2(2+\log(2d+1))+\beta\coth(\beta/4)}.
\end{equation}
\end{remark}

In the following, we shall obtain the exact one-mode thermal hypercontractivity time.
A constant-rate curve need not give the best sufficient time for a fixed pair of exponents. Define
\begin{equation}\label{eq:f-beta}
 f_\beta(r)=
 \frac{\sinh\bigl(\beta(r-1)/(2r)\bigr)}{\sinh\bigl(\beta/(2r)\bigr)},
 \qquad r>1.
\end{equation}
It is increasing and $f_\beta(2)=1$.
\begin{proposition}\label{prop:thermal-profile}
For the thermal semigroup $\Phi_t=e^{-t\cL_{\gamma,\beta}}$
in \eqref{eq:thermal-L}, and $1<p\leq q<\infty$, we have
\begin{equation}\label{eq:thermal-profile}
 \norm{\Phi_t}_{p\to q,\rho_\beta}
 =\begin{cases}
 +\infty,&\displaystyle 0\leq t<\frac1\gamma
       \log\frac{f_\beta(q)}{f_\beta(p)},\\[4pt]
 1,&\displaystyle t\geq\frac1\gamma
       \log\frac{f_\beta(q)}{f_\beta(p)}.
 \end{cases}
\end{equation}
\end{proposition}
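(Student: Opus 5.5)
The plan is to reduce the weighted problem to an unweighted Schatten-norm problem for a single gauge-covariant Gaussian operation, and then show that thermal (geometric) inputs are optimal. The trivial lower bound comes first. Since $\Phi_t(1)=1$ and $\norm{1}_{r,\rho_\beta}=1$ for every $r$, we have $\norm{\Phi_t}_{p\to q,\rho_\beta}\geq 1$ always. Since $\Phi_t$ is completely positive, the weighted $p\to q$ norm is attained on positive operators (the standard Pisier/Watrous-type reduction for completely positive maps between weighted $L_p$ spaces, as used in \cite{BDR20,OZ99}). So it suffices to consider $X\geq0$.

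Write $Y=X_p=\rho_\beta^{1/2p}X\rho_\beta^{1/2p}$, so that
\[
\norm{\Phi_t(X)}_{q,\rho_\beta}=\bigl\|\rho_\beta^{1/2q}\,\Phi_t\bigl(\rho_\beta^{-1/2p}Y\rho_\beta^{-1/2p}\bigr)\rho_\beta^{1/2q}\bigr\|_q .
\]
The map $\Theta_t:Y\mapsto\rho_\beta^{1/2q}\Phi_t(\rho_\beta^{-1/2p}Y\rho_\beta^{-1/2p})\rho_\beta^{1/2q}$ is a composition of conjugations by powers of $e^{-\beta\bfa^\dag\bfa}$ with a gauge-covariant one-mode Gaussian channel. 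It is therefore a completely positive, gauge-covariant Gaussian operation, possibly unbounded, which acts on Gaussian kernels and commutes with phase rotations $e^{i\theta\bfa^\dag\bfa}$. The key input is the Gaussian-optimizer theorem for gauge-covariant Gaussian operations (De Palma--Trevisan--Giovannetti, and in the form for Schatten $p\to q$ norms of such operations, including the $\rho^{s}$-conjugated ones). It says that $\sup_{Y\geq0}\|\Theta_t Y\|_q/\|Y\|_p$ is approached along thermal operators $Y=c^{\bfa^\dag\bfa}$, $0\leq c<1$. Equivalently, in the $X$ variable, along $X_c=c^{\bfa^\dag\bfa}$ for $c$ in the range where $X_p$ is trace class.

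On this one-parameter family everything is explicit. Using the Weyl form of $\Phi_t$ (drift $e^{-\gamma t/2}$, noise $(1-e^{-\gamma t})\coth(\beta/2)$), one checks that $\Phi_t$ maps $c^{\bfa^\dag\bfa}$ to $\kappa_t(c)\,m_t(c)^{\bfa^\dag\bfa}$, with explicit Möbius-type maps $m_t$ and prefactors $\kappa_t$. It is convenient to parametrize $c=e^{-\beta u}$, or by a hyperbolic angle, in which $m_t$ becomes a flow. Since
\[
\norm{c^{\bfa^\dag\bfa}}_{r,\rho_\beta}=(1-e^{-\beta})^{1/r}(1-c^re^{-\beta})^{-1/r},
\]
the ratio $\norm{\Phi_t X_c}_{q}/\norm{X_c}_{p}$ is an elementary function $R_t(c)$. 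The plan is to show the following.
\begin{enumerate}[(a)]
\item $\sup_c R_t(c)=+\infty$ exactly when $e^{\gamma t}<f_\beta(q)/f_\beta(p)$. The blow-up comes from $c$ approaching the boundary value at which $X_c$ lies in $L_p(\rho_\beta)$ but $\Phi_tX_c$ falls out of $L_q(\rho_\beta)$. This boundary condition, rewritten in hyperbolic variables, should produce the ratio of sinh's defining $f_\beta$ in \eqref{eq:f-beta}.
\item For $e^{\gamma t}\geq f_\beta(q)/f_\beta(p)$ one has $R_t(c)\leq1$ for all admissible $c$, with equality at $c=1$.
\end{enumerate}
The value $1$ then follows from the Gaussian-optimizer reduction together with the trivial lower bound. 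The monotonicity of $f_\beta$ and $f_\beta(2)=1$ serve as sanity checks: the threshold is consistent with duality \eqref{eq:forward-dual} and KMS symmetry ($p\mapsto p'$ sends $f_\beta(p)$ to $1/f_\beta(p')$ up to the appropriate identity).

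I expect the main obstacle to be the rigorous application of the Gaussian-optimizer theorem to $\Theta_t$. The conjugations by $\rho_\beta^{-1/2p}$ are unbounded, so one must truncate to finite-support $Y$ (dense by the remarks in Section 2) and verify that the cited theorem covers gauge-covariant operations that are not trace preserving, passing to the limit. The algebraic verification in (b) that $R_t\leq1$ beyond the threshold is the secondary difficulty. For it I would first try to show that $\log R_t(c)$, as a function of the hyperbolic parameter, is maximized either at $c=1$ or at the boundary, so that only the boundary analysis from (a) matters.
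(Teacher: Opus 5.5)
\textbf{The upper-bound half of your plan has a genuine gap.} Your whole sufficiency argument rests on the claim that $\sup_{Y\ge0}\|\Theta_tY\|_q/\|Y\|_p$ is approached along thermal operators, and no available theorem gives you this.

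The Gaussian-maximizer results of De Palma--Trevisan--Giovannetti concern unweighted Schatten $p\to q$ norms of specific \emph{channels}, namely the quantum-limited attenuator and amplifier. Your $\Phi_t$ is a thermal attenuator, not a quantum-limited one. Moreover, $\Theta_t$ is not a channel at all: it contains the unbounded, non-trace-preserving conjugation $Y\mapsto\rho_\beta^{-1/2p}Y\rho_\beta^{-1/2p}$. What you actually need is a noncommutative version of Lieb's ``Gaussian kernels have only Gaussian maximizers'' for general gauge-covariant Gaussian completely positive maps. That statement is not established for maps of this type, and obtaining it is not a matter of truncating to finite-support $Y$. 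Step (b) is also left only as a plan.

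The paper avoids both problems by using thermal optimality only at the infinitesimal level, where it is proved. \cite{BRK25} gives the exact $r$-logarithmic Sobolev constants $\alpha_r$ of the thermal semigroup, valid for all $r>1$ via $\alpha_r=\alpha_{r'}$. A direct computation shows that $\frac{d}{dr}\log f_\beta(r)=\gamma/(4(r-1)\alpha_r)$. Integrating Gross's differential inequality along $r'(t)=4(r(t)-1)\alpha_{r(t)}$ then gives $\norm{\Phi_t}_{p\to q,\rho_\beta}\le1$ as soon as $t\ge\int_p^q\frac{dr}{4(r-1)\alpha_r}=\frac1\gamma\log\frac{f_\beta(q)}{f_\beta(p)}$. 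This is done first on regularized positive inputs and then extended by approximation. The time obtained is exactly the threshold of your necessity computation. Your reduction to positive inputs and the bound $\ge1$ from $\Phi_t(1)=1$ then complete the proof.

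\textbf{The lower-bound half is essentially the paper's argument,} with two corrections.

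First, $c^{\bfa^\dag\bfa}$ with $c>1$ is unbounded, so $\Phi_t$ is not a priori defined on it. You should instead work, as the paper does, with the finite-rank truncations $X_{c,M}=\sum_{n\le M}c^n|n\rangle\langle n|$. Their $p$-norms are uniformly bounded for $c<e^{\beta/p}$. You then compute $\Phi_t(X_{c,M})$ diagonally from the birth--death generating function and let $M\to\infty$ by monotone convergence.

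Second, no limit $c\uparrow e^{\beta/p}$ is needed. A single $c<e^{\beta/p}$ with $m_t(c)>e^{\beta/q}$ already makes the output $q$-norms diverge. By monotonicity and continuity of $m_t$, such a $c$ exists exactly when $m_t(e^{\beta/p})>e^{\beta/q}$. This condition is equivalent to $e^{-\gamma t}>f_\beta(p)/f_\beta(q)$.
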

\begin{proof}
In the covariance-rate normalization of \eqref{eq:thermal-L},
the constants of \cite[Theorem~4]{BRK25}, valid for every $r>1$, are
\begin{equation}\label{eq:alpha-thermal}
 \alpha_r=\frac{\gamma r^2}{4\beta(r-1)}
 \frac{(1-e^{-\beta/r})(1-e^{-\beta(r-1)/r})}{1-e^{-\beta}}.
\end{equation}
Integrating the logarithmic Sobolev differentiation identity with
$r'(t)=4(r(t)-1)\alpha_{r(t)}$ gives the sufficient time
\[
 \int_p^q\frac{dr}{4(r-1)\alpha_r}
 =\frac1\gamma\log\frac{f_\beta(q)}{f_\beta(p)},
\]
where the equality follows by differentiating \eqref{eq:f-beta}.
The argument is first applied to positive elements of the thermal
generator core, regularized by a positive multiple of $1$.
The Feller property and compact-operator approximation used in
\cite[proof of Corollary~5]{BRK25} extend the estimate to the
completed weighted $L_p$ space. Complete positivity allows the
positive-input estimate to control the full $p\to q$ norm.
Since $\Phi_t(1)=1$ and $\norm{1}_{r,\rho_\beta}=1$, the norm
is exactly one at and after this time.

For necessity, suppose that $p<q$ and
$0\leq t<\gamma^{-1}\log(f_\beta(q)/f_\beta(p))$.
Put $b=e^\beta$ and $\eta=e^{-\gamma t}$.
For $1<z<b^{1/p}$, consider the bounded finite-rank operators
\[
 X_{z,M}=\sum_{n=0}^M z^n|n\rangle\langle n|.
\]
Their input norms satisfy
\[
 \norm{X_{z,M}}_{p,\rho_\beta}^p
 =(1-b^{-1})\sum_{n=0}^M(b^{-1}z^p)^n
 \leq\frac{1-b^{-1}}{1-b^{-1}z^p}.
\]
The diagonal restriction of the thermal semigroup is the
birth--death semigroup with upward rate $\gamma(n+1)/(b-1)$
and downward rate $\gamma bn/(b-1)$.
Its generating function gives
\[
 \langle n|\Phi_t(X_{z,M})|n\rangle
 \uparrow d_t(z)^{-1}u_t(z)^n,
\]
where
\[
 d_t(z)=1-\frac{(1-\eta)(z-1)}{b-1}>0,
 \qquad
 u_t(z)=1+\frac{\eta(z-1)}{d_t(z)}.
\]
For $t>0$, the generating function has a power series with
nonnegative coefficients and radius $(b-\eta)/(1-\eta)>b$;
thus the identity is valid also for the present $1<z<b$.
At $t=0$ it is immediate.
This uses only the transition probabilities and bounded inputs,
without applying $\Phi_t$ to an unbounded observable.

By \eqref{eq:f-beta},
\[
 \eta>\frac{f_\beta(p)}{f_\beta(q)}
 =\frac{(b^{1/q}-1)(b-b^{1/p})}
        {(b^{1/p}-1)(b-b^{1/q})},
\]
which is equivalent to $u_t(b^{1/p})>b^{1/q}$.
By continuity, choose $1<z<b^{1/p}$ such that
$u_t(z)>b^{1/q}$.
The output operators are diagonal, so monotone convergence gives
\[
 \norm{\Phi_t(X_{z,M})}_{q,\rho_\beta}^q
 \longrightarrow
 (1-b^{-1})d_t(z)^{-q}
 \sum_{n=0}^\infty\bigl(b^{-1}u_t(z)^q\bigr)^n
 =+\infty.
\]
The input norms are uniformly bounded, proving that the
$p\to q$ norm is infinite. The case $p=q$ follows directly
from weighted contractivity and the fixed identity.
\end{proof}

\begin{remark}
For example,
\begin{equation}\label{eq:f34}
 f_\beta(3)=2\cosh(\beta/6),\qquad
 f_\beta(4)=1+2\cosh(\beta/4).
\end{equation}
For large $q$,
\begin{equation}\label{eq:f-asymp}
 f_\beta(q)\sim\frac{2\sinh(\beta/2)}\beta\,q.
\end{equation}
This is useful when comparing large-exponent times with anisotropic one-mode results.
\end{remark}

Let
\begin{equation}\label{eq:etaK}
 \thetaK(t)=-2\log\|e^{t\bfZ_{\KMS}}\|\geq0, \quad t\geq 0.
\end{equation}
\begin{samepage}
\begin{proposition}\label{thm:2q}
Suppose that $\{\Phi_t\}_{t\geq 0}$ is a Gaussian quantum Markov semigroup.
Suppose that $q\geq2$.
Then 
\begin{equation}\label{eq:etaK2q}
 \norm{\Phi_t}_{2\to q,\rho}\leq 1,
 \quad \norm{\Phi_t^\sharp}_{2\to q,\rho}\leq 1,
\end{equation}
whenever $\displaystyle \thetaK(t)\geq \frac{1}{\hc}\log(q-1)$.
In particular, if $\gamma_{\KMS}>0$, then both inequalities hold whenever
\begin{equation}\label{eq:2q-rate}
 t\geq\frac1{\gamma_{\KMS}\hc}\log(q-1).
\end{equation}
\end{proposition}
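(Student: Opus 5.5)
The plan is a factorization of $\Phi_t$ through the thermal reference semigroup $\Psi_s$ of \eqref{eq:thermal-ref}, followed by Proposition~\ref{thm:thermal}. We use the second-quantization picture of Proposition~\ref{prop:second}. In $L_2(\rho)$, $\Phi_t$ corresponds to $\cF_c(e^{t\bfZ_{\KMS}})$. The thermal semigroup $\Psi_s$ has drift $-I/2$, so it corresponds to $\cF_c(e^{-s/2}I)$. Set $s=\thetaK(t)$, so that $e^{-s/2}=\|e^{t\bfZ_{\KMS}}\|$. Whenever this norm is positive, which always holds since the matrix exponential is invertible, we can write
\[
 e^{t\bfZ_{\KMS}}=T\,e^{-s/2}I=e^{-s/2}I\,T,\qquad T:=e^{s/2}e^{t\bfZ_{\KMS}},\quad \|T\|= 1 .
\]
Functoriality of $\cF_c$ then gives, on $L_2(\rho)$,
\[
 \Phi_t^{(2)}=\mathcal U^{-1}\cF_c(T)\mathcal U\circ\Psi_s^{(2)} .
\]

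The operator $\mathcal T:=\mathcal U^{-1}\cF_c(T)\mathcal U$ is a contraction on $L_2(\rho)$ because $T$ is a contraction. So for $X\in L_2(\rho)$ we first get
\[
 \|\Phi_t(X)\|_{q,\rho}=\|\Psi_s(\mathcal T X)\|_{q,\rho}\leq \|\Psi_s\|_{2\to q,\rho}\|\mathcal TX\|_{2,\rho}\leq \|X\|_{2,\rho}.
\]
Here the order has been switched: $\cF_c(e^{-s/2})$ commutes with everything, so $\Phi_t=\Psi_s\circ\mathcal T$. The last inequality uses Proposition~\ref{thm:thermal} with $p=2$, which gives $\|\Psi_s\|_{2\to q,\rho}\le1$ once $s\geq\hc^{-1}\log(q-1)$, which is exactly the hypothesis $\thetaK(t)\ge \hc^{-1}\log(q-1)$.

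The adjoint $\Phi_t^\sharp$ is handled identically using \eqref{eq:secondadj}. Since $\|e^{t\bfZ_{\KMS}^{\mathsf T}}\|=\|e^{t\bfZ_{\KMS}}\|$, the same $s$ works, and we factor with $T^{\mathsf T}$ (or $T^*$, which is still a contraction).

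The main subtlety is that $\mathcal T$ is merely an $L_2(\rho)$ contraction and need not correspond to a completely positive Gaussian map. This is harmless, because only $L_2$ contractivity is needed before applying $\Psi_s$, and $\Psi_s$ is applied to a general element of $L_2(\rho)$. One must, however, make sure that Proposition~\ref{thm:thermal} is valid on all of $L_2(\rho)$, not only on bounded operators. I expect this to be fine by density and the closedness of the extension: $\Psi_s$ maps $L_2$ boundedly into $L_q$, and bounded operators are dense in $L_2(\rho)$. A careful approximation argument is the point to verify. The identity $\Phi_t=\Psi_s\circ\mathcal T$ should be checked on the total set of normalized Weyl vectors $\widetilde W(\bfz)$, where it reduces to $e^{t\bfZ}=e^{-s/2}\Sigma_{\KMS}^{-1/2}T\Sigma_{\KMS}^{1/2}$.

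Finally, if $\gamma_{\KMS}>0$, Proposition~\ref{prop:CK} gives $\|e^{t\bfZ_{\KMS}}\|\le e^{-\gamma_{\KMS}t/2}$, so $\thetaK(t)\ge\gamma_{\KMS}t$. The condition \eqref{eq:2q-rate} then implies $\thetaK(t)\ge\hc^{-1}\log(q-1)$, which gives the second assertion.
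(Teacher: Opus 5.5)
Your proposal matches the paper's proof. The paper also sets $s=\thetaK(t)$ and factors $\Phi_t^{(2)}=\Psi_s^{(2)}\mathcal R_t^{(2)}$, where $\mathcal R_t^{(2)}=\mathcal U^{-1}\cF_c(e^{s/2}e^{t\bfZ_{\KMS}})\mathcal U$ is an $L_2(\rho)$ contraction. It then applies Proposition~\ref{thm:thermal} with $p=2$, handles $\Phi_t^\sharp$ via $\|e^{t\bfZ_{\KMS}^*}\|=\|e^{t\bfZ_{\KMS}}\|$, and uses Proposition~\ref{prop:CK} to get $\thetaK(t)\geq\gamma_{\KMS}t$. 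The extension of the thermal $2\to q$ bound to all of $L_2(\rho)$, which you flag as the point to verify, is simply asserted in the paper.
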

\end{samepage}
\begin{proof}
By Proposition \ref{prop:second} and Equation \eqref{eq:second}, we have that 
\begin{align*}
 \Psi_s^{(2)}\simeq\cF_c(e^{-s/2}I).
\end{align*}
Put $s=\thetaK(t)$. Then we obtain that $\norm{e^{s/2}e^{t\bfZ_{\KMS}}}=1$.
Let 
\begin{align*}
    \mathcal R_t^{(2)}=\mathcal U^{-1}\cF_c(e^{s/2}e^{t\bfZ_{\KMS}})\mathcal U.
\end{align*}
We have that $\norm{\mathcal R_t^{(2)}}_{2\to2}\leq1$.
By Proposition \ref{prop:second}, we obtain that 
\begin{equation}\label{eq:hilbert-factor}
 \Phi_t^{(2)}=\Psi_s^{(2)}\mathcal R_t^{(2)}.
\end{equation}
This thermal estimate applies to every element of $L_2(\rho)$, including the possibly unbounded $L_2$ element $\mathcal R_t^{(2)}(X)$. 
By Proposition \ref{thm:thermal}, we have that 
\begin{align*}
 \np{\Phi_t(X)}q\leq\np{\mathcal R_t^{(2)}X}2\leq\np X2,
\end{align*}
whenever $\displaystyle \theta_{\KMS}(t)\geq \frac{1}{\hc}\log (q-1)$.
The same argument applies to the adjoint because $\norm{e^{t\bfZ_{\KMS}^*}}=\norm{e^{t\bfZ_{\KMS}}}$.
By Proposition \ref{prop:CK} and Equations \eqref{eq:gammaK}, \eqref{eq:oneparticle-bound}, we obtain that $s\geq\gamma_{\KMS} t$.
\end{proof}

\begin{remark}\label{rem:nonpositive}
The remainder in \eqref{eq:hilbert-factor} is only asserted to be an $L_2$ contraction. It need not be a positive map. 
It is not applied to positive inputs in a reverse inequality. The reverse argument in \Cref{sec:reverse} uses the actual completely positive map $\Phi_t^\sharp$ instead.
\end{remark}

For $1<p\leq q<\infty$, let $p'=p/(p-1)$ be the conjugate exponent of $p$, and define
\begin{equation}\label{eq:Xi}
 \Xi(p,q)=
 \begin{cases}
 \displaystyle \log\left(\frac{2q}{p}-1\right),&2\leq p\leq q,\\[2pt]
 \displaystyle \log\left(\frac{2p'}{q'}-1\right),&p\leq q\leq2,\\[2pt]
 \displaystyle \log\left(\frac{q-1}{p-1}\right),&p\leq2\leq q.
 \end{cases}
\end{equation}
At the common boundaries these definitions agree.
Now we are able to state the hypercontractivity for the Gaussian quantum Markov semigroups.
\begin{samepage}
\begin{theorem}\label{thm:forward}
Suppose that $\gamma_{\KMS}>0$. 
For $1<p\leq q<\infty$,
\begin{equation}\label{eq:forward-piecewise}
 \norm{\Phi_t}_{p\to q,\rho}\leq1,
 \quad\norm{\Phi_t^\sharp}_{p\to q,\rho}\leq 1,
\end{equation}
whenever $\displaystyle t\geq\frac{\Xi(p,q)}{\gamma_{\KMS}\hc}$.
In particular, for all $X\in \cB(\cF_s(\bC^d))$,
\begin{align*}
 \np{\Phi_t(X)}q&\leq\np Xp
 ,\label{eq:preview-forward}
 \end{align*}
 whenever $\displaystyle t\geq\frac{2}{\gamma_{\KMS}\hc}\log\frac{q-1}{p-1}$.
\end{theorem}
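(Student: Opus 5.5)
The plan is to reduce every pair $(p,q)$ to the $2\to q$ estimate of Proposition~\ref{thm:2q}, using duality \eqref{eq:forward-dual}, the semigroup property, and Riesz--Thorin/Stein interpolation in the weighted (Kosaki) $L_p(\rho)$ scale. Under $\gamma_{\KMS}>0$, Proposition~\ref{thm:2q} gives $\|\Phi_t\|_{2\to r,\rho}\le1$ and $\|\Phi_t^\sharp\|_{2\to r,\rho}\le1$ once $t\ge\log(r-1)/(\gamma_{\KMS}\hc)$. Applying \eqref{eq:forward-dual} to the statement for $\Phi^\sharp$ yields $\|\Phi_t\|_{r'\to2,\rho}=\|\Phi_t^\sharp\|_{2\to r,\rho}\le1$ in the same time range, and similarly with $\Phi$ and $\Phi^\sharp$ interchanged. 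Since $r-1=1/(r'-1)$, this gives $\|\Phi_t\|_{p\to2}\le1$ for $p<2$ once $t\ge\log\frac{1}{p-1}/(\gamma_{\KMS}\hc)$.

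\emph{Case $p\le2\le q$.} Write $\Phi_{t}=\Phi_{t_2}\circ\Phi_{t_1}$ with $t_1=\log\frac1{p-1}/(\gamma_{\KMS}\hc)$ and $t_2=\log(q-1)/(\gamma_{\KMS}\hc)$. Then $t_1+t_2=\Xi(p,q)/(\gamma_{\KMS}\hc)$, and composing $p\to2\to q$ gives the bound. For larger $t$, use $L_q$-contractivity of $\Phi_s$ to absorb the extra time. The same argument applies to $\Phi^\sharp$.

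\emph{Case $2\le p\le q$.} Interpolate between the $2\to r$ bound and the trivial $\infty\to\infty$ (or $q\to q$) contraction at the same time $t$. Fix $t$ and let $r$ solve $t=\log(r-1)/(\gamma_{\KMS}\hc)$, i.e.\ $r=1+e^{\gamma_{\KMS}\hc t}$. Interpolating $2\to r$ with $\infty\to\infty$ gives $p\to q$ with $1/p=\theta/2$ and $1/q=\theta/r$, so $q/p=r/2$. Hence $q=p r/2$, which holds when $r\ge 2q/p$, i.e.\ $t\ge\log(2q/p-1)/(\gamma_{\KMS}\hc)$. This is exactly $\Xi$.

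\emph{Case $p\le q\le2$.} Obtain this case by duality from the previous one applied to $\Phi^\sharp$, since $2\le q'\le p'$ and $\Xi(p,q)=\Xi(q',p')$ by definition. Because the bounds are proved simultaneously for $\Phi$ and $\Phi^\sharp$, the statement is closed under this exchange.

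The step I expect to need the most care is the interpolation. Riesz--Thorin must hold for the family $L_p(\rho)$ defined by symmetric embedding $X\mapsto\rho^{1/2p}X\rho^{1/2p}$, which is a complex interpolation scale by Kosaki's theorem. In infinite dimensions, and with $\rho$ not boundedly invertible, one must check that $\Phi_t$ is defined compatibly on the dense finite-support subspace. I would handle this by proving the estimate on finite-matrix-support operators, where the three-lines argument is elementary, and then extending by density.

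For the final ``in particular'' claim, I would check that $\Xi(p,q)\le2\log\frac{q-1}{p-1}$ in all three regimes. For $2\le p\le q$, set $x=q/p\ge1$; then $2x-1\le\bigl(\frac{q-1}{p-1}\bigr)^2$ follows from $\frac{q-1}{p-1}\ge x$ and $2x-1\le x^2$. The case $q\le2$ follows by symmetry, using $\frac{p'-1}{q'-1}=\frac{q-1}{p-1}$. The middle case is immediate since $\log\frac{q-1}{p-1}\ge0$.
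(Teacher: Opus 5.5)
Your proposal is correct and is essentially the paper's proof: the same three-case split, interpolation of the $2\to 2q/p$ bound of Proposition~\ref{thm:2q} against $\infty\to\infty$ contractivity, weighted duality \eqref{eq:forward-dual} for $q\le 2$, composition through $L_2(\rho)$ when $p\le 2\le q$, and the comparison $\Xi(p,q)\le 2\log\frac{q-1}{p-1}$; your verification via $q/p\le\frac{q-1}{p-1}$ and $(x-1)^2\ge 0$ is an equivalent variant of the paper's expansion. The only cosmetic difference is that you let $r$ depend on $t$, so the output exponent $pr/2$ may exceed $q$ and you then implicitly need monotonicity of $\|\cdot\|_{q,\rho}$ in $q$ (which does hold); it is simpler to fix $r=2q/p$ as the paper does, since Proposition~\ref{thm:2q} already holds for every $t$ beyond its threshold.
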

\end{samepage}
\begin{proof}
Let $a=\gamma_{\KMS}\hc$.
If $2\leq p\leq q$, interpolate the $2\to r$ bound with $\infty\to\infty$ contractivity, using interpolation parameter $2/p$. The output exponent is $rp/2$, so choose $r=2q/p$. The time in \eqref{eq:2q-rate} is $a^{-1}\log(2q/p-1)$.

If $p\leq q \leq2$, use weighted duality \eqref{eq:forward-dual} and the preceding argument for $\Phi_t^\sharp:L_{q'}(\rho)\to L_{p'}(\rho)$. 

If $p\leq 2\leq q$, first apply the dual $p\to2$ estimate for time $a^{-1}\log(p'-1)$, and then the $2\to q$ estimate for time $a^{-1}\log(q-1)$. 
The semigroup property gives their sum, namely $\displaystyle a^{-1}\log\left(\frac{q-1}{p-1}\right)$. 
Interchanging $\Phi$ and $\Phi^\sharp$ proves the same statements for the KMS-adjoint.

To obtain the uniform bound, write $\displaystyle r=\frac{q-1}{p-1}\geq 1$. 
In the first case,
\[
 \frac{2q}{p}-1=1+\frac{2(p-1)}p(r-1)\leq1+2(r-1)\leq r^2.
\]
The second case is the dual version and the third already has coefficient one. 
Hence $\Xi(p,q)\leq2\log r$.
This completes the proof of the theorem.
\end{proof}

\begin{remark}
    No symmetry assumption is needed in Theorem \ref{thm:forward}.
\end{remark}

\begin{corollary}\label{cor:KMS-HC}
Let $\{\Phi_t\}_{t\geq 0}$ be a KMS-symmetric Gaussian quantum Markov semigroup with Hurwitz drift.
Then $\{\Phi_t\}_{t\geq 0}$ is hypercontractive.
\end{corollary}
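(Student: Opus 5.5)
This corollary will follow by combining Proposition~\ref{prop:hurwitzpos} with Theorem~\ref{thm:forward}. Since the semigroup is KMS-symmetric, Proposition~\ref{prop:kmssym} shows that $\bfZ_{\KMS}=\Sigma_{\KMS}^{1/2}\bfZ\Sigma_{\KMS}^{-1/2}$ is Hermitian. It is similar to $\bfZ$, so it has the same spectrum, and the Hurwitz assumption makes it strictly negative definite. Proposition~\ref{prop:hurwitzpos} then gives
\[
\gamma_{\KMS}=-2\lambda_{\max}(\bfZ_{\KMS})>0 .
\]
Concretely, $\bfC_{\KMS}=-\Sigma_{\KMS}^{1/2}(\bfZ_{\KMS}+\bfZ_{\KMS}^{\mathsf T})\Sigma_{\KMS}^{1/2}=-2\Sigma_{\KMS}^{1/2}\bfZ_{\KMS}\Sigma_{\KMS}^{1/2}$, so the matrix in \eqref{eq:gammaK} is $-2\bfZ_{\KMS}$, whose smallest eigenvalue is positive.

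With $\gamma_{\KMS}>0$, Theorem~\ref{thm:forward} applies directly. For every $1<p\leq q<\infty$ we get $\norm{\Phi_t}_{p\to q,\rho}\leq1$ as soon as
\[
t\geq\frac{2}{\gamma_{\KMS}\hc}\log\frac{q-1}{p-1},
\]
where $\hc>0$ is the thermal constant of Proposition~\ref{thm:thermal}, built from the symplectic spectrum of the invariant state $\rho$. This is hypercontractivity in the standard sense: there is a positive rate $c=\gamma_{\KMS}\hc/2$ such that $\Phi_t$ is a contraction from $L_p(\rho)$ to $L_q(\rho)$ whenever $q-1\leq e^{ct}(p-1)$. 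The norm equals one exactly, since $\Phi_t(1)=1$ and $\|1\|_{r,\rho}=1$. KMS symmetry also gives $\Phi_t^\sharp=\Phi_t$, so the two estimates in \eqref{eq:forward-piecewise} coincide and no duality step beyond what is already inside the theorem is needed.

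I do not expect a real obstacle. The one point to check is that the faithful-invariant-state standing assumption, which Proposition~\ref{prop:second} and the factorization \eqref{eq:hilbert-factor} rely on, is in force. It is, since the Hurwitz drift yields the unique faithful centered invariant Gaussian state via \eqref{eq:Fidentity}. I would also state which notion of \emph{hypercontractive} is meant, namely the exponent curve $q(t)-1=e^{ct}(p-1)$ with $c>0$, so that the corollary is literally the specialization of Theorem~\ref{thm:forward}.
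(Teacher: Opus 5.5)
Your proposal is correct and matches the paper's proof, which likewise combines Proposition~\ref{prop:hurwitzpos} (KMS symmetry and Hurwitz drift give $\gamma_{\KMS}>0$) with Theorem~\ref{thm:forward}. One small correction to your closing remark: Hurwitz drift does not by itself yield a \emph{faithful} invariant state (the zero-temperature attenuator, with $\bfZ=-\frac{\gamma}{2}I_2$ and $\bfC=\gamma I_2$, has the non-faithful vacuum as its unique invariant state), so faithfulness of $\rho$ should come from the paper's standing assumption rather than from \eqref{eq:Fidentity}; that assumption is in any case needed for KMS symmetry to be defined.
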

\begin{proof}
It follows from Proposition \ref{prop:hurwitzpos} and Theorem \ref{thm:forward}.
\end{proof}

We shall discuss more general cases in the following.
Suppose only that $\bfZ$ is Hurwitz without assumption $\gamma_{\KMS}>0$. 
The function $\thetaK$ is nondecreasing and tends to infinity. We define
\begin{equation}\label{eq:tau}
 \varphi(r)=\inf\{t\geq0:\thetaK(t)\geq r\},\qquad \varphi(0)=0.
\end{equation}
It is finite. 
Now we define the following sufficient time:
\begin{equation}\label{eq:timeprofile}
 \mathfrak t(p,q)=
 \begin{cases}
 \varphi\bigl(\hc^{-1}\log(2q/p-1)\bigr),&2\leq p\leq q,\\
 \varphi\bigl(\hc^{-1}\log(2p'/q'-1)\bigr),&p\leq q\leq2,\\
 \varphi\bigl(\hc^{-1}\log(p'-1)\bigr)
 +\varphi\bigl(\hc^{-1}\log(q-1)\bigr),&p\leq2\leq q.
 \end{cases}
\end{equation}
\begin{corollary}\label{cor:eventual}
Suppose that $\bfZ$ is Hurwitz.
Then, for every $1<p\leq q<\infty$,
\[
 \norm{\Phi_t}_{p\to q,\rho}\leq1,
 \quad\norm{\Phi_t^\sharp}_{p\to q,\rho}\leq1,
\]
whenever $t\geq\mathfrak t(p,q)$.
Thus every fixed forward exponent pair is eventually attained.
\end{corollary}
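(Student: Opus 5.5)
The plan is to rerun the proof of Theorem~\ref{thm:forward} with the linear rate $\gamma_{\KMS}t$ replaced by the nondecreasing function $\thetaK(t)$. The first thing to note is that Proposition~\ref{thm:2q} is already stated in terms of $\thetaK$: both $\Phi_t$ and $\Phi_t^\sharp$ are $2\to r$ contractions as soon as $\thetaK(t)\geq\hc^{-1}\log(r-1)$.

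The first step is to check the properties of $\thetaK$ and $\varphi$.
\begin{itemize}
\item $\thetaK$ is nondecreasing, since $\|e^{(t+s)\bfZ_{\KMS}}\|\leq\|e^{t\bfZ_{\KMS}}\|\,\|e^{s\bfZ_{\KMS}}\|$ and $\|e^{s\bfZ_{\KMS}}\|\leq1$ by Proposition~\ref{prop:second}.
\item $\thetaK(t)\to\infty$, because $\bfZ_{\KMS}$ is similar to the Hurwitz matrix $\bfZ$, so $\|e^{t\bfZ_{\KMS}}\|\to0$.
\item $\thetaK$ is continuous.
\end{itemize}
It follows that $\varphi(r)$ is finite and that $\thetaK(\varphi(r))\geq r$. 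By monotonicity, $\thetaK(t)\geq r$ for every $t\geq\varphi(r)$. Combined with Proposition~\ref{thm:2q}, this gives
\[
\norm{\Phi_t}_{2\to r,\rho}\leq1\quad\text{and}\quad\norm{\Phi_t^\sharp}_{2\to r,\rho}\leq1\qquad\text{for all }t\geq\varphi\bigl(\hc^{-1}\log(r-1)\bigr).
\]

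Next I would treat the three cases exactly as in Theorem~\ref{thm:forward}.
\begin{itemize}
\item For $2\leq p\leq q$, take $r=2q/p$. Riesz--Thorin type interpolation of the weighted (Kosaki) $L_p(\rho)$ spaces, with parameter $2/p$, between the $2\to r$ bound and $\infty\to\infty$ contractivity, gives the $p\to q$ bound for $t\geq\mathfrak t(p,q)$.
\item For $p\leq q\leq2$, apply the weighted duality \eqref{eq:forward-dual} to $\Phi_t^\sharp$ on $L_{q'}(\rho)\to L_{p'}(\rho)$, using the first case with the roles of $\Phi$ and $\Phi^\sharp$ interchanged.
\item For $p\leq2\leq q$, write $\Phi_{t_1+t_2}=\Phi_{t_2}\Phi_{t_1}$ with $t_1\geq\varphi(\hc^{-1}\log(p'-1))$ and $t_2\geq\varphi(\hc^{-1}\log(q-1))$. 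The first factor is a $p\to2$ contraction by duality from the $2\to p'$ bound for $\Phi^\sharp$; the second is a $2\to q$ contraction. Any $t\geq\mathfrak t(p,q)$ can be split this way.
\end{itemize}
Exchanging $\Phi$ and $\Phi^\sharp$ throughout yields the statement for the adjoint.

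The one genuinely new point, and the step I expect to need the most care, is the monotonicity in $t$. Once the bound holds at the threshold time, it must persist for all later times. I would argue this by composing with the remaining time $\Phi_{t-t_0}$, which is a weighted $L_q$ contraction, rather than relying on monotonicity of $\thetaK$ separately in each case. Finiteness of $\mathfrak t(p,q)$ then gives the final sentence, that every fixed forward exponent pair is eventually attained.
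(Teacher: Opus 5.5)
Your proposal is correct and follows the paper's own proof. Like the paper, you rerun the three-case argument of Theorem~\ref{thm:forward} with the hitting times $\varphi$ in place of the linear rate $\gamma_{\KMS}\hc t$, feed in Proposition~\ref{thm:2q}, get finiteness from the similarity of $\bfZ_{\KMS}$ to the Hurwitz matrix $\bfZ$, and get monotonicity of $\thetaK$ from $\|e^{s\bfZ_{\KMS}}\|\leq1$ and the semigroup law; your alternative of obtaining persistence beyond the threshold by composing with the weighted $L_q$ contraction $\Phi_{t-t_0}$ is also valid, though redundant once $\thetaK$ is known to be nondecreasing.
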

\begin{proof}
The argument in Theorem \ref{thm:forward} applies using Proposition~\ref{thm:2q} and the hitting times \eqref{eq:tau}.
Since $\bfZ_{\KMS}$ is similar to a Hurwitz matrix, $e^{t\bfZ_{\KMS}} \to 0$, proving the asserted finiteness. 
Monotonicity follows from $\norm{e^{s\bfZ_{\KMS}}}\leq1$ and the semigroup property.
\end{proof}
The two times in the third line of \eqref{eq:timeprofile} must be added. 
A nonlinear clock cannot in general be substituted for a linear rate without accounting for this composition.

\section{Reverse Hypercontractivity}\label{sec:reverse}

In this section, we shall obtain the reverse hypercontractivity for Gaussian quantum Markov semigroups.

\begin{lemma}\label{lem:transfer}
Suppose that $\Phi$ is a normal unital completely positive map on $\cB(\cH)$, with a faithful invariant normal state $\rho$.
Fix $\displaystyle \frac12<p<q<1$.
If $\norm{\Phi^\sharp}_{\frac{p}{1-p}\to \frac{q}{1-q},\rho}\leq1$, then for $0\leq X\in \cB(\cH)$,
\begin{equation}\label{eq:transfer}
 \np{\Phi(X)}p\geq\np Xq.
\end{equation}
\end{lemma}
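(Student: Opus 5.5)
The plan is to obtain \eqref{eq:transfer} by duality, using the reverse H\"older inequality for the KMS-weighted $L_p$ functionals together with the operator convexity of inversion.

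Write $p^\ast=p/(p-1)$ and $q^\ast=q/(q-1)$. Since $\frac12<p<q<1$, both are negative, with
\[
|p^\ast|=\frac{p}{1-p},\qquad |q^\ast|=\frac{q}{1-q},\qquad 1<|p^\ast|<|q^\ast|<\infty .
\]
Thus the hypothesis says exactly that $\Phi^\sharp$ is a contraction from $L_{|p^\ast|}(\rho)$ to $L_{|q^\ast|}(\rho)$.

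\textbf{Step 1: reduction to uniformly positive inputs.} First reduce to $X$ bounded and uniformly positive by replacing $X$ with $X+\varepsilon 1$. Since $\Phi(X+\varepsilon)=\Phi(X)+\varepsilon$, both sides of \eqref{eq:transfer} are monotone in $\varepsilon$. For the limit $\varepsilon\downarrow0$ I will use continuity of $\varepsilon\mapsto\np{X+\varepsilon}{r}$ for $0<r<1$. This holds because $t\mapsto t^r$ is operator monotone, so $\Tr\bigl((X_r+\varepsilon\rho^{1/r})^r\bigr)$ decreases to $\Tr(X_r^r)$, and $\Tr\rho<\infty$ controls the error.

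\textbf{Step 2: reverse H\"older and the variational formula.} For bounded uniformly positive $A$ and $0<r<1$, I will establish
\[
\np{A}{r}=\inf\Bigl\{\langle A,Y\rangle_\rho:\ Y\ \text{bounded},\ Y\geq c1>0,\ \np{Y}{r/(r-1)}=1\Bigr\},
\]
where the negative exponent is interpreted through \eqref{eq:negative}. The inequality $\geq$ is the reverse H\"older inequality $\Tr(A_rY_{r^\ast}')\ge\|A_r\|_r\|\cdots\|$. It follows from ordinary H\"older applied to the factorisation $A_r^{r}=(A_r^{1/2}\cdot)^{\dots}$, or equivalently from the weighted H\"older inequality with exponents $1/r$ and $1/(1-r)$ applied to $\langle A,Y\rangle_\rho$ and $Y^{-1}$. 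For equality, I will take $Y$ essentially equal to $\rho^{-1/2r^\ast}(A_r)^{r-1}\rho^{-1/2r^\ast}$, truncated in an eigenbasis of $\rho$ so that it is bounded and invertible, and then pass to the limit. I expect this to be the main technical obstacle in infinite dimension, because $\rho^{-1}$ is unbounded. I will handle it as in Section~2: work with finite matrix support in an eigenbasis of $\rho$, plus a multiple of $1$ on the complement, and use monotone convergence. In fact only the inequality direction ``$\leq$ is attained up to $\delta$'' is needed for $\np{\Phi(X)}p$, and only reverse H\"older is needed for $X$.

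\textbf{Step 3: the main chain.} For admissible $Y$, apply the defining relation \eqref{eq:adjoint}, which gives $\langle\Phi(X),Y\rangle_\rho=\langle X,\Phi^\sharp(Y)\rangle_\rho$. Then reverse H\"older at exponent $q$ gives
\[
\langle X,\Phi^\sharp(Y)\rangle_\rho\geq\np{X}{q}\,\np{\Phi^\sharp(Y)}{q^\ast}=\np{X}{q}\,\np{\Phi^\sharp(Y)^{-1}}{|q^\ast|}^{-1}.
\]
Since $\Phi^\sharp$ is unital and completely positive and $t\mapsto t^{-1}$ is operator convex, Choi's inequality yields $\Phi^\sharp(Y)^{-1}\leq\Phi^\sharp(Y^{-1})$. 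The weighted norms are monotone on positive operators, because $0\le A\le B$ implies $\rho^{1/2r}A\rho^{1/2r}\le\rho^{1/2r}B\rho^{1/2r}$. Combining this with the hypothesis gives
\[
\np{\Phi^\sharp(Y)^{-1}}{|q^\ast|}\leq\np{\Phi^\sharp(Y^{-1})}{|q^\ast|}\leq\np{Y^{-1}}{|p^\ast|}=\np{Y}{p^\ast}^{-1}=1.
\]
Hence $\langle\Phi(X),Y\rangle_\rho\geq\np Xq$ for every admissible $Y$. Taking the infimum over $Y$, using the variational formula of Step~2 at exponent $p$, gives \eqref{eq:transfer}. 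Finally, let $\varepsilon\to0$ as in Step~1.
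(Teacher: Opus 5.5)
Your proposal is correct and follows essentially the same route as the paper. Both proofs combine the duality $\langle\Phi(X),Y\rangle_\rho=\langle X,\Phi^\sharp(Y)\rangle_\rho$, reverse H\"older at exponent $q$, Choi's inequality $(\Phi^\sharp(Y))^{-1}\le\Phi^\sharp(Y^{-1})$ together with the hypothesis, and the variational formula at exponent $p$; you only place the $\varepsilon$-regularization up front, where the paper does it inside the duality formula. The truncation you anticipate in Step~2 is unnecessary: with $a=r/(1-r)$, your candidate is $Y_r=\rho^{1/(2a)}A_r^{r-1}\rho^{1/(2a)}$, which involves only a positive power of $\rho$, and for $\varepsilon I\le X\le MI$ operator antitonicity of $u\mapsto u^{r-1}$ gives $M^{r-1}I\le Y_r\le\varepsilon^{r-1}I$, so $Y_r$ is already bounded, uniformly positive, and attains the infimum exactly, which is how the paper argues.
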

\begin{proof}
First let $Y\in\cB(\cH)_{++}$, meaning that $Y$ is bounded and bounded below by a strictly positive multiple of $1$.
Since $\Phi^\sharp$ is unital, $\Phi^\sharp(Y)$ is also bounded and uniformly positive.
Operator Jensen's inequality for the inverse implies
\[
 (\Phi^\sharp (Y))^{-1}\leq\Phi^\sharp(Y^{-1}).
\]
Note that
\begin{align}
 \np{\Phi^\sharp (Y)}{-\frac{q}{1-q}}
 &=\np{(\Phi^\sharp Y)^{-1}}{\frac{q}{1-q}}^{-1}\notag\\
 &\geq\np{\Phi^\sharp(Y^{-1})}{\frac{q}{1-q}}^{-1}
 \geq\np{Y^{-1}}{\frac{p}{1-p}}^{-1}=\np Y{-{\frac{p}{1-p}}}.\label{eq:negtransfer}
\end{align}
For $r\in(0,1)$, the weighted reverse H\"older duality is
\begin{equation}\label{eq:reverse-dual}
 \np Xr=\inf_{Y\in\cB(\cH)_{++}}
 \frac{\langle X,Y\rangle_\rho}{\np Y{r/(r-1)}}.
\end{equation}
The infimum is taken over the same bounded uniformly positive operators.
The reverse H\"older inequality gives the lower bound in \eqref{eq:reverse-dual}.
To verify equality in this test class, first suppose that $\varepsilon I\leq X\leq MI$, with $\varepsilon>0$, and put
\[
 a=\frac{r}{1-r},\qquad A=\rho^{1/(2r)}X\rho^{1/(2r)},\qquad
 Y_r=\rho^{1/(2a)}A^{r-1}\rho^{1/(2a)}.
\]
The last expression is understood as a quadratic form.
Since $r-1\in(-1,0)$, operator antitonicity of $u\mapsto u^{r-1}$ gives
\[
 M^{r-1}I\leq Y_r\leq\varepsilon^{r-1}I,
\]
so $Y_r\in\cB(\cH)_{++}$.
For completeness, let $B=\rho^{1/(2a)}$ and $C=A^{(1-r)/2}$.
The inequalities $\varepsilon^{1-r}B^2\leq C^2\leq M^{1-r}B^2$ show that $T=C^{-1}B$ extends to a bounded invertible operator.
Thus $Y_r=T^*T$ and $\rho^{1/(2a)}Y_r^{-1}\rho^{1/(2a)}=A^{1-r}$.
Consequently,
\[
 \langle X,Y_r\rangle_\rho=\Tr(A^r),\qquad
 \np{Y_r}{-a}=\bigl(\Tr(A^r)\bigr)^{-1/a},
\]
which proves equality.
For a general bounded $X\geq0$, use the minimizing operator for $X+\varepsilon I$.
Its quotient with $X$ is at most $\np{X+\varepsilon I}r$, which tends to $\np Xr$ as $\varepsilon\downarrow0$.
This proves \eqref{eq:reverse-dual} for every bounded positive $X$.

By Equation \eqref{eq:reverse-dual} and the reverse H\"older duality, we see that 
\begin{align*}
 \np{\Phi (X)}p
 &=\inf_{Y\in\cB(\cH)_{++}}\frac{\langle X,\Phi^\sharp (Y)\rangle_\rho}{\np Y{-{\frac{p}{1-p}}}}\\
 &\geq\np Xq\inf_{Y\in\cB(\cH)_{++}}
 \frac{\np{\Phi^\sharp (Y)}{-\frac{q}{1-q}}}{\np Y{-{\frac{p}{1-p}}}}
 \geq\np Xq.
\end{align*}
This completes the proof of the lemma.
\end{proof}


\begin{theorem}\label{thm:reverse}
Suppose that $\gamma_{\KMS}>0$ and $1/2<p<q<1$.
We have that for $X\geq 0$,
\begin{equation}\label{eq:reverse-piecewise}
 \np{\Phi_t(X)}p\geq\np Xq\quad
\end{equation}
whenever $\displaystyle t\geq\frac{\Xi(p/(1-p),q/(1-q))}{\gamma_{\KMS}\hc}$.
In particular, the uniform time 
\begin{align}
 t\geq\frac{2}{\gamma_{\KMS}\hc}
 \log\frac{(2q-1)(1-p)}{(2p-1)(1-q)}.\label{eq:preview-reverse}
\end{align}
is valid. 
If $p\leq2/3\leq q$, the factor two in that uniform time \eqref{eq:preview-reverse} can be omitted.

If $\bfZ$ is Hurwitz without assumption that $\gamma_{\KMS}>0$, the same reverse inequality holds for $t\geq\mathfrak t(p/(1-p),q/(1-q))$, with \eqref{eq:timeprofile}. In particular, every fixed pair in $(1/2,1)$ is eventually attained.
\end{theorem}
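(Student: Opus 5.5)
The proof combines Lemma~\ref{lem:transfer} with the forward estimates for the KMS adjoint. Set $P=p/(1-p)$ and $Q=q/(1-q)$. The map $r\mapsto r/(1-r)$ is increasing from $(1/2,1)$ onto $(1,\infty)$, so $1<P<Q<\infty$.

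\textbf{Step 1 (reduction to a forward bound).} By Lemma~\ref{lem:transfer}, applied to the normal unital completely positive map $\Phi_t$ with faithful invariant state $\rho$, inequality \eqref{eq:reverse-piecewise} follows for all $0\leq X\in\cB(\cF_s(\bC^d))$ once
\[
\norm{\Phi_t^\sharp}_{P\to Q,\rho}\leq1 .
\]
Remark~\ref{rem:nonpositive} matters here. We use the genuine completely positive map $\Phi_t^\sharp$, which is unital because $\Phi_t$ preserves $\rho$. We never use the non-positive remainder from \eqref{eq:hilbert-factor}. The Jensen step in the lemma needs only unitality and positivity of $\Phi_t^\sharp$.

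\textbf{Step 2 (the forward estimate for $\Phi_t^\sharp$).} Theorem~\ref{thm:forward} is stated for both $\Phi_t$ and $\Phi_t^\sharp$. Under $\gamma_{\KMS}>0$ it gives $\norm{\Phi_t^\sharp}_{P\to Q,\rho}\leq1$ whenever $t\geq \Xi(P,Q)/(\gamma_{\KMS}\hc)$, which is the first claim. In the merely Hurwitz case, Corollary~\ref{cor:eventual} gives the same bound for $t\geq\mathfrak t(P,Q)$. That time is finite, so every fixed pair in $(1/2,1)$ is eventually attained.

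\textbf{Step 3 (explicit times).} This is pure algebra. We have $P-1=(2p-1)/(1-p)$ and $Q-1=(2q-1)/(1-q)$, so
\[
\frac{Q-1}{P-1}=\frac{(2q-1)(1-p)}{(2p-1)(1-q)}.
\]
The bound $\Xi(P,Q)\leq 2\log\frac{Q-1}{P-1}$ was already shown in the proof of Theorem~\ref{thm:forward}, and it yields \eqref{eq:preview-reverse}.

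For the improvement, note that $p\leq 2/3\leq q$ is exactly $P\leq2\leq Q$. In that case $\Xi(P,Q)=\log\frac{Q-1}{P-1}$ by the third line of \eqref{eq:Xi}, so the factor two disappears.

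The only delicate point is Step 1. The transfer lemma must be applied with $\Phi^\sharp$ and forward exponents in the correct order. We then need that Theorem~\ref{thm:forward} and Corollary~\ref{cor:eventual} give the bound for the adjoint and not merely for $\Phi_t$. Both results explicitly include $\Phi_t^\sharp$, so no further obstacle is expected.
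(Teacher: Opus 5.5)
Your proposal is correct and follows the paper's proof: you apply Lemma~\ref{lem:transfer} with exponents $p/(1-p)<q/(1-q)$ and supply it with the forward bound for $\Phi_t^\sharp$ from Theorem~\ref{thm:forward} (or from Corollary~\ref{cor:eventual} in the merely Hurwitz case). You then get the explicit times from $\Xi(P,Q)\leq 2\log\frac{Q-1}{P-1}$, together with the observation that $P\leq 2\leq Q$ exactly when $p\leq 2/3\leq q$.
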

\begin{proof}
By Lemma~\ref{lem:transfer} for $\Phi_t$, Theorem~\ref{thm:forward}, and Corollary~\ref{cor:eventual} for $\Phi_t^\sharp$, we see that the theorem is true.
The transformed exponents straddle two precisely when $p\leq2/3\leq q$.
\end{proof}

\begin{corollary}\label{cor:KMS-reverse}
Let $\{\Phi_t\}_{t\geq 0}$ be a KMS-symmetric Gaussian quantum Markov semigroup with Hurwitz drift.
Then $\{\Phi_t\}_{t\geq 0}$ has reverse hypercontractivity throughout $1/2<p<q<1$, with a curve starting at time zero.
\end{corollary}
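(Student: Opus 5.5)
The corollary will follow by combining Proposition~\ref{prop:hurwitzpos} with Theorem~\ref{thm:reverse}, in the same way Corollary~\ref{cor:KMS-HC} is derived from Theorem~\ref{thm:forward}.

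First, KMS symmetry together with the Hurwitz assumption on $\bfZ$ lets us invoke Proposition~\ref{prop:hurwitzpos}. This gives $\gamma_{\KMS}=-2\lambda_{\max}(\bfZ_{\KMS})>0$: the matrix $\bfZ_{\KMS}$ is Hermitian by Proposition~\ref{prop:kmssym}, and it is similar to the Hurwitz matrix $\bfZ$.

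Second, with $\gamma_{\KMS}>0$, the first part of Theorem~\ref{thm:reverse} applies directly. For every pair $1/2<p<q<1$ and every $X\geq0$, it yields $\np{\Phi_t(X)}p\geq\np Xq$ as soon as
\[
t\geq\frac{\Xi\bigl(p/(1-p),\,q/(1-q)\bigr)}{\gamma_{\KMS}\hc}.
\]
The transformed exponents $p/(1-p)$ and $q/(1-q)$ lie in $(1,\infty)$ and satisfy $p/(1-p)<q/(1-q)$, so $\Xi$ is defined and finite.

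It remains to check that this gives a reverse hypercontractive curve starting at time zero. The uniform bound in \eqref{eq:preview-reverse} shows the required time is at most
\[
\frac{2}{\gamma_{\KMS}\hc}\log\frac{(2q-1)(1-p)}{(2p-1)(1-q)}.
\]
This quantity tends to $0$ as $q\downarrow p$. Hence, for a fixed $p\in(1/2,1)$, the curve
\[
q(t)=\text{the solution of } \log\frac{(2q-1)(1-p)}{(2p-1)(1-q)}=\frac{\gamma_{\KMS}\hc\, t}{2}
\]
satisfies $q(0)=p$ and increases with a positive rate from $t=0$ until it reaches $1$. Explicitly, $q(t)/(1-q(t))$ equals $\tfrac12\bigl(1+(2p-1)(1-p)^{-1}e^{\gamma_{\KMS}\hc t/2}\bigr)$ rescaled appropriately.

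There is no real obstacle here. The only point needing care is that the KMS-symmetric case is exactly where $\gamma_{\KMS}>0$ is guaranteed, so we get a genuine time-zero curve rather than just eventual attainment via $\mathfrak t$.
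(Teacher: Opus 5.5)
Your argument is correct and follows the paper's route exactly: Proposition~\ref{prop:hurwitzpos} gives $\gamma_{\KMS}>0$, and Theorem~\ref{thm:reverse}, through the uniform time \eqref{eq:preview-reverse} that also underlies the curve \eqref{eq:reverse-curve}, then gives the reverse inequality along a curve starting at $t=0$. One small slip: the explicit curve is $q(t)/(1-q(t))=1+\frac{2p-1}{1-p}e^{\gamma_{\KMS}\hc t/2}$, with no factor $\tfrac12$, and $q(t)$ only tends to $1$ as $t\to\infty$ rather than reaching it.
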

\begin{proof}
It follows from Theorem \ref{thm:reverse}.    
\end{proof}

\begin{remark}
 The quantum diffusion matrix of the KMS-symmetric quantum Markov semigroup may be singular in Corollary \ref{cor:KMS-reverse}.
\end{remark}

Write $\displaystyle c=\frac{\gamma_{\KMS}\hc}{2}$. 
Given $q\in (1/2,1)$, let
\begin{equation}\label{eq:reverse-curve}
 z_q=\frac{2q-1}{1-q},\qquad
 p(t)=\frac{1+z_qe^{-ct}}{2+z_qe^{-ct}}.
\end{equation}
Then $p(0)=q$, $p(t)>1/2$, and
\begin{equation}\label{eq:curve-ode}
 p'(t)=-c(2p(t)-1)(1-p(t)),\qquad
 \np{\Phi_t(X)}{p(t)}\geq\np Xq.
\end{equation}
There is no positive waiting time before the exponent starts to improve. Reaching the boundary $1/2$ is not asserted at a finite time.

The logarithmic Sobolev quantities and their form domains are defined in Section \ref{sec:lsi}. 
Differentiating this curve there gives
\[
 \mathcal E_r(X)\geq\frac c4(2r-1)\Ent_{r,\rho}(X),\qquad \frac12<r<1.
\]
The reverse proof itself does not depend on differentiating an unbounded generator. The boundary is structural for this method: $r/(1-r)>1$ exactly when $r>1/2$. 
The primitive example in Section~\ref{sec:lsi} shows that KMS symmetry and stability alone cannot supply reverse curves with a strictly negative initial exponent derivative throughout $(0,1)$.
This is an obstruction at time zero; it does not rule out eventual reverse inequalities for fixed pairs $0<p<q<1/2$.

\section{Examples for (Reverse) Hypercontractivity}

In this section, we present some examples of Gaussian quantum Markov semigroups with hypercontractivity and reverse hypercontractivity.
We also provide examples of Gaussian quantum Markov semigroups that fail hypercontractivity and reverse hypercontractivity.

\subsection{A KMS-symmetric family}\label{subsec:rankone}
We specialize Example~\ref{ex:anisotropic-boundary} to
\begin{equation}\label{eq:rank-ZC}
 \nu=\cosh r,\qquad r_1=\lambda e^r,\qquad r_2=\lambda e^{-r},
 \qquad \lambda>0,\quad r>0.
\end{equation}
The complete-positivity condition \eqref{eq:anisotropic-cp} is an equality.
Thus \eqref{eq:anisotropic-Q} has rank one, and
\eqref{eq:anisotropic-rate} gives $\alpha_{\mathrm{CMLSI}}(\Phi;\mathbb E_\Phi)=0$.
The same example shows that the semigroup is KMS-symmetric but not
GNS-symmetric, and that
\begin{equation}\label{eq:rank-rate}
 \gamma_{\KMS}=2\lambda e^{-r}>0.
\end{equation}
Its faithful invariant density is $\rho_\beta$, with
\begin{equation}\label{eq:rank-beta}
 \beta=2\log\coth(r/2).
\end{equation}

A concrete GKSL realization uses one jump operator and a squeezing Hamiltonian:
\begin{align}
 L_1&=\sqrt{2\lambda\cosh r}
\left(\cosh(r/2)\bfa+\sinh(r/2)\bfa^\dagger\right),\label{eq:rank-L}\\
 H&=\frac{i\lambda\sinh r}{2}\bigl((\bfa^\dagger)^2-\bfa^2\bigr).
 \label{eq:rank-H}
\end{align}
With the convention
$-\cL(X)=i[H,X]-\frac12\{L_1^*L_1,X\}+L_1^*XL_1$,
substitution into \eqref{eq:ZUV}--\eqref{eq:CUV} gives precisely
\eqref{eq:anisotropic-example} with the parameters \eqref{eq:rank-ZC}.
The positive KMS rate therefore supplies the following norm estimates
even though the CMLSI constant vanishes.

By Propositions~\ref{thm:2q} and~\ref{prop:thermal-profile}, we have
\begin{equation}\label{eq:rank-forward}
 \norm{\Phi_t}_{2\to q,\rho_\beta}\leq1,
\end{equation}
whenever $\displaystyle t\geq\frac1{2\lambda e^{-r}}\log f_\beta(q)$.

For all $1/2<p<q<1$, Theorem \ref{thm:reverse} gives the explicit piecewise time using \eqref{eq:rank-rate}. 
For an explicit pair $p=2/3$, $q=3/4$, KMS symmetry and Lemma \ref{lem:transfer} give
\begin{equation}\label{eq:rank-reverse}
 \np{\Phi_t(X)}{2/3}\geq\np X{3/4}.
\end{equation}
whenever $\displaystyle t\geq\frac1{2\lambda e^{-r}}\log\bigl(2\cosh(\beta/6)\bigr)$.

Set $\lambda=2$, $r=\log2$. 
Then
\begin{gather}
 \bfZ=\diag(-4,-1),\quad \Sigma=\tfrac54I_2,
 \quad \bfC=\diag(10,5/2),\quad \beta=\log9,\label{eq:num-data}\\
 \gamma_{\KMS}=2,\qquad \alpha_{\mathrm{CMLSI}}(\Phi;\mathbb E_\Phi)=0.
 \label{eq:num-rates}
\end{gather}

\subsection{Two modes coupled to a single bath}\label{subsec:twomode}\label{subsec:norate}
Fix $\gamma>0$, $\beta>0$, and $g\in\mathbb R\setminus\{0\}$.
Only the first mode is coupled to the bath, while the Hamiltonian
$H=g(\bfa_1^\dagger\bfa_2+\bfa_2^\dagger\bfa_1)$ couples the two modes.
The generator is
\begin{equation}\label{eq:indirectgen}
\begin{aligned}
 -\cL(X)={}&i[H,X]\\
 &+\frac{\gamma e^\beta}{e^\beta-1}
 \left(\bfa_1^\dag X\bfa_1-\frac12\{\bfa_1^\dag\bfa_1,X\}\right)\\
 &+\frac{\gamma}{e^\beta-1}
 \left(\bfa_1X\bfa_1^\dag-\frac12\{\bfa_1\bfa_1^\dag,X\}\right).
\end{aligned}
\end{equation}
Put $\nu=\coth(\beta/2)$, $D=\diag(\gamma,0)$, and
$\Omega=\left(\begin{smallmatrix}0&g\\g&0\end{smallmatrix}\right)$.
The common invariant state is $\rho=\rho_\beta^{\otimes2}$, with
\begin{equation}\label{eq:indirectmatrices}
 \bfZ=\begin{pmatrix}-D/2&-\Omega\\\Omega&-D/2\end{pmatrix},
 \qquad \bfC=\nu\diag(D,D),\qquad\Sigma=\nu I_4.
\end{equation}
Equivalently, the complex Weyl-label drift is
\begin{equation}\label{eq:K2}
 K=-D/2+i\Omega=\begin{pmatrix}-\gamma/2&ig\\ig&0\end{pmatrix}.
\end{equation}
The annihilation-operator drift has the opposite Hamiltonian sign
and the same singular values.
The eigenvalues of $K$ satisfy $\lambda^2+(\gamma/2)\lambda+g^2=0$,
so $K$ is Hurwitz for every $g\ne0$.
By Proposition~\ref{prop:control}, the semigroup is irreducible and its faithful
invariant state is unique.
It is not KMS-symmetric, since the Hamiltonian contributes a nonzero
skew-adjoint part. Moreover,
\[
 \Sigma_{\KMS}=\sqrt{\nu^2-1}\,I_4,\qquad
 \bfC_{\KMS}=\sqrt{\nu^2-1}\,\diag(D,D),\qquad
 \gamma_{\KMS}=0.
\]
Both the quantum-diffusion and KMS-dissipation matrices have a kernel
on the undamped mode, although the drift is Hurwitz.

For the whole family, the time function is
$\thetaK(t)=-2\log\|e^{tK}\|_{\mathrm{op}}$.
It is nondecreasing because $K+K^*=-D\leq0$, and tends to infinity
by Hurwitz stability.
The common temperature and complex-linear drift give the following
factorization for every $g\ne0$.
Here $\mathbf F=\nu I_4+i\mathbf J$.
Fix $t\geq0$ and put $s=\thetaK(t)$ and $\mathbf S=e^{s/2}e^{t\bfZ}$.
Since $\Sigma_{\KMS}$ is scalar, the definition of $s$ gives $\|\mathbf S\|_{\mathrm{op}}=1$.
The real matrix $\bfZ$ of $K$ commutes with $\mathbf J$, so $\mathbf S\mathbf J=\mathbf J\mathbf S$. Hence
\[
 \mathbf F-\mathbf S^{\mathsf T}\mathbf F\mathbf S
 =\mathbf F(I_4-\mathbf S^{\mathsf T}\mathbf S)\geq0,
\]
because the two factors on the right are commuting positive semidefinite matrices.
The Gaussian complete-positivity criterion \cite{CEGH08} therefore gives a normal $\rho$-preserving unital completely positive map
\[
 \mathcal R(W(\mathbf z))
 =\exp\!\left(-\frac12\mathbf z^{\mathsf T}
 (\Sigma-\mathbf S^{\mathsf T}\Sigma\mathbf S)\mathbf z\right)W(\mathbf S\mathbf z).
\]
Composition with \eqref{eq:thermal-ref} yields
\[
 \Phi_t=\mathcal R\circ\Psi_s=\Psi_s\circ\mathcal R.
\]
Since $\mathcal R$ is contractive on each weighted $L_p(\rho)$, $1\leq p\leq\infty$, Proposition~\ref{thm:thermal} transfers the full thermal hypercontractivity curve to $\Phi_t$.
Taking KMS adjoints gives the same forward bound for $\Phi_t^\sharp$, since $\Psi_s^\sharp=\Psi_s$ and $\mathcal R^\sharp$ is also unital, completely positive, and $\rho$-preserving.
Thus forward hypercontractivity holds for $1<p\leq q<\infty$ when
\begin{equation}\label{eq:2mode-forward}
 \thetaK(t)\geq\hc^{-1}\log\frac{q-1}{p-1}
\end{equation}
and Lemma~\ref{lem:transfer} gives reverse hypercontractivity on $1/2<p<q<1$ when
\begin{equation}\label{eq:2mode-reverse}
 \thetaK(t)\geq\hc^{-1}
 \log\frac{(2q-1)(1-p)}{(2p-1)(1-q)}.
\end{equation}
Thus every exponent pair in these ranges has a finite sufficient
time. The next proposition rules out a positive initial slope for
forward exponent curves and a negative initial slope for reverse curves.

\begin{proposition}\label{prop:norate}
For the Gaussian quantum Markov semigroup with generator \eqref{eq:indirectgen}, no uniform forward exponent curve $q(t)$ with $q(0)=r>1$ and $q'(0)>0$ can satisfy
\begin{align*}
\np{\Phi_t(X)}{q(t)}\leq\np Xr
\end{align*}
for all inputs near $t=0$. 

No uniform reverse exponent curve $p(t)$ with $p(0)=r\in(1/2,1)$ and $p'(0)<0$ can satisfy 
\begin{align*}
\np{\Phi_t(X)}{p(t)}\geq\np Xr
\end{align*}
for all positive inputs near $t=0$.
\end{proposition}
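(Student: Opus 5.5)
The plan is to differentiate at $t=0$ along a test operator that is supported on the undamped second mode. On such an operator the first-order $L_r$-norm production of the generator \eqref{eq:indirectgen} vanishes, while the change of exponent costs a strictly positive first-order amount.

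\textbf{Choice of test operator.} Fix $r$ and take $X=1\otimes Y$, where $Y=1+\varepsilon|1\rangle\langle 1|$ is a positive, non-scalar, finite-rank perturbation of the identity on the second mode, with $\varepsilon>0$ small. Then $X$ is bounded and uniformly positive, and $\rho=\rho_\beta\otimes\rho_\beta$. The entropy
\[
\frac{d}{dq}\Big|_{q=r}\np{X}{q}=\frac{1}{r^2}\,\np Xr^{1-r}\,\Ent_{r,\rho}(X)
\]
is strictly positive, because $\Ent_{r,\rho}$ vanishes only on scalars. This identity is the usual formula for the derivative in the exponent; here $X_q=\rho^{1/(2q)}X\rho^{1/(2q)}$ commutes with the thermal weights in the Fock basis, so the computation reduces to a diagonal one.

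\textbf{Vanishing of the first-order norm change.} I would show $\np{\Phi_t(X)}{r}^r=\np Xr^r+o(t)$.
\begin{itemize}
\item The dissipative part of \eqref{eq:indirectgen} acts only on mode $1$, where $X$ equals the identity, so it annihilates $X$.
\item The Hamiltonian part gives
\[
i[H,X]=ig\bigl(\bfa_1^\dagger\otimes[\bfa_2,Y]+\bfa_1\otimes[\bfa_2^\dagger,Y]\bigr).
\]
\item The first-order derivative of $\np{\Phi_tX}r^r$ is $r\,\Tr\bigl(X_r^{r-1}\rho^{1/(2r)}(-\cL X)\rho^{1/(2r)}\bigr)$. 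On the first tensor factor this contributes $\Tr(\rho_\beta\bfa_1^{\dagger})$ and $\Tr(\rho_\beta\bfa_1)$, because $X_r^{r-1}$ is diagonal and equal to a power of $\rho_\beta$ there. Both traces vanish since $\rho_\beta$ is diagonal in the number basis.
\end{itemize}
Hence $\frac{d}{dt}\np{\Phi_tX}{r}\big|_{t=0}=0$.

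\textbf{Contradiction in both directions.} Combining the two steps, and assuming joint differentiability of $(t,q)\mapsto\np{\Phi_tX}{q}$ at $(0,r)$,
\[
\np{\Phi_t(X)}{q(t)}=\np Xr+t\,q'(0)\,\frac{\np Xr^{1-r}}{r^2}\Ent_{r,\rho}(X)+o(t).
\]
\begin{itemize}
\item If $q'(0)>0$, this exceeds $\np Xr$ for small $t>0$, contradicting the forward inequality.
\item For the reverse statement, the same expansion with $p(t)$, $p'(0)<0$ and $r\in(1/2,1)$ falls below $\np Xr$ for small $t>0$, contradicting $\np{\Phi_t(X)}{p(t)}\geq\np Xr$. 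Here $X$ is positive, so the $p$-functional is defined.
\end{itemize}
This argument only uses $g\neq0$ to place us in this family; the mechanism is the kernel of $\bfC_{\KMS}$ on the undamped mode.

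\textbf{Main obstacle: justifying the expansion.} The generator is unbounded and $H$ couples the modes, so $\Phi_t(X)$ leaves any finite-dimensional subspace; the first-order expansion at $t=0$ must therefore be justified directly. I would try to avoid the unbounded generator altogether by using the Weyl representation \eqref{eq:Tweyl}. Concretely, I would replace $Y$ by a smooth Gaussian-type perturbation, for instance $Y=1+\varepsilon\widetilde W$-combinations or $1+\varepsilon e^{-\mu\bfa_2^\dagger\bfa_2}$. The action of $\Phi_t$ on such operators is explicit and real-analytic in $t$ in trace-class weighted norms, which gives the required $C^1$ regularity of $t\mapsto(\Phi_tX)_r$ in $L_1$. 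Together with the differentiability of $u\mapsto u^r$ on operators bounded below, this justifies the expansion. If that route is awkward, the fallback is to use only the bound $\np{\Phi_tX}r^r\le\np Xr^r+o(t)$ and one-sided estimates (convexity of $q\mapsto\log\np Xq$), which suffice for the contradiction.
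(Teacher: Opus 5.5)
Your mechanism is the paper's: a test operator that commutes with $\rho$ and lives on the undamped second mode. For it, the fixed-exponent time derivative vanishes while the exponent derivative $\np{X}{r}^{1-r}\Ent_{r,\rho}(X)/r^2$ stays strictly positive. Your computations are correct: the mode-one dissipator annihilates $1\otimes Y$, and the Hamiltonian term pairs to zero, which also follows directly from $[X,\rho]=0$ and cyclicity.

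The difference is that the paper truncates, using $X_M=I+\varepsilon P_M\otimes Q_0$. Then $X_M-I$ has finite occupation support, $X_M\in\operatorname{Dom}(\cL)$, and $\cL(X_M)$ has finite matrix support, which is what licenses the trace differentiation. The price is a boundary contribution from the edge $M\leftrightarrow M+1$ of order $(M+1)e^{-(M+1)\beta}$, which is removed by taking $M$ large.

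Your untruncated $X=1\otimes Y$ gives an exactly vanishing derivative. However, $\cL(X)=-ig\bigl(\bfa_1^\dagger\otimes[\bfa_2,Y]+\bfa_1\otimes[\bfa_2^\dagger,Y]\bigr)$ is unbounded, so $X\notin\operatorname{Dom}(\cL)$, and the expansion you write down is precisely the step you leave open. The repairs you sketch do not yet close it:
\begin{itemize}
\item With $Y=1+\varepsilon e^{-\mu\bfa_2^\dagger\bfa_2}$, $\cL(X)$ is unbounded for the same reason.
\item What must be differentiated is $u\mapsto u^r$ at the compact operator $(\Phi_tX)_r$, not at an operator bounded below.
\item The fallback bound $\np{\Phi_t(X)}{r}^r\leq\np{X}{r}^r+o(t)$ points the wrong way for the forward claim. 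There $\np{\Phi_t(X)}{r}\leq\np{X}{r}$ is automatic, and a lower bound is what is needed.
\end{itemize}

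You can either adopt the paper's truncation, or keep your $X$ and move the derivative to the dual side. Since $X^{q-1}$ commutes with $\rho$ and $\np{X^{q-1}}{q/(q-1)}=\np{X}{q}^{q-1}$, put $T_q(t)=\Tr\bigl(\rho X^{q-1}\Phi_t(X)\bigr)/\np{X}{q}^{q-1}$, so that $T_q(0)=\np{X}{q}$. Then:
\begin{itemize}
\item for $q>1$, H\"older duality gives $\np{\Phi_t(X)}{q}\geq T_q(t)$;
\item for $1/2<q<1$, the reverse duality \eqref{eq:reverse-dual} gives $\np{\Phi_t(X)}{q}\leq T_q(t)$.
\end{itemize}
Next, $\Tr\bigl(\rho X^{q-1}\Phi_t(X)\bigr)=\Tr\bigl(\Phi_t^*(\sigma_q)X\bigr)$ with $\sigma_q=\rho_\beta\otimes\rho_\beta Y^{q-1}$. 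This is a Gaussian density plus a finite-rank diagonal perturbation tensored with $\rho_\beta$, so the time derivative falls on a well-behaved trace-class operator. Your computation then gives $\Tr(\cL^*(\sigma_q)X)=0$, and trace-norm continuity makes this uniform for $q$ near $r$.

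Hence $\np{\Phi_t(X)}{q(t)}\geq\np{X}{q(t)}+o(t)$ in the forward case and $\np{\Phi_t(X)}{p(t)}\leq\np{X}{p(t)}+o(t)$ in the reverse case. Your diagonal exponent-derivative formula for $\np{X}{q}$ then yields both contradictions without any joint differentiability.
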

\begin{proof}
Fix $\varepsilon>0$, put $a=e^{-\beta}$, and let
\[
 P_M=\sum_{n=0}^M|n\rangle\langle n|,\qquad
 Q_0=|0\rangle\langle0|,\qquad
 X_M=I+\varepsilon P_M\otimes Q_0.
\]
The operator $X_M-I$ has finite occupation-number support and is a Schwartz operator, so $X_M\in\operatorname{Dom}(\cL)$; see \cite[Section~3.1]{FG26}.
Moreover, $X_M\geq I$ and $\cL(X_M)$ has finite matrix support.
Thus the weighted trace differentiations below are justified.

Since $[H,\rho]=0$, the Hamiltonian contributes zero to the fixed-exponent derivative.
For the bath dissipator, only the first-mode edge $M\leftrightarrow M+1$ contributes.
The thermal upward and downward rates at occupation $n$ are $\gamma a(n+1)/(1-a)$ and $\gamma n/(1-a)$, respectively.
Using the equilibrium weights $(1-a)a^n$ and the second-mode vacuum probability $1-a$, we obtain
\[
 \Tr\bigl(\rho X_M^{r-1}\cL(X_M)\bigr)
 =(1-a)\gamma(M+1)a^{M+1}\varepsilon
 \bigl((1+\varepsilon)^{r-1}-1\bigr)\longrightarrow0.
\]
Since $X_M$ commutes with $\rho$, it follows that
\[
 \left.\frac{d}{dt}\log\np{\Phi_t(X_M)}r\right|_{t=0}
 =-\frac{\Tr\bigl(\rho X_M^{r-1}\cL(X_M)\bigr)}{\np{X_M}r^r}
 \longrightarrow0.
\]
On the other hand, with $s_M=(1-a)(1-a^{M+1})$,
\[
 \np{X_M}r^r=1-s_M+s_M(1+\varepsilon)^r,
 \qquad s_M\longrightarrow1-a\in(0,1).
\]
The exponent derivative $\partial_r\log\np{X_M}r$ therefore tends to a strictly positive number.
Indeed, it is the relative entropy of the probabilities $(1-s_M,s_M)$ tilted by the $r$th powers of $(1,1+\varepsilon)$ with respect to the original probabilities, divided by $r^2$.

For any proposed forward curve with $q'(0)>0$, the chain rule gives a strictly positive total initial derivative when $M$ is sufficiently large, contradicting the forward inequality.
For any proposed reverse curve with $p'(0)<0$, it gives a strictly negative total initial derivative when $M$ is sufficiently large, contradicting the reverse inequality.
This completes the proof of the proposition.
\end{proof}

\subsection{Critical coupling and explicit time bounds}\label{subsec:criticalcoupling}
Specialize the model in Section~\ref{subsec:twomode} to $g=\gamma/4$.
The two eigenvalues of \eqref{eq:K2} coincide at $-\gamma/4$.
For $u=\gamma t/4$,
\begin{equation}\label{eq:expK2}
 e^{tK}=e^{-u}\begin{pmatrix}1-u&iu\\iu&1+u\end{pmatrix}.
\end{equation}
The two singular values of the matrix without $e^{-u}$ are
$\sqrt{1+u^2}\pm u$. Hence
\begin{equation}\label{eq:eta2}
 \thetaK(t)=2\bigl(u-\arsinh u\bigr),\qquad u=\gamma t/4.
\end{equation}
Substitution into \eqref{eq:2mode-forward}--\eqref{eq:2mode-reverse}
gives explicit sufficient times for forward and reverse hypercontractivity.
This function is strictly increasing for $t>0$ and tends to infinity,
while at zero
\begin{equation}\label{eq:cubic}
 \thetaK(t)=\frac{\gamma^3}{192}t^3+O(t^5).
\end{equation}
Its cubic behavior is consistent with the obstruction in
Proposition~\ref{prop:norate}.

For $\gamma=4$ and $\beta=\log9$, the Hamiltonian coupling and
the two dissipative coefficients in \eqref{eq:indirectgen} become
\begin{equation}\label{eq:network-numeric}
 g=1,\qquad
 \frac{\gamma e^\beta}{e^\beta-1}=\frac92,\qquad
 \frac{\gamma}{e^\beta-1}=\frac12.
\end{equation}
The equal-temperature value \eqref{eq:h-equal} is
\[
 \hc=\frac2{2+\log45},\qquad
 \thetaK(4)=2(4-\arsinh4)\simeq3.810575.
\]
Since $\hc^{-1}\log3\simeq3.189635<\thetaK(4)$,
\eqref{eq:2mode-forward}--\eqref{eq:2mode-reverse} give
\begin{equation}\label{eq:network-both}
 \np{\Phi_4(X)}4\leq\np X2,\qquad
 \np{\Phi_4(X)}{2/3}\geq\np X{4/5},
\end{equation}
where $X\geq0$ in the second inequality.

\subsection{Countably many modes: normality and the role of uniformity}\label{subsec:infiniteconstruction}
In this section, we focus on the infinitely many modes.
This shall explain why ``finitely many modes'' cannot be dropped without further uniform estimates.

Let $\cH_\infty=\cF_s(\ell^2(\mathbb N))$, $\beta_j=j$, and
\begin{equation}\label{eq:infiniterho}
 \rho_\infty=\mathcal Z^{-1}\exp\!\left(-\sum_{j\geq1}j\bfa_j^\dag\bfa_j\right),
 \qquad \mathcal Z=\prod_{j\geq1}(1-e^{-j})^{-1}<\infty.
\end{equation}
It is faithful and normal on the physical Fock space: 
finite occupation vectors form a basis, every eigenvalue is positive, and the product partition function converges. 
Its mean total particle number is finite because $\displaystyle \sum_j(e^j-1)^{-1}<\infty$.

For positive bounded rates $\gamma_j$, put $K_t e_j=e^{-\gamma_jt/2}e_j$ and $\Sigma_\infty e_j=\coth(j/2)e_j$.
We construct a normal quantum Markov semigroup with Weyl action
\begin{equation}\label{eq:infiniteweyl}
 \Phi_t(W(z))=
 \exp\!\left(-\frac12\langle z,(\Sigma_\infty-K_t^*\Sigma_\infty K_t)z\rangle\right)
 W(K_tz).
\end{equation}
On two copies of $\ell^2(\mathbb N)$, the block operator
\[
 \begin{pmatrix}
 K_t&(I-K_t^2)^{1/2}\\
 -(I-K_t^2)^{1/2}&K_t
 \end{pmatrix}
\]
is unitary. Let $U_t$ be its second quantization, using the identification
$\cF_s(\ell^2\oplus\ell^2)=\cH_\infty\otimes\cH_\infty$.
With $\omega_\infty(Y)=\Tr(\rho_\infty Y)$, define
\[
 \Phi_t(X)=(\operatorname{id}\otimes\omega_\infty)
 \bigl(U_t^*(X\otimes I)U_t\bigr).
\]
Since $\omega_\infty$ is normal, this is a normal unital completely positive map.
The second-quantization identity gives
\[
 U_t^*(W(z)\otimes I)U_t
 =W(K_tz)\otimes W((I-K_t^2)^{1/2}z).
\]
Taking the thermal expectation proves \eqref{eq:infiniteweyl}, because $K_t$ commutes with $\Sigma_\infty$.
The identity $K_{s+t}=K_sK_t$ and the Gaussian factors in \eqref{eq:infiniteweyl} give the semigroup law on Weyl operators; normality and the ultraweak density of their linear span extend it to $\cB(\cH_\infty)$.
Likewise, the thermal characteristic function gives $\omega_\infty\circ\Phi_t=\omega_\infty$.
The one-particle unitaries, and hence $U_t$, converge strongly to the identity as $t\downarrow0$.
For every trace-class $T$, the predual therefore satisfies
\[
 \|\Phi_{t,*}(T)-T\|_1
 \leq\|U_t(T\otimes\rho_\infty)U_t^*-T\otimes\rho_\infty\|_1
 \longrightarrow0.
\]
Thus the predual semigroup is strongly continuous in trace norm, and $\Phi_t$ is weak$^*$ continuous.

The weak$^*$ generator is given on finite-mode, finite-particle matrix units by the GKSL expression
\begin{equation}\label{eq:infinitegen}
 -\cL(X)=\sum_{j\geq1}\gamma_j
 \left(\frac{e^j}{e^{j}-1}\left(\bfa_j^\dag X\bfa_j-\frac{1}{2}\{\bfa_j^\dag \bfa_j, X\}\right)+\frac{1}{e^{j}-1}\left(\bfa_j X\bfa_j^\dag-\frac{1}{2}\{\bfa_j \bfa_j^\dag, X\}\right)\right).
\end{equation}
More precisely, let $F\subset\mathbb N$ be finite and let $X=X_F\otimes I$, where $X_F$ is a finite linear combination of occupation-basis matrix units on $\cF_s(\ell^2(F))$.
On $\cB(\cF_s(\ell^2(F)))\otimes I$, the channel restricts to the finite-mode thermal semigroup, so such $X$ belongs to the weak$^*$ generator domain and \eqref{eq:infinitegen} follows by finite-mode differentiation.
For $j\notin F$ the corresponding summand vanishes on finite-particle vectors and extends as zero; for $j\in F$ all products in that summand extend to bounded operators.
Thus \eqref{eq:infinitegen} is a finite sum on this test class.
The semigroup is GNS symmetric.

If every $\gamma_j>0$, then $K_tz\to0$ for every $z\in\ell^2$ by dominated convergence. The characteristic-function invariance argument used in Lemma~\ref{lem:endstructure} shows that \eqref{eq:infiniterho} is the unique invariant normal state. In $L_2(\rho_\infty)$, finite products of centered thermal-chaos vectors, together with $1$, are total and every nonconstant product has strictly positive decay rate. Hence the fixed algebra is scalar. These examples are primitive in this explicit sense, though their convergence need not be uniform over all inputs.

Set $\gamma_j=1$ for every mode. The product thermal-chaos basis shows that the KMS numerical-range gap is $1/2$: the $n$th total chaos decays by $e^{-nt/2}$ and the first chaos attains the rate. In particular, this example has a positive gap independent of the mode index.

\begin{proposition}\label{prop:infiniteforward}
For the primitive GNS-symmetric semigroup \eqref{eq:infiniteweyl} with $\gamma_j=1$,
\begin{equation}\label{eq:infiniteforwardfail}
 \norm{\Phi_t}_{p\to q,\rho_\infty}=\infty
 \quad(t<\infty,\ 1<p<q<\infty).
\end{equation}
\end{proposition}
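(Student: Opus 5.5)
The plan is to reduce to a single mode by restricting to inputs that act nontrivially on only one mode $j$, and then let $j\to\infty$ using the exact one-mode profile of Proposition~\ref{prop:thermal-profile}. Fix $1<p<q<\infty$ and $t<\infty$.

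First, for $X=X_j\otimes I$ with $X_j\in\cB(\cF_s(\mathbb C e_j))$ (tensor with the identity on all other modes), the Weyl formula \eqref{eq:infiniteweyl} shows that $\Phi_t(X)=e^{-t\cL_{1,j}}(X_j)\otimes I$. Here $\cL_{1,j}$ is the one-mode thermal generator \eqref{eq:thermal-L} with $\gamma=1$ and $\beta=j$. Since $\rho_\infty=\rho_j\otimes\rho_{\neq j}$ with $\rho_j=\rho_{\beta=j}$, the weighted norms factor:
\[
\np{X_j\otimes I}{r}=\|X_j\|_{r,\rho_j}.
\]
Strictly, the $L_r$ norm with respect to $\rho_\infty$ of $X_j\otimes I$ equals that of $X_j$ with respect to $\rho_j$, because the complementary factor is $\rho_{\ne j}^{1/r}$, whose Schatten $r$-norm is one. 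Hence
\[
\norm{\Phi_t}_{p\to q,\rho_\infty}\geq\norm{e^{-t\cL_{1,j}}}_{p\to q,\rho_j}
\]
for every $j$.

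Second, by Proposition~\ref{prop:thermal-profile} with $\gamma=1$, the right side is $+\infty$ as soon as $t<\log\bigl(f_j(q)/f_j(p)\bigr)$. So it suffices to show that $f_\beta(q)/f_\beta(p)\to\infty$ as $\beta\to\infty$ when $p<q$. From \eqref{eq:f-beta}, for large $\beta$,
\[
f_\beta(r)\sim\exp\Bigl(\frac{\beta}{2}\Bigl(\frac{r-1}{r}-\frac1r\Bigr)\Bigr)=\exp\Bigl(\frac{\beta(r-2)}{2r}\Bigr).
\]
Consequently
\[
\log\frac{f_\beta(q)}{f_\beta(p)}=\frac{\beta}{2}\Bigl(\frac{q-2}{q}-\frac{p-2}{p}\Bigr)+o(\beta)=\beta\Bigl(\frac1p-\frac1q\Bigr)+o(\beta)\to\infty.
\]
Here one must be careful when $r-1<1$, that is when $r<2$. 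In that regime $\sinh(\beta(r-1)/(2r))$ and $\sinh(\beta/(2r))$ both still grow exponentially, so the asymptotic holds for all $r>1$. Choosing $j$ with $\log\bigl(f_j(q)/f_j(p)\bigr)>t$ gives \eqref{eq:infiniteforwardfail}.

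The main obstacle is the first step: justifying rigorously that the restriction to the $j$th mode is exactly the one-mode thermal semigroup, and that the norm equality holds in the infinite tensor product. The witnesses $X_{z,M}$ from the proof of Proposition~\ref{prop:thermal-profile} are bounded diagonal finite-rank operators on mode $j$, so $X_{z,M}\otimes I$ is bounded. The semigroup action follows from the Weyl formula restricted to $z\in\mathbb C e_j$ together with normality. The norm factorization follows from $\rho_\infty$ being a product with the complementary factor a normal state.
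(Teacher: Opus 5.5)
Your proof is correct. It shares the paper's outer strategy: test on observables supported on a single mode $j$, use the product structure of $\rho_\infty$, and let $j\to\infty$. The key estimate, however, is different.

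The paper does not use the exact one-mode profile. It takes the bounded witness $P_j$, the occupation-one projection on mode $j$, with $\norm{P_j}_{p,\rho_\infty}=s_j^{1/p}$ and $s_j=(1-e^{-j})e^{-j}$. It then bounds the occupation-one diagonal entry of $\Phi_t(P_j)$ from below by the no-jump probability, which gives
\[
 \frac{\norm{\Phi_t(P_j)}_{q,\rho_\infty}}{\norm{P_j}_{p,\rho_\infty}}
 \geq e^{-(1+\frac{3}{e^{j}-1})t}s_j^{1/q-1/p}\longrightarrow\infty .
\]
There each single mode gives only a finite ratio, and the norm is infinite because the supremum over $j$ is.

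In your route, Proposition~\ref{prop:thermal-profile} already makes the one-mode norm infinite for every $j$ with $\log(f_j(q)/f_j(p))>t$, and it identifies the exact per-mode threshold. The cost is reliance on that proposition's generating-function witnesses $X_{z,M}$ and on the equivalence, established there, between $u_t(b^{1/p})>b^{1/q}$ and $e^{-t}>f_\beta(p)/f_\beta(q)$. The paper's argument needs only the two jump rates out of occupation one and a trivial lower bound on one diagonal entry.

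Your asymptotic can also be made exact. Since $f_\beta(r)=e^{-\beta/2}(e^{\beta}-e^{\beta/r})/(e^{\beta/r}-1)$, you get $\log(f_j(q)/f_j(p))=j(1/p-1/q)+o(1)$. This is the same time scale up to which the paper's projection lower bound exceeds one.
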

\begin{proof}
Let $P_j$ be the occupation-one projection on mode $j$, tensored with identities on the other modes. Its equilibrium probability is
\[
 s_j=(1-e^{-j})e^{-j},\qquad \norm{P_j}_{p,\rho_\infty}=s_j^{1/p}.
\]
The restriction of the thermal generator to diagonal observables is the birth--death generator with upward rate $\displaystyle \frac{n+1}{e^{j}-1}$ and downward rate $\displaystyle \frac{n e^j}{e^j-1}$. 
Starting from $n=1$, the probability of no jump by time $t$ is $e^{-(1+\frac{3}{e^{j}-1})t}$. 
Therefore
\[
 (\Phi_t(P_j))(1)\geq e^{-(1+\frac{3}{e^{j}-1})t},\qquad
 \frac{\norm{\Phi_t(P_j)}_{q,\rho_\infty}}{\norm{P_j}_{p,\rho_\infty}}
 \geq e^{-(1+\frac{3}{e^{j}-1})t}s_j^{1/q-1/p}\longrightarrow\infty.
\]
This proves \eqref{eq:infiniteforwardfail} using bounded local witnesses.
\end{proof}


Now set $\gamma_j=1/j$, with the same invariant density \eqref{eq:infiniterho}. 
Each mode relaxes, but there is no uniform positive relaxation rate.

\begin{proposition}\label{prop:infinitereverse}
For every finite $t$ and every fixed $1/2<p<q<1$, there exists a bounded $X\geq0$ such that
\begin{equation}\label{eq:infinitereversefail}
 \norm{\Phi_t(X)}_{p,\rho_\infty}<\norm X_{q,\rho_\infty}.
\end{equation}
\end{proposition}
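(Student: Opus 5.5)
The plan is to build a bounded positive witness supported on a single slow, cold mode $j$, then let $j\to\infty$ with $t$, $p$, $q$ fixed. The mechanism is that when $\gamma_j=1/j$ and $\beta_j=j$, mode $j$ barely moves in time $t$, but its excited sector has tiny equilibrium weight. Because of this, the gap between $\np{\cdot}{p}$ and $\np{\cdot}{q}$ on an excited-state indicator grows exponentially in $j$, while the semigroup leaks only an exponentially small amount of mass into the vacuum. As in the proof of Proposition~\ref{prop:infiniteforward}, all operators involved are diagonal in the occupation basis, so every weighted norm reduces to a classical $\ell_r$ norm against the product thermal distribution.

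\textbf{Witness and input norm.} Take $X=X_j=(I-Q_0^{(j)})$, where $Q_0^{(j)}$ is the vacuum projection of mode $j$, tensored with identities on all other modes. Then $0\leq X\leq I$ is bounded, and it is the indicator of the event $\{n_j\geq1\}$. That event has equilibrium probability $s_j=e^{-j}$. Hence $\np{X}{q}=e^{-j/q}$.

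\textbf{Output bound.} By the product structure of \eqref{eq:infiniteweyl}, $\Phi_t$ acts on $X$ only through the one-mode thermal semigroup of mode $j$. On diagonal observables this is the birth--death chain with upward rate $\frac{(n+1)/j}{e^{j}-1}$ and downward rate $\frac{n e^{j}/j}{e^{j}-1}$.
\begin{itemize}
\item On occupations $n\geq1$, use the trivial bound $\Phi_t(X)(n)\leq1$.
\item On $n=0$, the value $\Phi_t(X)(0)$ is the probability that the chain started at $0$ has left $0$ by time $t$. This is at most the probability of at least one jump from $0$, which is bounded by $t\cdot\frac{1/j}{e^j-1}\leq c\,t\,e^{-j}/j$.
\end{itemize}
Since $\Phi_t(X)$ is diagonal and commutes with $\rho_\infty$, combining the two bounds gives
\[
\np{\Phi_t(X)}{p}^p\leq (1-e^{-j})\bigl(c\,t\,e^{-j}/j\bigr)^p+e^{-j}\leq C_t\,e^{-pj}.
\]
The last inequality uses $p<1$, so that $e^{-j}\leq e^{-pj}$. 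Therefore $\np{\Phi_t(X)}{p}\leq C_t^{1/p}e^{-j}$.

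\textbf{Conclusion.} Comparing exponents, $e^{-j}<e^{-j/q}$ decays strictly faster because $q<1$. Hence for all sufficiently large $j$ we get $\np{\Phi_t(X)}{p}\leq C_t^{1/p}e^{-j}<e^{-j/q}=\np{X}{q}$, which is \eqref{eq:infinitereversefail}.

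The only step needing care is the justification of the diagonal reduction and the jump bound. This means checking that $\Phi_t$ restricted to diagonal functions of $n_j$ is the classical birth--death semigroup, exactly as used for Proposition~\ref{prop:infiniteforward}, and that the first-jump probability bound is legitimate. The latter follows from the exponential holding time at $0$, since $1-e^{-\lambda t}\leq\lambda t$. The argument uses neither $p>1/2$ nor the size of $q-p$, only $p<q<1$. This reflects that the failure comes from non-uniformity across modes, not from the exponent range.
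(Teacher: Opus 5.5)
\paragraph{The comparison step is reversed, and the witness fails.} Your input norm $\norm{X}_{q,\rho_\infty}=e^{-j/q}$ is correct, and so is your output bound $\norm{\Phi_t(X)}_{p,\rho_\infty}\leq C_t^{1/p}e^{-j}$. The final comparison, however, goes the wrong way. Since $q<1$ we have $1/q>1$, so $e^{-j/q}<e^{-j}$, and your upper bound is much \emph{larger} than $\norm{X}_{q,\rho_\infty}$ for large $j$.

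A sharper estimate cannot repair this, because the witness $X=I-Q_0^{(j)}$ genuinely fails. Fix $t>0$ and $j\geq1$. With probability at least $e^{-7t}\,t\,e^{-j}/j$, the chain started at the vacuum makes one upward jump and then stays at $1$. So $\Phi_t(X)$ is at least this large on a set of mass $1-e^{-j}$. Hence $\norm{\Phi_t(X)}_{p,\rho_\infty}\geq c_t\,(t/j)\,e^{-j}\gg e^{-j/q}$: the reverse inequality actually \emph{holds} for this $X$ once $j$ is large. For $p<1$ the functional heavily rewards small values spread over large mass, and that is exactly the vacuum leakage you estimated.

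A sanity check would have flagged this. Your argument never uses $p<q$, only $p<1$ and $q<1$ separately. With $q=p$ it would therefore contradict Jensen's inequality $\norm{\Phi_t(X)}_{p,\rho_\infty}\geq\norm{X}_{p,\rho_\infty}$. That inequality holds for diagonal $X\geq0$, because $u\mapsto u^p$ is concave and the diagonal restriction is a Markov operator preserving the thermal distribution.

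\paragraph{A repair and comparison with the paper.} Your single-mode birth--death reduction does work with the complementary witness $X=Q_0^{(j)}$, which perturbs $I$ on a set of mass $e^{-j}$. Then $\norm{X}_{q,\rho_\infty}^p=(1-e^{-j})^{p/q}$. On the output side, $\Phi_t(X)(0)\leq1$. For $n\geq1$, $\Phi_t(X)(n)$ is at most the probability of at least one jump from $n$, which is at most $3(n+1)t/j$ when $j\geq1$. Summing against the geometric weights gives $\norm{\Phi_t(X)}_{p,\rho_\infty}^p\leq(1-e^{-j})\bigl(1+C(t/j)^pe^{-j}\bigr)$, with $C$ independent of $j$ and $t$. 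Therefore
\[
\log\norm{\Phi_t(X)}_{p,\rho_\infty}^p-\log\norm{X}_{q,\rho_\infty}^p\leq e^{-j}\Bigl(C(t/j)^p-\bigl(1-\tfrac pq\bigr)\Bigr)<0
\]
for $j$ large. Here $p<q$ enters through $1-p/q>0$, and the slowness of the modes enters through $\gamma_jt=t/j\to0$.

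The paper instead uses the witness $e^{-\varepsilon\bfa_j^\dagger\bfa_j}$ and expands to second order in $\varepsilon$. The resulting coefficient $\frac{v_j}{2}\bigl((1-q)-(1-p)e^{-2\gamma_jt}\bigr)$ yields the sharp per-mode necessary time $t\geq\frac{1}{2\gamma_j}\log\frac{1-p}{1-q}$. The corrected indicator gives a cruder but fully explicit, non-perturbative witness.
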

\begin{proof}
Let $X_{j,\varepsilon}=e^{-\varepsilon \bfa_j^\dag\bfa_j}$, $\varepsilon>0$ and write
\[
 v_j=\frac{e^{j}}{(e^{j}-1)^2}>0.
\]
The mean-number formula is
\[
 \Phi_t(\bfa_j^\dag\bfa_j)=\frac{1}{e^{j}-1}+e^{-\gamma_jt}(\bfa_j^\dag\bfa_j-\frac{1}{e^{j}-1}).
\]
Since all thermal number moments are finite, expansion at $\varepsilon=0$ gives
\begin{equation}\label{eq:reverseexpansion}
 \log\norm{\Phi_t(X_{j,\varepsilon})}_{p,\rho_\infty}
 -\log\norm{X_{j,\varepsilon}}_{q,\rho_\infty}
 =\frac{\varepsilon^2v_j}{2}
 \left((1-q)-(1-p)e^{-2\gamma_jt}\right)+o(\varepsilon^2).
\end{equation}
A necessary condition for the reverse inequality is therefore
\[
 t\geq\frac1{2\gamma_j}\log\frac{1-p}{1-q}\quad\text{for every }j.
\]
For $\gamma_j=1/j$, these necessary times diverge. Given $t$, choose $j$ so that the coefficient in \eqref{eq:reverseexpansion} is strictly negative, then choose $\varepsilon>0$ sufficiently small. 
The witness is bounded. 
Adding a sufficiently small multiple of $1$ also gives a bounded uniformly positive witness to the strict violation.
\end{proof}

An exact evaluation is available. Put $z=e^{-\varepsilon}$, $\theta=e^{-\gamma_jt}$, and
\[
 d_t=1+\frac{1}{e^{j}-1}(1-\theta)(1-z),\qquad c_t=d_t^{-1},
 \qquad u_t=1-\frac{\theta(1-z)}{d_t}.
\]
The birth--death generating-function equation gives $\Phi_t(z^{\bfa_j^\dag\bfa_j})=c_tu_t^{\bfa_j^\dag\bfa_j}$. 
Consequently,
\begin{align}
 \norm{\Phi_t(X_{j,\varepsilon})}_{p,\rho_\infty}
 &=c_t\left(\frac{1-e^{-j}}{1-e^{-j}u_t^p}\right)^{1/p},\label{eq:reverseexact1}\\
 \norm{X_{j,\varepsilon}}_{q,\rho_\infty}
 &=\left(\frac{1-e^{-j}}{1-e^{-j}e^{-q\varepsilon}}\right)^{1/q}.
 \label{eq:reverseexact2}
\end{align}
For $t=1$, $j=8$, $\varepsilon=0.1$, $p=2/3$ and $q=3/4$, the value in \eqref{eq:reverseexact1} is approximately $0.999967656766767$
and the value in \eqref{eq:reverseexact2} is approximately $0.999967670922434$.
The strict inequality has the wrong sign for reverse hypercontractivity. 
The same slow-mode example also fails forward boundedness by the projection argument of Proposition~\ref{prop:infiniteforward}.

\section{Logarithmic Sobolev Inequalities}\label{sec:lsi}

In this section, we shall study the $p$-logarithmic Sobolev inequality for $0<p<\infty$.
The symbols $L_\ell$ denote the jump operators.
For a bounded $X\geq\varepsilon 1$, $\varepsilon>0$, put
 \begin{equation}\label{eq:escort}
 X_p=j_p(X)=\rho^{\frac{1}{2p}} X\rho^{\frac{1}{2p}}.
 \end{equation}
For density operators $\omega,\eta$, $H(\omega\Vert\eta)$ denotes Umegaki relative entropy, with value $+\infty$ allowed. On a class for which it is finite, define
\begin{equation}\label{eq:Entfull}
 \Ent_{p,\rho}(X)= \np Xp^p H\left(\left.\frac{X_p^p}{\np Xp^p}\right\Vert\rho\right).
\end{equation}
Throughout this section, expressions involving $\mathcal L$
are evaluated on bounded uniformly positive operators
$X\in\operatorname{Dom}(\mathcal L)$, where
$\operatorname{Dom}(\mathcal L)$ denotes the domain of the
weak$^*$-generator $-\mathcal L$.
In the variational definition below, we restrict to operators
for which the displayed trace pairings are finite and the map
\[
(r,t)\longmapsto\log\|\Phi_t(X)\|_{r,\rho}
\]
is differentiable at $(p,0)$, with derivatives given by the
trace differentiation formulas below.
Differentiation in $t$ at $t=0$ is understood from the right.
For $p\ne1$, the Dirichlet form is
\begin{equation}\label{eq:Ep}
 \mathcal E_{p,\cL}(X)
 =\frac{p^2}{4(p-1)}\Tr\!\left(X_p^{p-1}j_p(\cL (X))\right).
\end{equation}
For $0<p<1$, the trace pairing is understood as
$\operatorname{Tr}\!\left(
X_p^{(p-1)/2}\,
j_p(\mathcal L(X))\,
X_p^{(p-1)/2}
\right)$,
with the product interpreted in the quadratic-form sense.
At $p=1$ the limiting weighted energy, when finite, is
\[
 \mathcal E_1(X)=\frac14\Tr\bigl(j_1(\cL( X))(\log j_1(X)-\log\rho)\bigr).
\]
These are the entropy and $p\hat p/4$ Dirichlet normalizations of \cite{BDR20}, where $\hat p$ is the conjugate dual of $p$.
In particular, $\mathcal E_2(X)=\langle X,\cL(X)\rangle_\rho.$

The $p$-logarithmic Sobolev constant is defined by
\[
\alpha_p(\mathcal L)
=
\inf_{\substack{
X\in\operatorname{Dom}(\mathcal L),\;
X\geq\varepsilon_X I,\;\varepsilon_X>0\\
0<\operatorname{Ent}_{p,\rho}(X)<\infty,\;
|\mathcal E_{p,\mathcal L}(X)|<\infty
}}
\frac{\mathcal E_{p,\mathcal L}(X)}
{\operatorname{Ent}_{p,\rho}(X)},
\]
where the infimum is restricted to the differentiability
class specified above.
The infimum may be negative and may equal $-\infty$
when $0<p<1/2$.
On this domain, the following differentiation identities hold:
\begin{align}
 \frac{d}{dp}\log\np Xp
 &=\frac{\Ent_{p,\rho}(X)}{p^2\np Xp^p},\label{eq:pderivative}\\
 \left.\frac d{dt}\log\np{\Phi_t(X)}p\right|_{t=0}
 &=\frac{4(1-p)}{p^2\np Xp^p}\mathcal E_{p, \cL}(X).\label{eq:tderivative}
\end{align}
At $p=1$, for a KMS-symmetric quantum Markov semigroup,
\begin{equation}\label{eq:E1relative}
 \mathcal E_1(X)=\frac{\np X1}4\mathcal I\left(\frac{j_1(X)}{\np X1}\right),\qquad
 \mathcal I(\omega)=\Tr\bigl(\cL^*(\omega)(\log\omega-\log\rho)\bigr).
\end{equation}
The constants $\alpha_p(\cL)$ are defined without ancillary amplification, on the stated differentiability class.
In the KMS-symmetric case, \eqref{eq:E1relative} identifies the $p=1$ quotient with $\mathcal I(\omega)/(4H(\omega\Vert\rho))$ for the corresponding densities.
This does not identify $\alpha_1(\cL)$ with $\alpha_{\mathrm{CMLSI}}/4$: the latter constant includes all matrix amplifications and uses entropy relative to the fixed-point conditional expectation.


A negative $p$-Dirichlet form for $p<1/2$ is not negative physical relative-entropy production; 
it is failure of reverse contractivity for a different weighted functional.

\begin{proposition}\label{prop:positiveLSI}
Suppose that $\gamma_{\KMS}>0$.
For $r>1$,
\begin{equation}\label{eq:forward-LSI}
 \alpha_r(\cL)\geq \frac{\gamma_{\KMS}\hc}{8}
 \min\left\{r,\frac{r}{r-1}\right\}.
\end{equation}
In particular,
\[
 \alpha_2(\cL)\geq\frac{\gamma_{\KMS}\hc}{4}.
\]
For $1/2<r<1$,
\begin{equation}\label{eq:reverse-LSI}
 \alpha_r(\cL)\geq\frac{\gamma_{\KMS}\hc}{8}
 \begin{cases}
 \displaystyle\frac{r(2r-1)}{1-r},&\displaystyle\frac12<r\leq\frac23,\\[4pt]
 r,&\displaystyle\frac23\leq r<1.
 \end{cases}
\end{equation}
\end{proposition}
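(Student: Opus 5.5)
The plan is to differentiate the hypercontractive exponent curves of Theorem~\ref{thm:forward} and Theorem~\ref{thm:reverse} at $t=0$, using the identities \eqref{eq:pderivative} and \eqref{eq:tderivative}. Write $a=\gamma_{\KMS}\hc$.

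\emph{Forward range $r>1$.} For $r\geq2$ I use the first case of \eqref{eq:Xi} with $p=r$: the estimate $\np{\Phi_t(X)}{q(t)}\leq\np Xr$ holds along the curve
\[
q(t)=\tfrac r2\bigl(1+e^{at}\bigr),\qquad q(0)=r,\quad q'(0)=\tfrac{ar}{2}.
\]
Let $X$ be an admissible operator in the differentiability class. The function
\[
g(t)=\log\np{\Phi_t(X)}{q(t)}-\log\np Xr
\]
satisfies $g(0)=0$ and $g\leq0$, so $g'(0^+)\leq0$. By the chain rule,
\[
q'(0)\frac{\Ent_{r,\rho}(X)}{r^2\np Xr^r}-\frac{4(r-1)}{r^2\np Xr^r}\mathcal E_{r,\cL}(X)\leq 0 .
\]
This gives $\mathcal E_r\geq \frac{a r}{8(r-1)}\Ent_r$, i.e.\ the bound $\frac a8\min\{r,r/(r-1)\}$, since $r/(r-1)\leq r$ for $r\geq2$.

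For $1<r\leq2$ I use the second case of \eqref{eq:Xi}, with $p=r$ as the start and $p\leq q\leq2$. The time condition $a t=\log(2p'/q'-1)$ gives $q'(t)=\tfrac{p'}{2}(1+e^{at})$. Converting via $q=q'/(q'-1)$, we have $q=\psi(q')$ with
\[
\psi(x)=\frac{x}{x-1},\qquad \psi'(x)=-\frac{1}{(x-1)^2}.
\]
Since $p'=r/(r-1)$ and $p'-1=1/(r-1)$, this yields
\[
q'(0)=-(r-1)^2\cdot\tfrac{a p'}{2}\cdot(-1)=\tfrac{a r(r-1)}{2}.
\]
Then $\mathcal E_r\geq \frac{q'(0)}{4(r-1)}\Ent_r=\frac{ar}{8}\Ent_r$. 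This matches $\frac a8\min\{r,r/(r-1)\}=\frac{ar}{8}$ for $r\leq2$. At $r=2$ both cases give $a/4$.

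\emph{Reverse range $1/2<r<1$.} I use Theorem~\ref{thm:reverse}, which is based on Lemma~\ref{lem:transfer} applied to $\Phi_t^\sharp$. Set $\hat r=r/(1-r)>1$. For the curve starting at $q=r$, the decreasing $p(t)$ comes from the forward curve for $\Phi_t^\sharp$ starting at the transformed exponent $\hat p(t)$ and ending at $\hat q=\hat r$. The inverse map is $p=\hat p/(1+\hat p)$, with $dp/d\hat p=(1-p)^2$.

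In the regime $\hat r\leq2$ (i.e.\ $r\leq2/3$), I use the first case of \eqref{eq:Xi} read backwards: $\hat q=\tfrac{\hat p}{2}(1+e^{at})$ with $\hat q=\hat r$ fixed. This gives $\hat p'(0)=-a\hat r/2$. In the regime $\hat r\geq2$ (i.e.\ $r\geq2/3$), I use the second case with conjugates. The aim is to get $\hat p'(0)=-\frac a2\hat r(\hat r-1)$ in one regime and $-\frac a2\hat r$ in the other; I will verify the assignment by direct computation.

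Then $p'(0)=(1-r)^2\hat p'(0)$. The inequality $\np{\Phi_t(X)}{p(t)}\geq\np Xr$ with equality at $t=0$ gives $\tfrac{d}{dt}\geq0$:
\[
p'(0)\frac{\Ent_r}{r^2}+\frac{4(1-r)}{r^2}\mathcal E_r\geq0,\qquad\text{i.e.}\qquad \mathcal E_r\geq \frac{-p'(0)}{4(1-r)}\Ent_r .
\]
With $\hat p'(0)=-\frac a2\hat r(\hat r-1)$ and $\hat r-1=(2r-1)/(1-r)$, this becomes $\frac a8\,\frac{r(2r-1)}{1-r}$. With $\hat p'(0)=-\frac a2\hat r$, it becomes $\frac a8\,r$. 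These are the two branches of \eqref{eq:reverse-LSI}, and they agree at $r=2/3$. I must confirm that the branch giving $r(2r-1)/(1-r)$ is the one valid for $r\leq2/3$.

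\emph{Main obstacle.} The main difficulty is justifying the differentiation at $t=0^+$ jointly in exponent and time. Here the restriction of the infimum to the stated differentiability class does the work: $(r,t)\mapsto\log\np{\Phi_t(X)}r$ is differentiable at $(r,0)$, so the chain rule applies. Theorem~\ref{thm:forward} holds for all bounded $X$, and Theorem~\ref{thm:reverse} for bounded positive $X$, so no extension to the domain is needed.

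A second point needs checking. The reverse argument uses the forward bound for $\Phi^\sharp$, which Theorem~\ref{thm:forward} supplies. Its time profile is the one entering $\mathfrak t$, and I must make sure this profile matches the curve I differentiate at the starting exponent $\hat p$ rather than the ending exponent.
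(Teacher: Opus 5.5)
Your strategy is the paper's: differentiate at $t=0^+$ the curves $\Xi(r,q(t))=at$ and $\Xi(\hat p(t),\hat r)=at$, with $a=\gamma_{\KMS}\hc$, using \eqref{eq:pderivative}--\eqref{eq:tderivative} on the differentiability class. These curves come from Theorems~\ref{thm:forward} and~\ref{thm:reverse}.

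\textbf{Reverse range: the cases are attached to the wrong regimes.} In the reverse range you attach the two cases of \eqref{eq:Xi} to the wrong values of $r$. This is exactly the step that decides the statement.
\begin{itemize}
\item For $r\le 2/3$ one has $\hat p(t)<\hat r\le 2$. The first case of \eqref{eq:Xi} requires $2\le\hat p\le\hat q$, since its proof interpolates with parameter $2/\hat p$, so it is simply unavailable. Only the second case applies: $\hat p(t)/(\hat p(t)-1)=\frac{\hat r}{2(\hat r-1)}(1+e^{at})$. This gives $\frac{d\hat p}{dt}(0)=-\frac a2\hat r(\hat r-1)$, then $-\frac{dp}{dt}(0)=\frac a2r(2r-1)$, and the branch $\frac a8\frac{r(2r-1)}{1-r}$.
\item For $r>2/3$ one has $\hat p(t)>2$ for small $t$, so the first case applies: $\hat p(t)=2\hat r/(1+e^{at})$. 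This gives $\frac{d\hat p}{dt}(0)=-\frac a2\hat r$, then $-\frac{dp}{dt}(0)=\frac a2r(1-r)$, and the branch $\frac a8r$.
\item At $r=2/3$ both give $-\frac{dp}{dt}(0)=a/9$.
\end{itemize}
Your announced assignment would claim the larger of the two expressions in each range. That is not merely unproved but false in general. The example of Theorem~\ref{thm:endpoint} has $\gamma_{\KMS}>0$, yet $\alpha_r(\cL)\le\delta(r)\to\delta(1/2)=0$ as $r\downarrow1/2$. This is incompatible with $\alpha_r(\cL)\ge\frac a8r>\frac a{16}$.

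\textbf{Forward range $1<r\le2$: a smaller slip.} From $at=\log(2p'/q'-1)$ the conjugate exponent is $q'(t)=2p'/(1+e^{at})$, not $\frac{p'}2(1+e^{at})$; it must decrease as $q(t)$ increases toward $2$. Your inserted factor $(-1)$ is precisely the time derivative $-\frac{ap'}2$ of the correct formula. So $\frac{dq}{dt}(0)=\frac a2r(r-1)$ and the bound $\frac a8r$ still stand. Using the prime both for conjugation and for $d/dt$ is what invites this error.

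\textbf{Your concern about $\mathfrak t$.} This is moot. Under $\gamma_{\KMS}>0$, the first part of Theorem~\ref{thm:reverse} gives the time $\Xi(\hat p,\hat q)/a$ directly. Differentiating in the moving exponent $\hat p(t)$ with $\hat q=\hat r$ fixed is exactly what the paper does.
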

\begin{proof}
For $r>1$, choose $q(0)=r$ and $\Xi(r,q(t))=\gamma_{\KMS}\hc t$ for small $t\geq0$.
Equation~\eqref{eq:Xi} gives
\[
 q'(0)=\frac{\gamma_{\KMS}\hc}{2}\,r\min\{r-1,1\}.
\]
Let $X$ be admissible in the variational definition of $\alpha_r(\cL)$.
Theorem~\ref{thm:forward} gives
$\np{\Phi_t(X)}{q(t)}\leq\np Xr$, with equality at $t=0$.
Taking logarithms and differentiating from the right at zero, \eqref{eq:pderivative} and \eqref{eq:tderivative} yield
\[
 q'(0)\Ent_{r,\rho}(X)-4(r-1)\mathcal E_{r,\cL}(X)\leq0.
\]
Dividing by $4(r-1)\Ent_{r,\rho}(X)$ and taking the infimum proves \eqref{eq:forward-LSI}.

For $1/2<r<1$, choose $p(0)=r$ and
$\Xi(p(t)/(1-p(t)),r/(1-r))=\gamma_{\KMS}\hc t$.
Differentiating \eqref{eq:Xi} at $t=0$ gives
\[
 -p'(0)=\frac{\gamma_{\KMS}\hc}{2}
 \begin{cases}
 r(2r-1),&\displaystyle\frac12<r\leq\frac23,\\[2pt]
 r(1-r),&\displaystyle\frac23\leq r<1.
 \end{cases}
\]
Theorem~\ref{thm:reverse} gives $\np{\Phi_t(X)}{p(t)}\geq\np Xr$.
Taking logarithms and differentiating from the right at zero similarly yields
\[
 p'(0)\Ent_{r,\rho}(X)+4(1-r)\mathcal E_{r,\cL}(X)\geq0.
\]
Dividing by $4(1-r)\Ent_{r,\rho}(X)$ and taking the infimum proves \eqref{eq:reverse-LSI}.
The two expressions agree at $r=2/3$.
\end{proof}

\begin{proposition}[Endpoint nonnegativity]\label{prop:half-nonnegative}
Let $\{\Phi_t\}_{t\geq0}$ be a normal unital completely positive quantum Markov semigroup on $\cB(\cH)$ with a faithful invariant normal state $\rho$.
For every bounded $X\geq0$ and $t\geq0$,
\[
 \np{\Phi_t(X)}{1/2}\geq\np X{1/2}.
\]
For bounded uniformly positive $X\in\operatorname{Dom}(\cL)$ in the differentiability class specified above at $p=1/2$, this implies
\[
 \mathcal E_{1/2,\cL}(X)\geq0,
 \qquad \alpha_{1/2}(\cL)\geq0.
\]
\end{proposition}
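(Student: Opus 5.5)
The inequality is the degenerate endpoint $p=q=1/2$ of Lemma~\ref{lem:transfer}. I will rerun that duality argument directly, because at this endpoint the forward hypothesis on $\Phi^\sharp$ reduces to plain $\rho$-invariance. Fix $t\geq0$ and write $\Phi=\Phi_t$. Since $\Phi$ is normal, unital, completely positive and $\rho$-preserving, its KMS adjoint $\Phi^\sharp$ from \eqref{eq:adjoint} is again normal, unital and completely positive. It also preserves $\rho$: taking traces in \eqref{eq:adjoint} with $Y=1$ direction reversed gives $\Tr(\rho\,\Phi^\sharp(Z))=\Tr(\rho Z)$ for bounded $Z$, using $\Phi(1)=1$.

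The first step is the weighted reverse H\"older duality \eqref{eq:reverse-dual} at $r=1/2$. Its proof in Lemma~\ref{lem:transfer} uses only $r\in(0,1)$. Here $a=r/(1-r)=1$ and $r/(r-1)=-1$. By \eqref{eq:negative}, $\np{Y}{-1}=\np{Y^{-1}}{1}^{-1}=\Tr(\rho Y^{-1})^{-1}$. This gives, for bounded $X\geq0$,
\[
 \np X{1/2}=\inf_{Y\in\cB(\cH)_{++}}\langle X,Y\rangle_\rho\,\Tr(\rho Y^{-1}).
\]
Next I check that $\Phi^\sharp$ does not decrease the $(-1)$-functional. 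Take $Y\in\cB(\cH)_{++}$. Operator Jensen for the operator convex map $u\mapsto u^{-1}$, applied to the unital positive map $\Phi^\sharp$, gives $(\Phi^\sharp Y)^{-1}\leq\Phi^\sharp(Y^{-1})$. Pairing with $\rho$ and using invariance, $\Tr(\rho(\Phi^\sharp Y)^{-1})\leq\Tr(\rho Y^{-1})$. Equivalently, $\np{\Phi^\sharp Y}{-1}\geq\np Y{-1}$.

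Now combine the two facts, exactly as in the last display of Lemma~\ref{lem:transfer}:
\[
\np{\Phi(X)}{1/2}=\inf_Y\frac{\langle X,\Phi^\sharp Y\rangle_\rho}{\np Y{-1}}\geq\inf_Y\frac{\langle X,\Phi^\sharp Y\rangle_\rho}{\np{\Phi^\sharp Y}{-1}}\geq\np X{1/2}.
\]
The last inequality holds because each $\Phi^\sharp Y$ is itself an admissible element of $\cB(\cH)_{++}$. The identity $\langle \Phi(X),Y\rangle_\rho=\langle X,\Phi^\sharp Y\rangle_\rho$ follows from \eqref{eq:adjoint}. This proves the norm inequality.

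For the infinitesimal statement, take $X$ in the stated differentiability class at $p=1/2$. The function $t\mapsto\log\np{\Phi_t(X)}{1/2}$ attains its minimum at $t=0$ on $[0,\infty)$, so its right derivative at $0$ is $\geq0$. By \eqref{eq:tderivative} with $p=1/2$, this derivative equals $8\mathcal E_{1/2,\cL}(X)/\np X{1/2}^{1/2}$. Hence $\mathcal E_{1/2,\cL}(X)\geq0$. Since admissible $X$ have $\Ent_{1/2,\rho}(X)>0$, every quotient in the definition of $\alpha_{1/2}$ is nonnegative, so $\alpha_{1/2}(\cL)\geq0$.

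The main point needing care is the duality step. I must confirm that the equality case in \eqref{eq:reverse-dual} at $r=1/2$ is genuinely attained within $\cB(\cH)_{++}$, through the operator $Y_r$ and the $\varepsilon$-regularization $X+\varepsilon I$, since no bounded inverse of $\rho$ is available. The argument in Lemma~\ref{lem:transfer} does not depend on $r>1/2$, so it should carry over verbatim. Only the lower-bound direction is actually used for $\Phi(X)$, together with equality for $X$, and $\Phi(X)$ is bounded and positive.
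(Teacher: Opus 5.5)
Your proof is correct and follows the paper's own argument: Choi's inequality (operator Jensen for $u\mapsto u^{-1}$) plus $\rho$-invariance of $\Phi_t^\sharp$ gives $\np{\Phi_t^\sharp(Y)}{-1}\geq\np Y{-1}$. The reverse H\"older duality \eqref{eq:reverse-dual} at $r=1/2$ then yields $\np{\Phi_t(X)}{1/2}\geq\np X{1/2}$, and the right derivative at $t=0$ together with \eqref{eq:tderivative} gives $\mathcal E_{1/2,\cL}(X)\geq0$ and $\alpha_{1/2}(\cL)\geq0$.

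One small correction to your closing remark: it has the two directions swapped. In your displayed chain, the attainment (equality) part of the duality is what is used for $\Phi(X)$, which is why boundedness and positivity of $\Phi(X)$ matter. For $X$, only the reverse H\"older lower bound is needed.
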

\begin{proof}
For $Y\in\cB(\cH)_{++}$, operator Jensen's inequality and invariance of $\rho$ under the KMS adjoint give
\begin{align*}
 \np{\Phi_t^\sharp(Y)}{-1}
 &=\bigl[\Tr\bigl(\rho(\Phi_t^\sharp(Y))^{-1}\bigr)\bigr]^{-1}\\
 &\geq\bigl[\Tr\bigl(\rho\Phi_t^\sharp(Y^{-1})\bigr)\bigr]^{-1}
 =\np Y{-1}.
\end{align*}
Applying the weighted reverse H\"older duality \eqref{eq:reverse-dual} at $r=1/2$, as in the proof of Lemma~\ref{lem:transfer}, proves the asserted inequality for every bounded $X\geq0$.
On the stated differentiability domain, the right derivative of $\log\np{\Phi_t(X)}{1/2}$ at zero is nonnegative.
Equation~\eqref{eq:tderivative} then gives $\mathcal E_{1/2,\cL}(X)\geq0$, and the variational definition of $\alpha_{1/2}(\cL)$ proves the last assertion.
\end{proof}


We work in the Williamson representation of $\rho$, and finite matrix support is understood with respect to the corresponding occupation-number basis.
Operators with finite matrix support are Schwartz operators and belong to the domain of the weak$^*$ generator; see \cite[Section~3.1]{FG26}.
Since the Hamiltonian is quadratic and the jump operators are linear in the creation and annihilation operators, their images under $\cL$ also have finite matrix support.
Let $\displaystyle \rho=\sum_i s_i|i\rangle\langle i|$, with $s_i>0$, in this basis, and let $Y=Y^*$ have finite matrix support and satisfy $\Tr(\rho Y)=0$.
For $0<p<1$ and $s>t$, define functions $w_p, r_p$ as follows:
\begin{align}
 w_p(s,t)&=\frac1{1-p}\left(t-\frac{s-t}{(s/t)^{1/p}-1}\right),\label{eq:wp}\\
 r_p(s,t)&=\frac{(s-t)\log(s/t)(s/t)^{1/p}}
 {p^2((s/t)^{1/p}-1)^2}.\label{eq:rp}
\end{align}
Extend these kernels symmetrically and put $w_p(s,s)=r_p(s,s)=s$. Denote their Schur multipliers by
\[
 \mathsf W_{p,\rho}(|i\rangle\langle j|)=w_p(s_i,s_j)|i\rangle\langle j|,
 \quad
 \mathsf R_{p,\rho}(|i\rangle\langle j|)=r_p(s_i,s_j)|i\rangle\langle j|.
\]
Only their real quadratic forms are needed; no complete positivity assertion about these multipliers is required.

\begin{lemma}\label{lem:localquotient}
For $Y$ as above and $X_\varepsilon=1+\varepsilon Y$, with $|\varepsilon|\norm Y<1/2$,
\begin{align}
 \log\np{X_\varepsilon}p
 &=\frac{p-1}{2}\varepsilon^2\Tr(Y\mathsf W_{p,\rho}Y)+O(\varepsilon^3),\label{eq:normexp}\\
 \mathcal E_p(X_\varepsilon)
 &=\frac{p^2\varepsilon^2}{4}\Re\Tr(Y\mathsf W_{p,\rho}\cL(Y))+O(\varepsilon^3),\label{eq:energyexp}\\
 \Ent_{p,\rho}(X_\varepsilon)
 &=\frac{p^2\varepsilon^2}{2}\Tr(Y\mathsf R_{p,\rho}Y)+O(\varepsilon^3).\label{eq:entexp}
\end{align}
In particular,
\begin{equation}\label{eq:localquotient}
 \alpha_p(\cL)\leq
 \frac{\Re\Tr(Y\mathsf W_{p,\rho}\cL(Y))}
 {2\Tr(Y\mathsf R_{p,\rho}Y)}
 \quad(Y\ne0).
\end{equation}
\end{lemma}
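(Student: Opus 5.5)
Since $Y$ has finite matrix support, everything can be done in a finite-dimensional block. Let $P$ be the finite projection onto $\operatorname{span}\{|i\rangle\}$ with $i$ in the support of $Y$. Then $X_\varepsilon = 1+\varepsilon Y$ is block diagonal: it equals $P+\varepsilon Y$ on $P\cH$ and the identity on $(1-P)\cH$. Moreover $\rho$ commutes with $P$, so
\[
 j_p(X_\varepsilon)=j_p(P+\varepsilon Y)\oplus\rho^{1/p}(1-P).
\]
All traces therefore split into a finite-matrix part and a constant part independent of $\varepsilon$. The condition $|\varepsilon|\norm Y<1/2$ gives $X_\varepsilon\geq\frac12$, and $\cL(Y)$ has finite support. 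Hence all quantities are real-analytic in $\varepsilon$, the differentiability class holds, and the remainders are $O(\varepsilon^3)$ uniformly.

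For \eqref{eq:normexp}, write $A=\rho^{1/(2p)}$. Then $j_p(X_\varepsilon)=A^2+\varepsilon AYA$ with $[A,\rho]=0$, and we expand $\Tr(A^2+\varepsilon AYA)^p$ to second order using the Daleckii--Krein formula in the eigenbasis of $\rho$.
\begin{itemize}
\item The first-order term is $p\varepsilon\Tr(\rho Y)=0$.
\item The second-order term is $\frac{\varepsilon^2}{2}\sum_{i,j}|Y_{ij}|^2 s_i^{1/(2p)}s_j^{1/(2p)}\, g^{[1]}(s_i^{1/p},s_j^{1/p})\cdot(\dots)$, where $g^{[1]}$ is the divided difference of $u\mapsto pu^{p-1}$.
\end{itemize}
Taking $\frac1p\log$ and using $\np{1}{p}=1$ gives the quadratic form. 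One then checks the algebraic identity that this kernel equals $(p-1)w_p(s_i,s_j)$, with $w_p$ as in \eqref{eq:wp}. For example, at $s=t$ it gives $(p-1)s$, matching $w_p(s,s)=s$. This identification is pure divided-difference algebra: writing $x=(s/t)^{1/(2p)}$, the divided difference of $u^{p-1}$ at $t^{1/p},s^{1/p}$ reduces to the expression in \eqref{eq:wp}.

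For \eqref{eq:energyexp}, use \eqref{eq:Ep}. Since $\cL(1)=0$, we have $j_p(\cL(X_\varepsilon))=\varepsilon j_p(\cL Y)$, which is already first order. So only the first-order expansion of $X_{\varepsilon,p}^{p-1}$ is needed:
\[
 X_{\varepsilon,p}^{p-1}=\rho^{(p-1)/p}+\varepsilon\, D\bigl[u^{p-1}\bigr](\rho^{1/p})[AYA]+O(\varepsilon^2).
\]
The zeroth-order term contributes $\varepsilon\Tr(\rho^{(p-1)/p}\rho^{1/(2p)}\cL(Y)\rho^{1/(2p)})=\varepsilon\Tr(\rho\cL(Y))=0$ by invariance of $\rho$. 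The first-order term gives a Schur pairing $\sum_{ij}\overline{Y_{ij}}\,k(s_i,s_j)\,\cL(Y)_{ij}$, and the kernel $k$ times $\frac{p^2}{4(p-1)}$ should be exactly $\frac{p^2}{4}w_p$. This is the same divided-difference kernel as before, which explains why the same $\mathsf W_{p,\rho}$ appears. The real part arises because $\mathcal E_p$ is real for self-adjoint $X_\varepsilon$ (since $\cL$ is $*$-preserving), so we may symmetrize.

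For \eqref{eq:entexp}, rather than expanding the relative entropy directly, use \eqref{eq:pderivative}:
\[
 \Ent_{p,\rho}(X_\varepsilon)=p^2\np{X_\varepsilon}p^p\,\partial_p\log\np{X_\varepsilon}p.
\]
Differentiating \eqref{eq:normexp} in $p$ is legitimate because the expansion is analytic jointly in $(p,\varepsilon)$ on the finite block. This gives
\[
 \Ent_{p,\rho}(X_\varepsilon)=p^2\cdot\frac{\varepsilon^2}{2}\,\partial_p\bigl[(p-1)w_p\bigr]+O(\varepsilon^3).
\]
It remains to verify that $\partial_p[(p-1)w_p(s,t)]=r_p(s,t)$, which is a one-line derivative of \eqref{eq:wp}. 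The kernel $r_p$ is positive, so for $Y\neq0$ the entropy is $\asymp\varepsilon^2$ and strictly positive for small $\varepsilon$. Thus $X_\varepsilon$ is admissible with $0<\Ent<\infty$.

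Finally, dividing gives
\[
 \frac{\mathcal E_p}{\Ent_{p,\rho}}(X_\varepsilon)\to\frac{\Re\Tr(Y\mathsf W_{p,\rho}\cL(Y))}{2\Tr(Y\mathsf R_{p,\rho}Y)},
\]
and the infimum definition yields \eqref{eq:localquotient}.

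I expect the main obstacle to be the kernel identifications: showing that the second-order divided differences of $u\mapsto u^p$ in the $\rho^{1/(2p)}$-sandwiched form collapse exactly to $w_p$, and that the energy pairing produces the same kernel. I would handle this on a single $2\times2$ block $\{i,j\}$, where all formulas become explicit scalar divided differences in $s_i^{1/p}$ and $s_j^{1/p}$.
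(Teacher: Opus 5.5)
Your proposal is correct and follows the paper's route. The Daleckii--Krein Hessian of $\Tr A^p$ at $\rho^{1/p}$ gives the kernel $(p-1)w_p$ in \eqref{eq:normexp}, and \eqref{eq:entexp} follows from \eqref{eq:pderivative} together with $r_p=\partial_p[(p-1)w_p]$. The only difference is in \eqref{eq:energyexp}: you expand the definition \eqref{eq:Ep} directly to first order, which is a finite computation once $P$ also covers the support of $\cL(Y)$, whereas the paper differentiates \eqref{eq:normexp} along the semigroup and uses \eqref{eq:tderivative}. One caveat: the time-differentiability needed for $X_\varepsilon$ to be admissible is not a finite-block fact, since $\Phi_t(Y)$ leaves the block, so your ``hence the differentiability class holds'' does not follow from the block reduction. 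The paper instead justifies it through the trace-class bound $\rho^{1/(2p)}j_p(X_\varepsilon)^{p-1}\rho^{1/(2p)}\leq 2^{1-p}\rho$.
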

\begin{proof}
The operator $X_\varepsilon=I+\varepsilon Y$ belongs to $\operatorname{Dom}(\cL)$ and satisfies $X_\varepsilon\geq I/2$.
For $0<p<1$, operator antitonicity of $u\mapsto u^{p-1}$ gives, in the quadratic-form sense,
\[
 0\leq\rho^{1/(2p)}j_p(X_\varepsilon)^{p-1}\rho^{1/(2p)}
 \leq 2^{1-p}\rho.
\]
The right side is trace class. Together with the finite matrix support of $Y$ and $\cL(Y)$, this justifies the trace pairings and differentiations below.
Set $a_i=s_i^{1/p}$ and $K_{ij}=(s_is_j)^{\frac{1}{2p}}Y_{ij}$. The trace Hessian of $A\mapsto\Tr A^p$ at $\diag(a_i)$ is
\[
 p\sum_{i,j}|K_{ij}|^2\frac{a_i^{p-1}-a_j^{p-1}}{a_i-a_j},
\]
with the divided difference replaced by $(p-1)a_i^{p-2}$ on the diagonal. 
Since $\Tr(\rho Y)=0$, the first variation vanishes. 
Rewriting the second variation gives \eqref{eq:normexp}; explicitly its coefficient is $\displaystyle (p-1)\sum_{i,j}w_p(s_i,s_j)|Y_{ij}|^2/2$.

In the time direction, invariance gives $\Tr(\rho\cL (Y))=0$. 
Differentiate the quadratic form in \eqref{eq:normexp}, and use \eqref{eq:tderivative}; this gives \eqref{eq:energyexp}. 
Direct differentiation shows
\[
 r_p(s,t)=w_p(s,t)+(p-1)\partial_pw_p(s,t).
\]
Together with \eqref{eq:pderivative}, this proves \eqref{eq:entexp}. All relevant sums are finite: outside the finite support the input is the identity, while first-order time terms vanish by invariance. Finally divide and let $\varepsilon\to0$. Positivity of $r_p(s,t)$ makes the limiting denominator positive for $Y\ne0$.
\end{proof}

The factor $2$ in the denominator of \eqref{eq:localquotient} is part of the normalization. For a commuting centered direction it gives the familiar linearized upper bound equal to one half of the corresponding Rayleigh quotient.

The two kernel limits are
\begin{align}
 w_{1/2}(s,t)&=\frac{2st}{s+t},\label{eq:halfkernel}\\
 \lim_{p\downarrow0}w_p(s,t)&=\min\{s,t\},\qquad
 \lim_{p\downarrow0}r_p(s,t)=0\quad(s\ne t).\label{eq:zerokernel}
\end{align}
At equal eigenvalues, $r_p(s,s)=s$ does not vanish. Define the spectral-block pinching
\[
 \mathsf P_\rho(Y)=\sum_{s\in\text{Spec}(\rho)\setminus\{0\}}P_sYP_s
\]
and the minimum-kernel multiplier $\mathsf M_\rho(|i\rangle\langle j|)=\min(s_i,s_j)|i\rangle\langle j|$.

\begin{proposition}\label{prop:endpointcriteria}
For a Gaussian quantum Markov semigroup with a faithful invariant Gaussian state, we have that 
\begin{enumerate}[(i)]
\item If finite-support centered self-adjoint $Y_n$ satisfy
\[
 \frac{\Re\Tr(Y_n\mathsf W_{1/2,\rho}\cL(Y_n))}
 {\Tr(Y_n\mathsf R_{1/2,\rho}Y_n)}\longrightarrow0,
\]
then $\alpha_{1/2}(\cL)=0$.
\item If one finite-support self-adjoint $Y\ne0$ satisfies $\mathsf P_\rho(Y)=0$ and
\[
 \Re\Tr(Y\mathsf M_\rho \cL(Y))<0,
\]
then $\displaystyle \lim_{p\downarrow0}\alpha_p(\cL)=-\infty$.
\end{enumerate}
\end{proposition}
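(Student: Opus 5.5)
Both parts follow from the local upper bound \eqref{eq:localquotient} of Lemma~\ref{lem:localquotient}, combined with the endpoint nonnegativity of Proposition~\ref{prop:half-nonnegative} for part (i) and with the kernel limits \eqref{eq:zerokernel} for part (ii).

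For (i), I apply \eqref{eq:localquotient} at $p=1/2$ to each $Y_n$. This gives
\[
 \alpha_{1/2}(\cL)\leq\frac{\Re\Tr(Y_n\mathsf W_{1/2,\rho}\cL(Y_n))}{2\Tr(Y_n\mathsf R_{1/2,\rho}Y_n)}\longrightarrow0,
\]
so $\alpha_{1/2}(\cL)\leq0$. Here I must check that $X_\varepsilon=1+\varepsilon Y_n$ lies in the admissible class. It is bounded and uniformly positive. It belongs to $\operatorname{Dom}(\cL)$ by finite matrix support, as recorded before the lemma. Differentiability holds by the trace-Hessian justification inside the proof of Lemma~\ref{lem:localquotient}. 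Its entropy is strictly positive for small $\varepsilon\neq0$ by \eqref{eq:entexp}, since $r_p>0$ and $Y_n\neq0$; the hypothesis implicitly forces $Y_n\neq0$ because the denominator must be nonzero. Proposition~\ref{prop:half-nonnegative} gives $\alpha_{1/2}(\cL)\geq0$, so $\alpha_{1/2}(\cL)=0$.

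For (ii), I fix the single $Y$ and bound $\alpha_p(\cL)$ for each $p\in(0,1)$ by the quotient in \eqref{eq:localquotient}. Since $Y$ has finite matrix support, only finitely many pairs $(s_i,s_j)$ enter, so I can pass to the limit termwise as $p\downarrow0$. In the numerator, $\Tr(Y\mathsf W_{p,\rho}\cL(Y))=\sum_{i,j}w_p(s_i,s_j)\overline{Y_{ij}}\,\cL(Y)_{ij}$ is a finite sum because $\cL(Y)$ has finite matrix support. By \eqref{eq:zerokernel}, $w_p\to\min$ off the diagonal, and $w_p(s,s)=s=\min(s,s)$. Hence the numerator tends to $\Re\Tr(Y\mathsf M_\rho\cL(Y))<0$.

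In the denominator, $\Tr(Y\mathsf R_{p,\rho}Y)=\sum_{i,j}r_p(s_i,s_j)|Y_{ij}|^2$. The terms with $s_i=s_j$ are exactly the entries of $\mathsf P_\rho(Y)$, and these vanish by hypothesis. The remaining terms have $s_i\neq s_j$, and there $r_p\to0$ by \eqref{eq:zerokernel}. So the denominator is a positive quantity tending to $0$; positivity holds because $r_p>0$ and $Y\neq0$. The quotient therefore tends to $-\infty$, and $\limsup_{p\downarrow0}\alpha_p(\cL)=-\infty$, which is the claimed limit.

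The step I expect to need the most care is the numerator. I must make sure that $\cL(Y)$ has finite support in the Williamson occupation basis, so that the sum is finite and the limit can be interchanged with it. This follows from the quadratic Hamiltonian and linear jump operators, as stated before Lemma~\ref{lem:localquotient}. I also need the Hessian formula of Lemma~\ref{lem:localquotient} to hold uniformly enough in $\varepsilon$ for each fixed $p$. That is fine, since the bound \eqref{eq:localquotient} is applied for each $p$ separately before taking $p\downarrow0$.
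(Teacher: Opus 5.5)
Your proposal is correct and follows the paper's proof essentially verbatim. Part (i) combines the local bound \eqref{eq:localquotient} at $p=1/2$ with Proposition~\ref{prop:half-nonnegative}. Part (ii) uses the finiteness of the sums, the kernel limits \eqref{eq:zerokernel}, and the pinching hypothesis to send the numerator to a negative limit while the positive denominator tends to zero. One small addition in (ii): $\mathsf P_\rho(Y)=0$ already gives $\Tr(\rho Y)=\sum_s s\,\Tr(P_sYP_s)=0$, so $Y$ is centered and Lemma~\ref{lem:localquotient} indeed applies.
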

\begin{proof}
Part (i) follows from Lemma~\ref{lem:localquotient} and Proposition~\ref{prop:half-nonnegative}. For (ii), the numerator in \eqref{eq:localquotient} converges to a strictly negative number; its denominator is positive and tends to zero by \eqref{eq:zerokernel} and the pinching assumption. The sums are finite, so no exchange of infinite limits is involved.
\end{proof}

The following construction shows that the threshold $1/2$ for nonnegative $p$-Dirichlet forms, and hence for positive infinitesimal reverse hypercontractive rates, is sharp under KMS symmetry, faithful equilibrium, and Hurwitz stability.
For $0<p<1/2$, the negative $p$-Dirichlet forms exclude same-exponent reverse contractivity for all $t\geq0$ and reverse curves with a strictly negative initial exponent derivative.
These conclusions concern behavior at time zero and do not decide whether a fixed pair $0<p<q<1/2$ is attained at sufficiently large times.
The construction also separates these obstructions from nonprimitivity and from an infinite number of modes.


On the Hilbert space $\cF_s(\mathbb C^2)$, take
\begin{equation}\label{eq:endjump}
 L_1=\bfa_1+\bfa_2+\frac12\bfa_1^\dagger+\frac13\bfa_2^\dagger,
\end{equation}
and
\begin{align}
 H={}&\frac{3i}{20}\bigl((\bfa_1^\dagger)^2-\bfa_1^2\bigr)
 +\frac{2i}{15}\bigl((\bfa_2^\dagger)^2-\bfa_2^2\bigr)\notag\\
 &+\frac{25i}{84}\bigl(\bfa_1^\dagger \bfa_2^\dagger-\bfa_1\bfa_2\bigr)
 +\frac{7i}{60}\bigl(\bfa_2^\dagger \bfa_1-\bfa_1^\dagger \bfa_2\bigr).
 \label{eq:endH}
\end{align}
Set
\begin{equation}\label{eq:endgen}
- \cL(X)=i[H,X]+L_1^*XL_1-\frac12\{L_1^*L_1,X\}.
\end{equation}
This is exactly a linear-jump, quadratic-Hamiltonian generator of the form \eqref{eq:H}--\eqref{eq:GKLS}.
Existence of its conservative Gaussian quantum Markov semigroup follows from \cite[Theorem 4]{FG26}. 
On finite-particle vectors all coefficient identities below are ordinary polynomial identities.

Its faithful invariant density, as verified below, is
\begin{equation}\label{eq:endrho}
 \rho=\rho_{\log4}\otimes\rho_{\log9}
 =\frac23\,4^{-\bfa_1^\dag\bfa_1}9^{-\bfa_2^\dag\bfa_2}.
\end{equation}

In the real-coordinate order $(\Re z_1,\Re z_2,\Im z_1,\Im z_2)$, the matrices are
\begin{equation}\label{eq:endZ}
 \bfZ=\begin{pmatrix}
 -27/40&-18/35&0&0\\
 -32/35&-32/45&0&0\\
 0&0&-3/40&-3/35\\
 0&0&-16/105&-8/45
 \end{pmatrix},
\end{equation}
\begin{equation}\label{eq:endC}
 \bfC=\begin{pmatrix}
 9/4&2&0&0\\2&16/9&0&0\\
 0&0&1/4&1/3\\0&0&1/3&4/9
 \end{pmatrix},
 \qquad
 \Sigma=\diag(5/3,5/4,5/3,5/4).
\end{equation}
Substitution into \eqref{eq:ZUV}--\eqref{eq:CUV} gives these matrices. 

\begin{lemma}\label{lem:endstructure}
We have that 
\begin{gather}
 \bfZ^*\Sigma+\Sigma\bfZ+\bfC=0,\qquad \Sigma+i\bfJ>0,\label{eq:endstationary}\\
 \bfC_{\bfZ}=\mathbf v\mathbf v^*,\qquad
 \mathbf v=(3/2,4/3,-i/2,-2i/3)^{\mathsf{T}},\label{eq:endCZ}\\
 \Sigma_{\KMS}=\diag(4/3,3/4,4/3,3/4),\qquad
 \Sigma_{\KMS}\bfZ=\bfZ^*\Sigma_{\KMS}.\label{eq:endSK}
\end{gather}
The drift is Hurwitz and the semigroup is primitive and KMS-symmetric.
\end{lemma}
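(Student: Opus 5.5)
The lemma is a finite verification. The plan is to run the identities in the order displayed and then read off the structural conclusions from earlier results.

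\emph{Stationarity, faithfulness, and the diffusion matrix.} The matrices $\bfZ$ and $\bfC$ have block-diagonal form in the coordinate order $(\Re z_1,\Re z_2,\Im z_1,\Im z_2)$, and $\Sigma$ is diagonal. Hence $\bfZ^{\mathsf T}\Sigma+\Sigma\bfZ+\bfC=0$ splits into two $2\times2$ Lyapunov identities, one for the $\Re$ block and one for the $\Im$ block, each checked entrywise with rational arithmetic. For example, in the $\Re$ block the $(1,1)$ entry is $2\cdot\tfrac53\cdot(-\tfrac{27}{40})+\tfrac94=0$. Faithfulness follows because $\Sigma$ is already in Williamson form with $\bfM=I$ and symplectic eigenvalues $s_1=5/3$, $s_2=5/4$, both exceeding one. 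By \eqref{eq:beta}, these give $\beta_1=\log4$ and $\beta_2=\log9$, which confirms \eqref{eq:endrho}. For \eqref{eq:endCZ}, I will compute $\bfC_{\bfZ}=\bfC-i(\bfZ^{\mathsf T}\bfJ+\bfJ\bfZ)$. The matrix $\bfZ^{\mathsf T}\bfJ+\bfJ\bfZ$ has only off-diagonal blocks, built from $\bfZ_{\Re}$ and $\bfZ_{\Im}$, and I will compare the result with $\mathbf v\mathbf v^*$ block by block. The diagonal blocks $\binom{3/2}{4/3}\binom{3/2}{4/3}^{\mathsf T}$ and $\binom{1/2}{2/3}\binom{1/2}{2/3}^{\mathsf T}$ reproduce $\bfC$. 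The off-diagonal block should equal $i\binom{3/2}{4/3}(1/2,\,2/3)$, and this is an entrywise check.

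\emph{KMS data.} With $\bfM=I$, \eqref{eq:SK} gives $\Sigma_{\KMS}=\diag(\sqrt{s_j^2-1})=\diag(4/3,3/4,4/3,3/4)$. The condition $\Sigma_{\KMS}\bfZ=\bfZ^{\mathsf T}\Sigma_{\KMS}$ says that $\diag(4/3,3/4)\bfZ_{\Re}$ and $\diag(4/3,3/4)\bfZ_{\Im}$ are symmetric. For the first block, $\tfrac43\cdot(-\tfrac{18}{35})=-\tfrac{24}{35}=\tfrac34\cdot(-\tfrac{32}{35})$. For the second, $\tfrac43\cdot(-\tfrac{3}{35})=-\tfrac4{35}=\tfrac34\cdot(-\tfrac{16}{105})$. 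Proposition~\ref{prop:kmssym} then gives KMS symmetry.

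\emph{Hurwitz property and primitivity.} By the symmetry just established, $\bfZ$ is similar to the symmetric matrix $\bfZ_{\KMS}$, so its eigenvalues are real. Each $2\times2$ block has negative trace and positive determinant. For the first block, $\tfrac{27}{40}\cdot\tfrac{32}{45}-\tfrac{18}{35}\cdot\tfrac{32}{35}>0$, since $0.48>0.470$. For the second, $\tfrac3{40}\cdot\tfrac8{45}-\tfrac3{35}\cdot\tfrac{16}{105}=\tfrac1{75}-\tfrac{16}{1225}>0$. Hence $\bfZ$ is Hurwitz. Proposition~\ref{prop:control} then gives irreducibility. The stationarity identity shows that $\rho$ is invariant. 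For uniqueness of the invariant normal state, I will take any invariant state $\omega$ and use \eqref{eq:Tweyl}. Invariance gives $\omega(W(\mathbf z))=e^{-\frac12\mathbf z^{\mathsf T}\bfC_t\mathbf z}\,\omega(W(e^{t\bfZ}\mathbf z))$. Letting $t\to\infty$, normality gives continuity of the characteristic function at $0$, and $\bfC_t\to\Sigma$. The characteristic function of $\omega$ therefore coincides with that of $\rho$, so $\omega=\rho$. Together with irreducibility this is the primitivity claim.

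The main obstacle is purely bookkeeping. The hardest step is the sign and ordering conventions in $\bfC_{\bfZ}=\mathbf v\mathbf v^*$, namely which off-diagonal block carries $+i$, since $\bfJ$ is paired with the ordering $(\Re,\Im)$. I would resolve this by checking the identity on one entry before doing the others.
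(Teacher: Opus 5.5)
Your proposal is correct and follows the paper's proof essentially step for step. Both arguments use rational verification of the Lyapunov and rank-one identities, the Williamson data $s_1=5/3$, $s_2=5/4$ for faithfulness and for $\Sigma_{\KMS}$, Proposition~\ref{prop:kmssym} for KMS symmetry, trace--determinant tests on the two $2\times2$ blocks for the Hurwitz property, Proposition~\ref{prop:control} for irreducibility, and the characteristic-function limit for uniqueness of the invariant state. The appeal to real eigenvalues is unnecessary, since negative trace and positive determinant already make a real $2\times2$ matrix Hurwitz.
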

\begin{proof}
The Lyapunov and rank-one identities follow by rational matrix multiplication. The symplectic eigenvalues of $\Sigma$ are $5/3$ and $5/4$, so its faithful density is the product with inverse temperatures $\log4$ and $\log9$, namely \eqref{eq:endrho}. 
Formula \eqref{eq:SK} gives \eqref{eq:endSK}; Proposition~\ref{prop:kmssym} proves KMS symmetry.

Write $Z_P$ and $Z_Q$ for the upper-left and lower-right $2\times2$ blocks of \eqref{eq:endZ}. 
Their traces and determinants are
\begin{align*}
 \Tr (Z_P)=-499/360,&\qquad \det (Z_P)=12/1225,\\
 \Tr (Z_Q)=-91/360,&\qquad \det( Z_Q)=1/3675.
\end{align*}
Both real blocks are Hurwitz. 
It follows from Proposition~\ref{prop:hurwitzpos} that $\bfC_{\KMS}>0$. Also
\[
 \bfF=\int_0^\infty e^{s\bfZ^*}\bfC_{\bfZ} e^{s\bfZ}\,ds>0,
\]
so $(\bfC_{\bfZ}\text{-cond})$ holds by Proposition~\ref{prop:control}. Irreducibility follows from \cite[Theorem 12]{FG26}. For completeness, invariance and the Weyl formula also prove uniqueness directly: if $\omega$ is any invariant normal state, then
\[
 \Tr(\omega W(\mathbf z))
 =e^{-\mathbf z^{\mathsf{T}}\bfC_t\mathbf z/2}\Tr\left(\omega W(e^{t\bfZ}\mathbf z)\right)
 \longrightarrow e^{-\mathbf z^{\mathsf{T}}\Sigma\mathbf z/2}.
\]
Thus $\omega=\rho$. The invariant state is faithful, and the fixed algebra is scalar (alternatively this follows at once from the strictly contracting nonvacuum KMS chaoses). 
\end{proof}

The local test will be the field
\begin{equation}\label{eq:endY}
 Y=(\bfa_2+\bfa_2^\dagger)-(\bfa_1+\bfa_1^\dagger)=R(\mathbf z),
 \qquad \mathbf z=(0,0,-1,1)^{\mathsf{T}}.
\end{equation}
It is unbounded; we do not assert positivity of $1+\varepsilon Y$. 

For a one-mode thermal density, summing the kernels \eqref{eq:wp}--\eqref{eq:rp} on a ladder direction gives
\begin{align}
 \mathsf w_p(\beta)
 &=\frac{2}{1-p}\left(\frac1{e^\beta-1}-\frac1{e^{\beta/p}-1}\right),
 \label{eq:fieldw}\\
 \mathsf r_p(\beta)
 &=\frac{\beta}{2p^2\sinh^2(\beta/(2p))}.
 \label{eq:fieldr}
\end{align}
Indeed, the neighboring eigenvalues have ratio $e^\beta$, and the two symmetric ladder entries contribute the sum
\[
 2\sum_{n\geq0}(n+1)\,w_p((1-e^{-\beta})e^{-n\beta},(1-e^{-\beta})e^{-(n+1)\beta}),
\]
which evaluates to \eqref{eq:fieldw};
the calculation for $r_p$ is identical. 
Other modes sum to one. 
Entries associated with distinct ladder directions have disjoint matrix support, 
so the resulting first-chaos forms are diagonal in the mode index. 
The position and momentum directions have the same weights.


Let
\[
 P_N=\sum_{0\leq n_1,n_2\leq N}
 |n_1,n_2\rangle\langle n_1,n_2|,
 \qquad Y_N=P_NYP_N.
\]
The operator $Y_N$ is bounded and self-adjoint with finite matrix support.
It also satisfies $\Tr(\rho Y_N)=0$.
For $0<|\varepsilon|<(2\norm{Y_N})^{-1}$,
\begin{equation}\label{eq:boundedwitness}
 X_{N,\varepsilon}=I+\varepsilon Y_N\geq\frac12 I.
\end{equation}
Since $Y_N$ is a Schwartz operator,
$Y_N\in\operatorname{Dom}(\cL)$; see \cite{FG26}.
Moreover, $\cL(Y_N)$ has finite matrix support and $\cL(I)=0$.
Thus the trace differentiations used in
Lemma~\ref{lem:localquotient} are justified for
$X_{N,\varepsilon}$.

\begin{lemma}\label{lem:cutoff}
For every fixed $0<p<1$,
\begin{align}
 \Re\Tr(Y_N\mathsf W_{p,\rho}\cL (Y_N))
 &\longrightarrow-\frac3{280}\mathsf w_p(\log4)
                    +\frac8{315}\mathsf w_p(\log9),\label{eq:cutoffnum}\\
 \Tr(Y_N\mathsf R_{p,\rho}Y_N)
 &\longrightarrow\mathsf r_p(\log4)+\mathsf r_p(\log9).
 \label{eq:cutoffden}
\end{align}
Moreover, $\Re\Tr(Y_N\mathsf M_\rho \cL(Y_N))\to-1/1260$.
\end{lemma}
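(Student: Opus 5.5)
The plan is to exploit three facts. Both $Y$ and $\cL(Y)$ lie in the first chaos. The Schur multipliers $\mathsf W_{p,\rho}$, $\mathsf R_{p,\rho}$ and $\mathsf M_\rho$ act entrywise in the occupation basis, where $\rho$ is diagonal. The cutoff $P_N$ only discards matrix entries whose weights are exponentially small.

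\emph{Step 1: the drift on fields.} Differentiating \eqref{eq: exchange relation for Gaussian QMS} at $t=0$, or equivalently the Weyl action \eqref{eq:Tweyl} on $W(\mathbf z)=e^{iR(\mathbf z)}$ to first order in $\mathbf z$, gives $-\cL(R(\mathbf z))=R(\bfZ\mathbf z)$ on finite-particle vectors. For $\mathbf z=(0,0,-1,1)^{\mathsf T}$, only the lower-right block $Z_Q$ of \eqref{eq:endZ} acts. A direct computation gives
\[
\bfZ\mathbf z=\bigl(0,0,-\tfrac{3}{280},-\tfrac{8}{315}\bigr)^{\mathsf T},
\]
and therefore
\[
\cL(Y)=R\bigl((0,0,\tfrac{3}{280},\tfrac{8}{315})^{\mathsf T}\bigr)=\tfrac{3}{280}(\bfa_1+\bfa_1^\dagger)+\tfrac{8}{315}(\bfa_2+\bfa_2^\dagger).
\]
Thus $\cL(Y)$ has the same ladder structure as $Y$, whose coefficients in the two modes are $-1$ and $+1$.

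\emph{Step 2: the denominator.} The matrix entries of $Y_N$ are exactly those of $Y$ on the ladder edges $(n_j,n_j+1)$ in mode $j$, with the other occupation number fixed and all indices at most $N$. Since $\mathsf R_{p,\rho}$ acts entrywise with positive weights, $\Tr(Y_N\mathsf R_{p,\rho}Y_N)$ is a finite partial sum of the series defining the full first-chaos form. Entries belonging to distinct modes have disjoint support, so this series splits into a sum over the two modes. In each mode, summing over the spectator occupation number contributes total weight one, and the ladder sum gives $\mathsf r_p(\beta_j)$ by \eqref{eq:fieldr}. Monotone convergence then yields \eqref{eq:cutoffden}.

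\emph{Step 3: the numerator.} I will write $\cL(Y_N)=P_N\cL(Y)P_N+B_N$. The boundary term $B_N$ is supported on occupation indices within a fixed distance of $N$, since $H$ is quadratic and $L_1$ is linear. Its entries grow at most polynomially in $N$. When it is paired against $Y_N$ through $\mathsf W_{p,\rho}$, each entry carries a weight $w_p(s_i,s_j)\le\max(s_i,s_j)$, which is of order $e^{-\beta_- N}$ near the cutoff. Hence the pairing with $B_N$ tends to $0$. This estimate is the step I expect to need the most care: one must bound $B_N$ entrywise, and the clean route is to expand the GKSL expression \eqref{eq:endgen} applied to $P_NYP_N$ and observe that every term not reproducing $P_N\cL(Y)P_N$ contains a factor $I-P_N$ adjacent to a ladder operator. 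The remaining pairing $\Re\Tr(Y_N\mathsf W_{p,\rho}P_N\cL(Y)P_N)$ converges by dominated convergence to the diagonal-in-mode first-chaos form. The mode-$j$ contribution is (coefficient of $Y$) $\times$ (coefficient of $\cL(Y)$) $\times\,\mathsf w_p(\beta_j)$, where $\mathsf w_p$ is given by \eqref{eq:fieldw}. This gives
\[
-\tfrac{3}{280}\mathsf w_p(\log4)+\tfrac{8}{315}\mathsf w_p(\log9),
\]
which is \eqref{eq:cutoffnum}.

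\emph{Step 4: the minimum kernel.} The same argument applies with $w_p$ replaced by the kernel $\min(s_i,s_j)$ of $\mathsf M_\rho$; the boundary estimate is identical because $\min(s,t)\le\max(s,t)$. The ladder sum is
\[
2\sum_{n\geq0}(n+1)(1-e^{-\beta})e^{-(n+1)\beta}=\frac{2}{e^\beta-1},
\]
which equals $2/3$ for $\beta=\log4$ and $1/4$ for $\beta=\log9$. Therefore the limit is
\[
-\tfrac{3}{280}\cdot\tfrac23+\tfrac{8}{315}\cdot\tfrac14=-\tfrac1{140}+\tfrac{2}{315}=-\tfrac{1}{1260},
\]
consistent with the limit $p\downarrow0$ of \eqref{eq:fieldw}.
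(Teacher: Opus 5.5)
Your proposal is correct and follows essentially the same route as the paper. Both compare $\cL(Y_N)$ with $\cL(Y)$ away from a fixed-width boundary strip, using the finite bandwidth of the GKSL expression. Both kill the strip by playing polynomially growing coefficients against thermal weights of order $4^{-N}$, and both identify the bulk with the one-mode ladder sums \eqref{eq:fieldw}--\eqref{eq:fieldr}. Your explicit formula $\cL(Y)=\tfrac{3}{280}(\bfa_1+\bfa_1^\dagger)+\tfrac{8}{315}(\bfa_2+\bfa_2^\dagger)$ from the $Z_Q$ block agrees with a direct GKSL computation. The bound $w_p(s_i,s_j)\le\max(s_i,s_j)$, which does hold though you do not prove it, and monotone convergence for the positive denominator only make explicit what the paper's sketch leaves implicit.
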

\begin{proof}
The coefficients of $Y$ and $\cL(Y)$ grow at most polynomially in the occupation numbers, and all nonzero entries of $Y_N$ join neighboring occupations of one mode. 
Their eigenvalue ratios are exactly $4$ or $9$. The formulas \eqref{eq:wp}--\eqref{eq:rp} therefore bound their weights by $C_p(s_i+s_j)$, with $C_p$ independent of $N$.

The GKLS expression has fixed finite bandwidth in the occupation basis.
Away from a boundary strip of fixed width around $n_j=N$, the coefficients of $\cL(Y_N)$ and $\cL(Y)$ agree. 
On the boundary strip their difference has polynomial growth: every term contains at most four ladder factors in the product with $Y_N$. With thermal weights $\displaystyle s_{n_1,n_2}=\frac{2}{3}4^{-n_1}9^{-n_2}$, the omitted and boundary contributions are bounded by a constant times
\begin{align*}
 \sum_{n_1\geq N-c\ \text{or}\ n_2\geq N-c}
 (1+n_1+n_2)^k4^{-n_1}9^{-n_2},
\end{align*}
for fixed integers $c,k$, and hence tend to zero. 
The bulk sums are precisely the ladder sums giving \eqref{eq:fieldw}--\eqref{eq:fieldr}. 
The same proof applies to the minimum kernel because $\min(s_i,s_j)\leq s_i+s_j$.
\end{proof}

Let
\begin{equation}\label{eq:Up}
\delta(p)=
 \frac{-\frac3{280}\mathsf w_p(\log4)+\frac8{315}\mathsf w_p(\log9)}
 {2\big(\mathsf r_p(\log4)+\mathsf r_p(\log9)\big)}.
\end{equation}

For fixed $0<p<1$ and $N$, apply Lemma~\ref{lem:localquotient} to $Y_N$, or equivalently let $\varepsilon\to0$ in the quotient for \eqref{eq:boundedwitness}.
Then let $N\to\infty$ and use Lemma~\ref{lem:cutoff} to obtain
\begin{equation}\label{eq:alphaUp}
 \alpha_p(\cL)\leq \delta(p),\qquad 0<p<1.
\end{equation}

At $p=1/2$,
\[
 \mathsf w_{1/2}(\log4)=16/15,\qquad
 \mathsf w_{1/2}(\log9)=9/20.
\]
The numerator of \eqref{eq:Up} is zero.
The denominator is strictly positive. 
Thus \eqref{eq:alphaUp} gives $\alpha_{1/2}(\cL)\leq0$, and the endpoint nonnegativity in Proposition~\ref{prop:half-nonnegative} proves
\begin{align}\label{eq:alphahalf}
\alpha_{1/2}(\cL)=0.
\end{align}
The relevant matrix degeneracy is explicit. Put
\[
 W_{1/2}=\diag(16/15,9/20).
\]
Then
\begin{equation}\label{eq:halfmatrix}
 -\tfrac12(Z_Q^{\mathsf{T}}W_{1/2}+W_{1/2}Z_Q)
 =\frac2{25}\begin{pmatrix}1&1\\1&1\end{pmatrix}.
\end{equation}
Its kernel contains $(-1,1)^{\mathsf{T}}$. This, not singularity of $\bfC_{\bfZ}$ by itself, proves the half-order obstruction. The infimum is approached by bounded tests; it need not be attained by a single bounded nonscalar observable.

For $0<p<1/2$, substituting \eqref{eq:fieldw} rewrites the numerator of \eqref{eq:Up} as
\[
 \frac{2}{2520(1-p)}
 \left(-1+\frac{27}{4^{1/p}-1}-\frac{64}{9^{1/p}-1}\right).
\]
Put $f(x)=27/(4^x-1)-64/(9^x-1)$, so that $f(2)=1$.
For $x\geq2$, the ratio of the positive term in $f'(x)$ to the absolute value of its negative term satisfies
\[
 \frac{64\log9}{27\log4}\left(\frac49\right)^x
 \left(\frac{1-4^{-x}}{1-9^{-x}}\right)^2
 <\frac{128}{27}\left(\frac49\right)^2
 =\frac{2048}{2187}<1.
\]
Here we used $\log9<2\log4$ and $(1-4^{-x})/(1-9^{-x})<1$.
Thus $f'(x)<0$ for $x\geq2$, and $f(1/p)<1$ whenever $0<p<1/2$.
The denominator of \eqref{eq:Up} is strictly positive, so \eqref{eq:alphaUp} gives
\[
 \alpha_p(\cL)\leq\delta(p)<0,\qquad 0<p<1/2.
\]

As $p\downarrow0$,
\[
 \mathsf w_p(\log4)\longrightarrow2/3,
 \qquad \mathsf w_p(\log9)\longrightarrow1/4.
\]
Hence the numerator tends to the exact negative number
\begin{equation}\label{eq:negativeconstant}
 -\frac3{280}\frac23+\frac8{315}\frac14=-\frac1{1260}.
\end{equation}
Meanwhile
\[
 \mathsf r_p(\beta)\sim\frac{2\beta}{p^2}e^{-\beta/p}.
\]
The smaller inverse temperature $\log4$ controls the denominator, and
\begin{equation}\label{eq:Uasymp}
\delta(p)\sim-\frac{p^2\,4^{1/p}}{5040\log4}\longrightarrow-\infty.
\end{equation}
Thus \eqref{eq:alphaUp} implies
\begin{align}\label{eq:alphainfty}
  \lim_{p\downarrow0}\alpha_p(\cL)=-\infty.  
\end{align}

On the same two-dimensional block, the limiting matrix is
\begin{equation}\label{eq:zeromatrix}
 W_0=\diag(2/3,1/4),\qquad
 -\tfrac12(Z_Q^{\mathsf{T}}W_0+W_0Z_Q)
 =\begin{pmatrix}1/20&1/21\\1/21&2/45\end{pmatrix}.
\end{equation}
Its quadratic form on $(-1,1)^{\mathsf{T}}$ is $-1/1260$. The half-order null direction has become a negative direction in the limiting weighted energy, while the entropy Hessian vanishes.

A fixed exponent already gives a strict obstruction. At $p=1/4$,
\[
 \mathsf w_{1/4}(\log4)=224/255,\qquad
 \mathsf w_{1/4}(\log9)=273/820,
\]
and the numerator is $-2/2091$. Therefore
\begin{equation}\label{eq:quarterbound}
 \alpha_{1/4}(\cL)\leq
 -\left[2091\left(\frac{8192\log4}{65025}
                  +\frac{6561\log9}{1344800}\right)\right]^{-1}
 \simeq-0.0025799443.
\end{equation}
Some values of the \emph{upper bound} $\delta(p)$ are recorded in \Cref{tab:endpoint}.
They are not numerically computed optimal logarithmic Sobolev constants.
\begin{table}[htbp]
\centering
\caption{Local upper bounds for the primitive two-mode example.}\label{tab:endpoint}
\begin{tabular}{@{}rr@{}}
\hline
$p$ & $\delta(p)$\\ 
\hline
$0.50$ & $0$\\
$0.49$ & $-2.8401242\times10^{-5}$\\
$0.40$ & $-3.7663422\times10^{-4}$\\
$0.25$ & $-2.5799443\times10^{-3}$\\
$0.10$ & $-1.6666809$\\
$0.05$ & $-4.1412378\times10^5$\\ 
\hline
\end{tabular}
\par\smallskip
{\footnotesize At $p=1/2$, the equality $\alpha_{1/2}(\cL)=0$ is proved in \eqref{eq:alphahalf}.\par}
\end{table}



Since the drift is Hurwitz, the whole system is the relaxing sector in Theorem~\ref{thm:section4-relative-cmlsi}.
The rank-one identity \eqref{eq:endCZ} and $\mathbf F>0$ therefore give
\[
 \alpha_{\mathrm{CMLSI}}(\Phi)
 =\lambda_{\min}\!\left(\mathbf F^{-1/2}\bfC_{\bfZ}\mathbf F^{-1/2}\right)=0.
\]
Together with the preceding estimates, this proves the following theorem.
\begin{theorem}\label{thm:endpoint}
There is a two-mode primitive KMS-symmetric Gaussian quantum Markov semigroup with faithful invariant density \eqref{eq:endrho} for which
\begin{equation}\label{eq:endpoints-main}
 \begin{gathered}
 \alpha_{\mathrm{CMLSI}}(\Phi)=0,\qquad \alpha_{1/2}(\cL)=0,\\
 \alpha_p(\cL)<0\quad(0<p<1/2),\qquad
 \lim_{p\downarrow0}\alpha_p(\cL)=-\infty.
 \end{gathered}
\end{equation}
Its drift is Hurwitz, its KMS-dissipation matrix is strictly positive, and its quantum-diffusion matrix has rank one. 
Thus it has all the positive forward and above-$1/2$ reverse properties in Theorems \ref{thm:forward} and \ref{thm:reverse}, while satisfying Equations \eqref{eq:endpoints-main}.
\end{theorem}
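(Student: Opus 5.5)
The plan is to take the explicit two-mode semigroup generated by \eqref{eq:endjump}--\eqref{eq:endgen} and verify each claim in \eqref{eq:endpoints-main} by assembling the results already established. First, Lemma~\ref{lem:endstructure} gives the structural facts: stationarity of $\rho$ in \eqref{eq:endrho}, faithfulness, Hurwitz drift, primitivity (irreducibility through Proposition~\ref{prop:control} and uniqueness of the invariant state), KMS symmetry via Proposition~\ref{prop:kmssym}, and the rank-one identity $\bfC_{\bfZ}=\mathbf v\mathbf v^*$. Positivity of the KMS-dissipation matrix comes from Proposition~\ref{prop:hurwitzpos}: since $\bfZ_{\KMS}$ is symmetric and Hurwitz, $\gamma_{\KMS}>0$, which is equivalent to $\bfC_{\KMS}>0$.

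For the CMLSI claim, I will note that with Hurwitz drift there is no fixed sector ($l=0$ and $\mathbf P=0$). Theorem~\ref{thm:section4-relative-cmlsi} then gives $\alpha_{\mathrm{CMLSI}}=\lambda_{\min}(\mathbf F^{-1/2}\bfC_{\bfZ}\mathbf F^{-1/2})$. This is zero because $\bfC_{\bfZ}$ has rank one on $\bC^4$ while $\mathbf F>0$. Equivalently, $\ker_{\bC}\bfC_{\bfZ}\neq\{0\}=\ker_{\bC}\bfZ$ by \eqref{eq:section4-positive-matrix-rate}.

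For the $p$-logarithmic Sobolev claims, I will use the local test directions $Y_N=P_NYP_N$, truncating the field $Y$ of \eqref{eq:endY}. Lemma~\ref{lem:localquotient} applied to $X_{N,\varepsilon}=I+\varepsilon Y_N$ bounds $\alpha_p(\cL)$ by the quotient at $Y_N$. Lemma~\ref{lem:cutoff} passes $N\to\infty$, giving $\alpha_p(\cL)\le\delta(p)$ as in \eqref{eq:alphaUp}. The sign analysis then proceeds in three cases:
\begin{enumerate}[(i)]
\item At $p=1/2$, the numerator vanishes by the explicit values of $\mathsf w_{1/2}$, so $\alpha_{1/2}\le 0$. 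Proposition~\ref{prop:half-nonnegative} supplies the reverse inequality, giving \eqref{eq:alphahalf}.
\item For $0<p<1/2$, I rewrite the numerator through \eqref{eq:fieldw} in terms of $f(x)=27/(4^x-1)-64/(9^x-1)$ at $x=1/p>2$. Since $f(2)=1$ and $f$ is strictly decreasing on $[2,\infty)$, the numerator is negative while the denominator is positive, so $\delta(p)<0$.
\item As $p\downarrow 0$, the numerator tends to $-1/1260$ by \eqref{eq:negativeconstant}, and the denominator decays like $e^{-\log 4/p}$. Hence $\delta(p)\to-\infty$, as in \eqref{eq:Uasymp}.
\end{enumerate}

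Finally, since $\gamma_{\KMS}>0$, Theorems~\ref{thm:forward} and \ref{thm:reverse} apply directly, which gives the closing sentence of the statement. The main obstacle is the rigorous passage from the unbounded field $Y$ to the bounded witnesses, that is, Lemma~\ref{lem:cutoff}: the boundary-strip and tail contributions must be controlled uniformly. This relies on polynomial growth of the matrix entries against geometric thermal weights, together with the bound $w_p,r_p\le C_p(s_i+s_j)$ on neighbouring ratios $4$ and $9$. Once that lemma and the monotonicity of $f$ are in hand, the theorem is a collection of the preceding computations.
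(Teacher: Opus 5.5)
Your proposal is correct and follows essentially the same route as the paper. You take the structural facts from Lemma~\ref{lem:endstructure} and Proposition~\ref{prop:hurwitzpos}, and feed the rank-one $\bfC_{\bfZ}$ into Theorem~\ref{thm:section4-relative-cmlsi} with an empty fixed sector. You then get the bound $\alpha_p(\cL)\le\delta(p)$ from Lemmas~\ref{lem:localquotient} and~\ref{lem:cutoff}, and run the three-regime sign analysis of $\delta(p)$, closed at $p=1/2$ by Proposition~\ref{prop:half-nonnegative}. The one step you leave unproved, strict decrease of $f$ on $[2,\infty)$, is handled in the paper by bounding the ratio of the two terms of $f'(x)$ by $2048/2187<1$.
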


\section*{Declaration of generative AI use}

During the preparation of this work, the authors used generative AI
tools to improve the language and readability of the manuscript and
to assist with developing and checking mathematical arguments.
The authors reviewed and revised the AI-assisted content and verified
the mathematical statements and proofs.
The authors take full responsibility for the accuracy and integrity
of the manuscript.

\bibliographystyle{plain}
\bibliography{hypergau}
\end{document}